\definecolor{burgundy}{rgb}{0.5, 0.0, 0.13}
\definecolor{camel}{rgb}{0.76, 0.6, 0.42}
\definecolor{chamoisee}{rgb}{0.63, 0.47, 0.35}
\definecolor{grey1}{RGB}{128,128,128}
\newtheorem{theorem}{Theorem}
\newtheorem{lemma}[theorem]{Lemma}
\newtheorem{remark}{Remark} 
\newtheorem{corollary}[theorem]{Corollary}
\newtheorem{definition}[theorem]{Definition}
\newcommand\Z{{\mathds Z}}
\newcommand\N{{\mathds N}}
\newcommand\R{{\mathds R}}
\newcommand\C{{\mathds C}}
\newcommand\E{{\mathds E}}
\newcommand{\Cov}{\mathds{C}\!\mathrm{ov}}
\newcommand{\Var}{{\mathds{V}\!\mathrm{ar}}}
\renewcommand{\k}{\boldsymbol k}
\newcommand{\UN}{\mathbf{1}}
\renewcommand{\to}{\longrightarrow}
\def\cC{\mathcal{C}}
\def\cS{\mathcal{S}}
\def\cH{\mathcal{H}}
\def\cT{\mathcal{T}}
\def\cF{\mathcal{F}}
\def\cE{\mathcal{E}}
\def\diag{\mathrm{Diag}}
\def \d { {\rm d} }
\newcommand{\I}{{\rm Id}}
\newcommand{\EPalm}{\mathds E_{\mathrm{Palm}}}
\newcommand{\PPalm}{\mathds P_{\mathrm{Palm}}}
\newcommand{\EnPalm}{\mathds E_{\mathrm{nPalm}}}
\newcommand{\PnPalm}{\mathds P_{\mathrm{nPalm}}}
\newcommand{\lev}{{ \mathcal L}}
\definecolor{jma-blue}{HTML}{AA66EE}
\title{Gaussian random field's anisotropy using excursion sets}
\author[1]{Jean-Marc Aza\"{i}s}
\author[2]{Federico Dalmao}
\author[3,4]{Yohann De Castro}
\affil[1]{{\small Institut de Math\'ematiques de Toulouse\\ Universit\'e  de Toulouse, France.}}
\affil[2]{{\small DMEL, CENUR Litoral Norte\\  Universidad de la República, Salto, Uruguay.}}
\affil[3]{{\small Institut Camille Jordan, CNRS UMR 5208\\  École Centrale Lyon, France.}}
\affil[4]{{\small Institut Universitaire de France~(IUF)}}
\providecommand{\keywords}[1]{\textbf{{Keywords}} #1}
\begin{document}
\maketitle

\begin{abstract}
    This paper addresses the problem of detecting and estimating the anisotropy of a stationary real-valued random field from a single realization of one of its excursion sets. This setting is challenging as it relies on observing a binary image without prior knowledge of the field's mean, variance, or the specific threshold value. 

    Our first contribution is to propose a generalization of Cabaña’s contour method to arbitrary dimensions by analyzing the Palm distribution of normal vectors along the excursion set boundaries. We demonstrate that the anisotropy parameters can be recovered by solving a smooth and strongly convex optimization problem involving the eigenvalues of the empirical covariance matrix of these normal vectors.

    Our second main contribution is a new, model-agnostic statistical test for isotropy in dimension two. We introduce a statistic based on the contour method which is asymptotically distributed as a chi-squared variable with two degrees of freedom under the null hypothesis of quasi-isotropy. Unlike existing methods based on Lipschitz-Killing curvatures, this procedure does not require knowledge of the random field's covariance structure. 
    
    Extensive numerical experiments show that our test is well-calibrated and more powerful than model-based alternatives as well as that the estimation of the anisotropy parameters, including the directions, is robust and efficient. Finally, we apply this framework to test the quasi-isotropy of the Cosmic Microwave Background (CMB) using the Planck data release $3$ mission. 
\end{abstract}

\keywords{Random field geometry; Anisotropy; Contour method; Lipschitz-Killing curvatures, Palm measure; Cosmic Microwave Background.}

\section{Introduction}

\subsection{Testing anisotropy and estimating its parameters}
Random fields are a standard modelling tool in spatial data analysis and geostatistics with applications in astronomy, hydrogeology, meteorology, oceanography, geochemistry, environmental control, landscape ecology, and agriculture, see for instance \citep{cressie1991statistics,weller2016review,marinucci}. In standard studies on these applications, the random field is often assumed to be stationary and isotropic, the latter meaning that its statistical properties are invariant under translations and rotations. However, in many real-world scenarios, the random field may exhibit anisotropy and/or the hypothesis of isotropy should be tested, with \emph{directions of anisotropy}\footnote{We use the term \emph{directions of anisotropy} to describe the eigenvectors of the covariance of the gradient of the random field. As we will see later, this covariance is used to define anisotropy and its eigenvalues are the \emph{anisotropy parameters}.} unknown to the practitioners. In some scenarios, anisotropy can arise from factors such as geological structures, climatic conditions, or measurement errors. In other scenarios, especially in astronomy, physics models predict that some observed fields should be isotropic and it is of interest to test this hypothesis under anisotropic~alternatives. 

Observations can be costly and/or censored by design so that only a single binary (black and white) image is available, such as in remote sensing, medical imaging, and environmental monitoring. In real applications, the value of the threshold is unknown along with the mean and variance of the random field, which makes the problem more challenging. Hence we consider the question: \emph{Can we estimate the anisotropy of a stationary real valued random field given a single realization of a excursion set associated to some threshold without prior knowledge of its mean, variance and threshold value?}

\newcommand{\widthpicture}{0.24}

\begin{figure}[t]
    \centering
    \includegraphics[width=\widthpicture\linewidth]{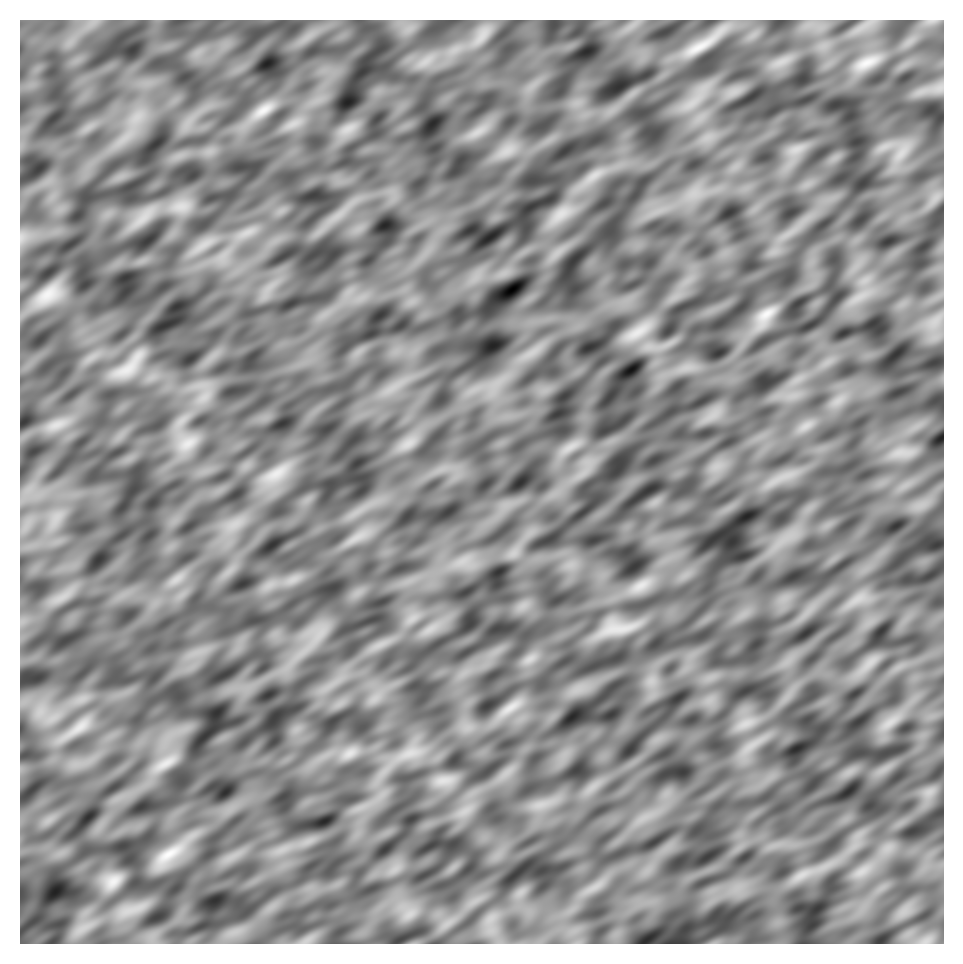}
    \includegraphics[width=\widthpicture\linewidth]{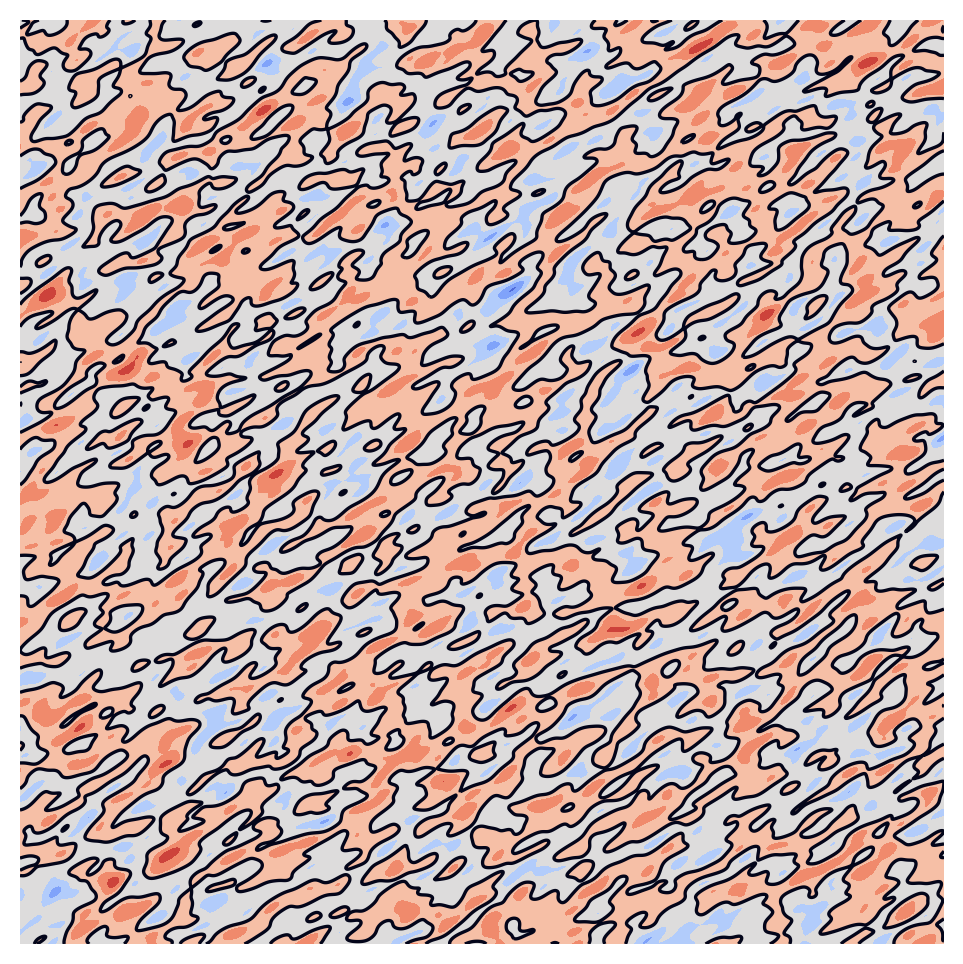}
    \includegraphics[width=\widthpicture\linewidth]{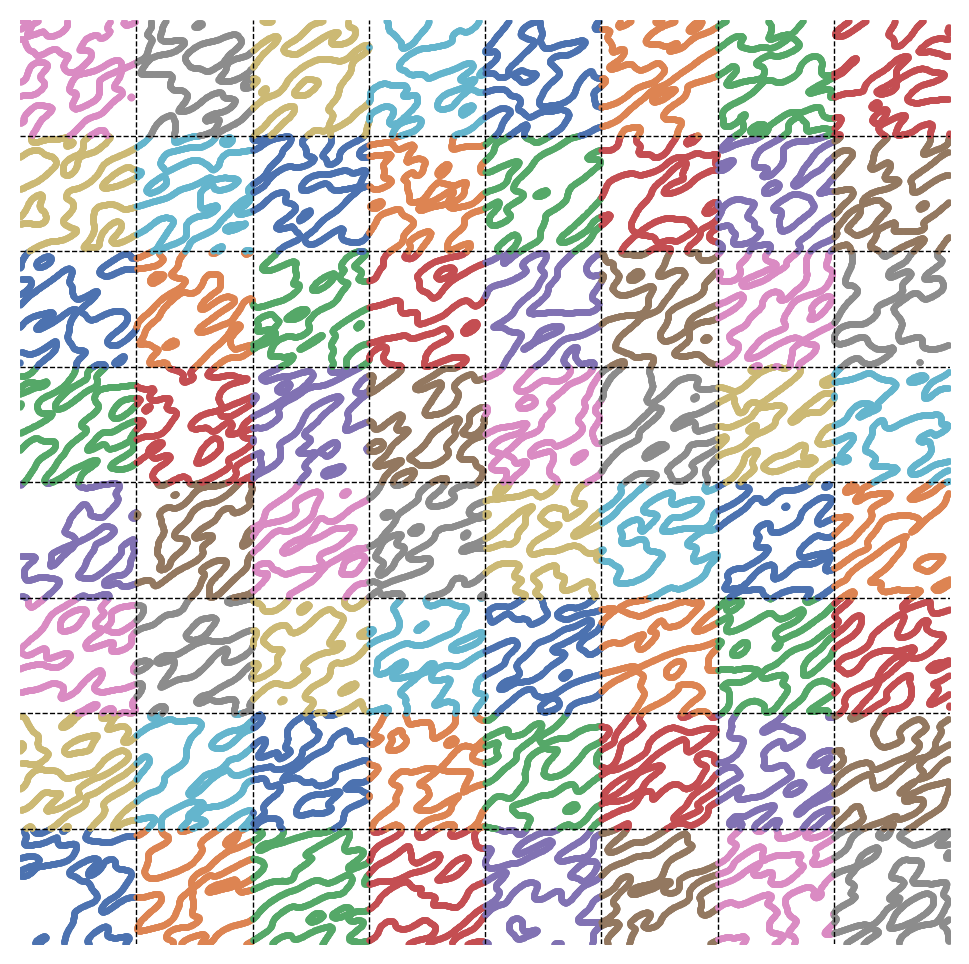}
    \includegraphics[width=\widthpicture\linewidth]{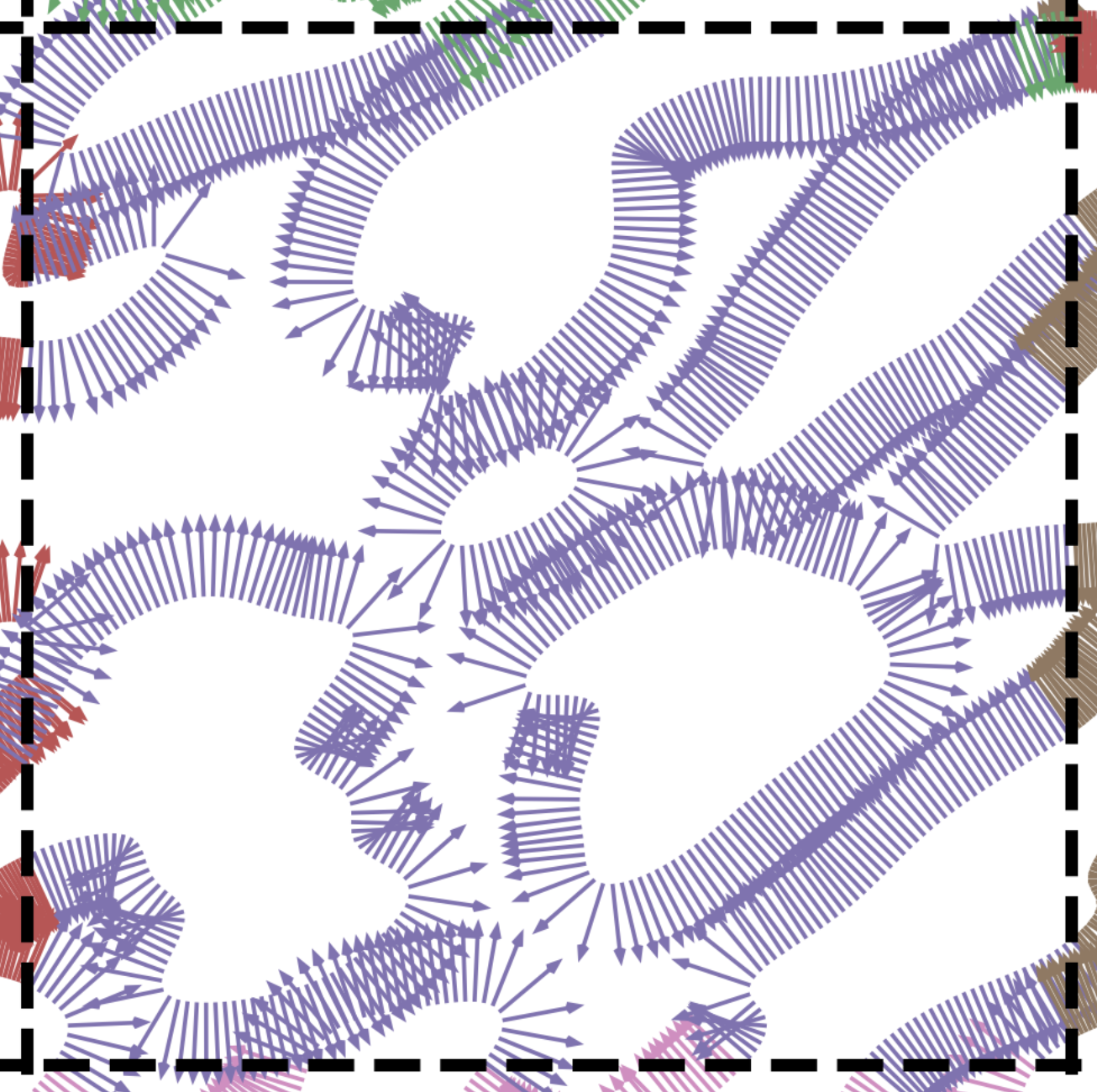}
            \caption{\underline{Left:} $1,000\times 1,000$ image of an anisotropic Gaussian random field ($\kappa=0.9$, $\theta_0=1$), generated with {\tt gstools} \citep{muller2017}. 
            \underline{Middle left:} estimated level set $(X(t)=0)$ by polygons with vertices outside the ($1,000\times 1,000$) grid and optimized using the random field values on the grid only (we used the simple and standard {\tt contour} function of {\tt matplotlib}). \underline{Middle right:} a $8\times 8$ partition of contours on which we have distributed $10^7$ equispaced points with respect to the curvilinear abscissa. 
            \underline{Right:} we consider the $10^7$ normals to the contour points, zoom on the cell $[2,7]$ with its normals.}
    \label{fig:cov_examplebis}
\end{figure}

In order to estimate the anisotropy of the underlying random field, a first possibility is to resort to topological statistics such as the \emph{Euler characteristic} and the \emph{Lipschitz-Killing curvatures} of the excursion set as done in the literature, see \citep{bierme} and references therein. 
This method will be referred to as LKC-method. 

\noindent
As a second possibility, one can also remark  that giving the excursion set is equivalent to giving the level~set, which is its boundary\footnote{At first sight, it may seem that we lose the sign information by considering the level set. But in fact this doesn't matter because the distribution of $X(\cdot)$ and $-X(\cdot)$ coincide. \label{foot:symmetry_X}}. On the level set we can observe the direction of the normal (or the gradient) and study its distribution  
as in the  pioneering works \citep{cabana1987affine} and \citep{wschebor}. This method will be called the contour method. 

The contour method was originally done in dimension two by \citet{cabana1987affine}. Caba\~na's approach, though not yet widely adopted, introduced the key idea of analysing the angular distribution of the normals to level curves. This is accomplished by computing integrals of trigonometric functions of \emph{twice} the normal vector angle, a technique we generalize to any dimensions and build upon. Roughly speaking, the cosine (or sine) of \emph{twice} the normal vector angle is related to the squares and products of cosine and sine and hence to second order statistics such as the covariance of the normal vector angle. This latter can be explicitly described using the \emph{Palm distribution} of the normal vector. Interestingly, this distribution depends only on the anisotropy parameters and directions. As a by-product, \citet{cabana1987affine} and \citet{wschebor} were able to \emph{estimate the anisotropy angle} which is not reachable by the methods using LKC since their distributions are invariant by rotations.
At the time the method was introduced, studying its numerical  properties was out of reach of computational tools. This is perhaps why this method was a little forgotten. 

Our first generalization enables the estimation of anisotropy directions and parameters in any dimension and is, by design, agnostic to the threshold value, mean, and variance of the random field. We show that the directions of anisotropy are given by the ones of the covariance matrix of the normal vector along any level set. We derive the density (with respect to the Lebesgue measure) of the normal vector along any level set, and we refer to it as the Palm density of the normal. More deeply, we prove that the anisotropy parameters are a smooth and strongly concave function of the eigenvalues of this covariance matrix, which can be efficiently inverted by plain gradient descent (GD) with a constant step. This GD enjoys global linear convergence with an explicit rate. 

\pagebreak[3]

Our second generalization revisits the contour method in dimension two to derive testing procedures. Using some Hermite expansion, \cite{berzin} proved that a class of integral functionals, including the Cabaña statistic, are asymptotically\footnote{by asymptotically, we mean that the domain grows to infinity, this notion will be precisely given later in the paper.} normal with a variance which depends on the law of $X(\cdot)$ in a non-tractable manner. We propose to estimate the variance  and introduce a new statistic for the contour method referred to as  \emph{$\chi^2(2)$-Contour}. We show that the limiting law of this new statistic is a $\chi^2$ with two degrees of freedom. The $\chi^2(2)\textrm{-Contour}$ is based on a partition of the domain into cells and we use the normal vectors of each cell to estimate the asymptotic variance, see Figure~\ref{fig:cov_examplebis} for an illustration of the partition and the normal vectors to the contours. This result allows us to build a new test which is \emph{agnostic} to the random field law. A contrario, the standard contour method and the LKC method are model-based (MB) and deriving a testing procedure requires sampling under the model, say Gaussian with a \emph{known} covariance function for instance. We denote these tests by $\textrm{MB-Contour}$ and $\textrm{MB-LKC}$. Model-based tests (MB) have limited applications since they require to know the model under the null beforehand, while the $\chi^2(2)\textrm{-Contour}$ can be used without prior knowledge.

Our results are presented in a Gaussian framework but they can be directly generalised to the case of a \emph{transformed Gaussian random field}: $Y(t) = f (X(t))$ where $X(\cdot)$ is a Gaussian field and $f(\cdot)$ is some strictly monotonic $\mathcal C^2$-function. To summarize, our main contributions are:
\begin{itemize}
\item \emph{A Palm-distribution viewpoint of the contour method:} we derive the Palm law of the (normalized) gradient along level sets and show it depends only on anisotropy, not on the unknown level, mean, variance, or observation window, thereby explaining and extending Caba\~na’s statistics, see Theorem~\ref{thm:Palm_normal}.
\item \emph{A dimension-free contour methodology:} in any dimension, the covariance of Palm-normalized normals has eigenvectors equal to the principal directions and eigenvalues that are smooth functions of the anisotropy parameters (proven to be the gradient of a strictly convex function, hence invertible). Directions are estimated by eigen-decomposition and parameters are recovered by solving a smooth and strictly convex program with guaranteed linear convergence, see Theorem~\ref{thm:inverse_palm}.
\item \emph{Asymptotic theory:} Under mild Arcones-type regularity conditions, we establish multivariate central limit theorems for contour integrals and, via the Delta method, asymptotic normality of the resulting estimators in any dimension.
\item \emph{A model-agnostic isotropy test in 2D:} we construct a chi-squared statistic from cell-wise contour fluctuations with a consistent variance estimator. Under quasi-isotropy the limit law is chi-squared with two degrees of freedom without knowing the field’s covariance.
\item \emph{Numerical study:} extensive experiments compare our approach with LKC-based procedures \citep{bierme} and model-based contour tests; our test is well calibrated and often more powerful, and estimators remain accurate without knowing the level, mean, or variance.
\item \emph{Application to Planck CMB data:} applying our test to the Planck DR3 CMB temperature map \citep{ESA:CMBMaps:2018} rejects quasi-isotropy and recovers the principal direction.
\end{itemize}
\noindent
The key aspects of our methodology are its agnosticism to the field’s mean, variance, and threshold, its applicability in any dimension, and its solid theoretical foundation via Palm distributions and asymptotic normality.

\subsection{Outline}

The paper is organized as follows. Section~\ref{s:framework} introduces the observation model, anisotropy parameters and nuisance quantities, and fixes notation. Section~\ref{s:cont} reviews contour-based methods (Caba\~na and variants), and Section~\ref{s:lkc} recalls LKC-based approaches and their links to anisotropy. Section~\ref{s:main} gathers our main contributions. Section~\ref{sec:palm} develops a Palm-distribution viewpoint of normals along level sets and recovers/extends Caba\~na’s statistics. Section~\ref{s:cont:d} generalizes the contour methodology to any dimension and formulates the convex inversion from Palm normalized-gradient eigenvalues to model eigenvalues. Section~\ref{s:coco} establishes multivariate CLTs for contour integrals and derives the asymptotic normality of the ensuing estimators. Section~\ref{s:chi2} introduces the model-agnostic chi-squared isotropy test in 2D. Section~\ref{s:num} reports numerical experiments: implementation details (level-set extraction and normals), angle recovery, comparison of Contour vs. LKC and against a full-observation oracle, power studies, and an application to Planck CMB data, followed by a short discussion. Appendices provide complementary material: affine-process alternatives (Appendix~\ref{s:affine}), Bulinskaya-type lemmas (Appendix~\ref{s:buli}), elliptic-integral calculations and the explicit form/monotonicity of $g(\kappa)$ (Appendix~\ref{s:palm:calcul}), a simple combination of Contour and LKC estimators (Appendix~\ref{s:combi}), and a table of notation (Appendix~\ref{sec:table_of_notation}).

\subsection{Notations, assumption and framework} \label{s:framework}

To unify presentations of the different papers \citep{cabana1987affine,wschebor,bierme,berzin}, we must choose a common framework and a common notation system. We denote by $X(\cdot) := \{X(t)\,,\ t\in \R^d\}$ a Gaussian stationary random field with values in $\R$ and covariance function $r\,:\,\R^d\to\R$. We set $\mu:= \E (X(0)) \in\R$ and $ \sigma:= \Var (X(0)) \geq0$. In the sections dealing with computation of expectations, we use a parameter domain $\cT$  which is either a bounded open subset of $\R^d$ or the whole $\R^d$. In the sections dealing with asymptotic distributions we use the following family of growing domains: $\cT_n:=(-n,n)^d\,, \ n\in\N$. The notation $\cH^k$ denotes the Hausdorff measure of dimension $k$, $|\cT|:= \cH^d (\cT)$, $|\lev(\cT)|:= \cH^{d-1}(\lev(\cT))$. The notation $\phi(\cdot)$ denotes the standard Gaussian density function, $\Phi(\cdot)$ its cumulative distribution function. A table of notation is given in the Appendix \ref{sec:table_of_notation}.

\noindent
Our results hold under the following standard assumption: 
\begin{equation}    \label{eq:assumption}
    \tag{$\mathds{A}_X$}
    X(\cdot)\text{ is a stationary Gaussian real-valued random field on } \R^d \text{ with }\mathcal C^2 \text{-paths,}\atop \text{satisfying } \sigma^2 := \Var( X(0)) >0 \text{ and }\Lambda := \Var (X'(0)) \text{ does not degenerate.}
\end{equation}

\paragraph{Excursion and level sets.}
We are given a unique realization of an excursion set of the random field~$X(\cdot) $. In other words, for some \emph{unknown} $u\in\R$, we observe the \emph{excursion set}
\begin{subequations}
\begin{equation} \label{e:exc}
 \cE (\cT) := \big\{t \in \cT : X(t) >u\big\}\,.
\end{equation}
Note that this corresponds to observe a black and white image. Apparently, this is different from observing the \emph{level set}
\begin{equation} \label{e:lev}
     \lev(\cT) := \big\{ t\in \cT : X(t) =u\big\}\,.
\end{equation}
\end{subequations}
But, as we have already seen (Footnote~\footref{foot:symmetry_X}), using the fact that $X(\cdot)$ and  $-X(\cdot)$ have the same distribution, we see that the information is exactly the same.

\paragraph{Anisotropy's directions and parameters.}
We denote by $\Lambda $ the variance-covariance matrix of the derivative $X'(0)$.  By diagonalization we get:
\begin{equation} \label{e:var:d}
    \Lambda  := \Var (X'(0))= {\bm P}^\top \ \diag(\kappa^2_1,\ldots, \kappa^2_d) \  {\bm P}\,,
\end{equation}
where ${\bm P}$ is a $d\times d$ unitary matrix that depends on $(d-1)!$ parameters and $\diag$ is a diagonal matrix. In addition we use the convention $\kappa_1^2 \geq \cdots \geq \kappa_d^2 $ and under~\eqref{eq:assumption}, one has  $\kappa_d^2 >0$.

Without loss of generality, we normalize the eigenvalues up to a common scale. Indeed, anisotropy is identifiable only up to a global multiplicative factor: replacing $X(\cdot)$ by $cX(\cdot)$ multiplies $X'(t)$ and $\Lambda$ by~$c$ and $c^2$, respectively, while, after rescaling the (unknown) level $u$ to $cu$, the level and excursion sets are unchanged. We therefore enforce the following normalization:  
\[
\vv{\kappa}:=(\kappa_1,\ldots,\kappa_d)\in\Delta_+:=\Big\{
\vv{\kappa}\in\mathds R^d
\ :\ 
\sum_{i=1}^d\kappa_i^2 =1\,,\ \kappa_i>0\,,\ i=1,\ldots,d
\Big\}
\]
The rows of ${\bm P}$ are called the \emph{directions} of anisotropy and the (square root of the) eigenvalues $\vv{\kappa}$ are called the \emph{parameters of anisotropy}. As long as we restrict our analysis to the computation of expectations at one point $t\in\lev(\cT)$, the parameters of the model are: the expectation $ \mu$, the variance $\sigma^2$ of $X(t)$, $\vv{\kappa}$ and~${\bm P}$. Note that since the observation is given, it is not possible to use a scaling in order to assume, for example, that $\mu=0$ or $\sigma^2=1$. We refer to $(u,\mu,\sigma)$ as \emph{nuisance} parameters since they are unknown in practice.

\paragraph{Quasi-isotropy.} Our aim is to test \emph{quasi-isotropy} as well as estimating some of the parameters of  \eqref{e:var:d}.
The quasi-isotropy is defined by 
\begin{equation} \label{e:h0}
        \Lambda  = c  \  \I_d \mbox{ for some constant }c\,,
       \mbox{ or equivalently  }\kappa^2_1 = \kappa^2_d\,,
\end{equation}
where $\Lambda$ is defined in \eqref{e:var:d} and $\I_d$ stands for the identity matrix of size $d$. 
This assumption is weaker than  the strict isotropy  which demands the distribution of the whole random field to be invariant by isometries while \eqref{e:h0} demands only the distribution of $X'(0)$ to be so. Nevertheless, the computation of the expectations of the quantities we will consider here are equal under both hypotheses. In summary,~\eqref{e:var:d} defines the alternative of our test procedure while the null hypothesis of quasi-isotropy  is defined by~\eqref{e:h0}.

\medskip

\begin{remark}[Quasi-isotropy is strictly larger than isotropy]
    A class of random fields which are quasi-isotropic and not strictly isotropic is the following: Let $\rho(\cdot)$ be a valid covariance of a random process with $\mathcal C^2$-paths, defined 
on $\R$, then 
\[
r(t_1,\ldots,t_d)  = \rho(t_1)\times \cdots \times  \rho(t_d) 
\]
defines a quasi-isotropic covariance on $\R^d$. Excepting the case $\rho(t) = \exp(-at^2)$, it is not isotropic.
\end{remark}

\begin{remark}[Regularity of the field]
In this paper we assume that $X(\cdot)$ is $\mathcal C^2$. This  condition is really  needed  only for LKC-method and the asymptotic results of  Section \ref{s:coco}.
 In other parts  the assumption of~$X(\cdot)$ being  $\mathcal C^1$ is sufficient. We have omitted this point for simplicity.
\end{remark}

\paragraph{Special case $d=2$} In this particular case, \eqref{e:var:d} takes the form:
\begin{subequations}
\begin{equation} \label{e:var}
    \Lambda = {\bm P}_{-\theta_0}\left(\begin{array}{cc}\kappa^2_1& 0 \\0 & \kappa^2_2\end{array}\right) {\bm P}_{\theta_0},
\end{equation}
where 
 \[
  {\bm P}_{\theta_0} = \left(\begin{array}{rr}\cos(\theta_0)& \sin(\theta_0) \\-\sin(\theta_0) & \cos(\theta_0)\end{array}\right), 
 \]
is the rotation with angle $\theta_0$. Additionally the anisotropy can be measured by a single parameter 
\begin{equation}   \label{e:kappa}
  \kappa = \sqrt{ 1-\frac{\kappa_2^2}{\kappa_1^2}}\in[0,1]\,,
\end{equation}
\end{subequations}
with a slight abuse of notation. 

\paragraph{Assumptions and transformed Gaussian random field} Our results are presented in a Gaussian framework but they can be directly generalised to the case of a \emph{transformed Gaussian random field}: $Y(t) = f (X(t))$ where $X(\cdot)$ is a Gaussian field and $f(\cdot)$ is some strictly monotonic $\mathcal C^2$-function. Indeed, the excursion set of $Y(\cdot)$ above level $u$ coincides with the excursion set of $X(\cdot)$ above level $f^{-1}(u)$. As for the variance-covariance matrix of the gradient of $Y(\cdot)$, it is equal to $c_f \Lambda$ where $\Lambda$ is defined in~\eqref{e:var:d} and $c_f=\Var({f'(X(0))})>0$, by independence between $X(0)$ and $X'(0)$ ($X(\cdot)$ is stationary). Since $f'(\cdot)$ is non zero, the anisotropy parameters and directions of $Y(\cdot)$ coincide with those of $X(\cdot)$. For sake of clarity, we have chosen to present our results in the Gaussian framework only, which corresponds to the case $f(x)=x$ assumed here without loss of generality. 

\section{Existing methods}
As depicted in Figure~\ref{fib:flow_chart_Contour_LKC} we present two estimation strategies, one based on orientation of the normal vectors (Contour method) and one based on topology (LKC method). All the notation of Figure~\ref{fib:flow_chart_Contour_LKC} will be introduced along the next sub-sections, this diagram is meant to describe the ‘‘pipelines'' involved in estimations.

 \begin{figure}[h]
\centering
\resizebox{\textwidth}{!} { % Uncomment this line to force-fit to any page width
\begin{tikzpicture}[
    node distance=0.6cm and 0.8cm,
    % Compact styles
    base/.style={rectangle, rounded corners, draw=black, drop shadow,
                 text centered, font=\small\sffamily, inner sep=4pt},
    startstop/.style={base, fill=gray!20, minimum height=0.8cm},
    process/.style={base, fill=white, text width=2cm, minimum height=1.1cm},
    decision/.style={base, fill=blue!5, text width=1.8cm, minimum height=1cm},
    arrow/.style={thick, -Latex, color=black!80},
    % Color coding
    contour/.style={fill=blue!10, draw=blue!50!black},
    lkc/.style={fill=red!10, draw=red!50!black}
]

    % --- 1. Common Origin ---
    \node (input) [startstop, text width=2cm] {Input $X(t)$};
    \node (excursion) [process, right=of input, fill=yellow!10, text width=2.5cm] {Exc. Set $\mathcal{E}(\mathcal{T})$\\$\{X(t) \geq u\}$};

    % --- 2. Pipeline Split ---
    
    % --- Top Branch: Contour Method ---
    % Boundary extraction is the specific geometric step for Contour
    %\node (boundary) [process, contour, above right=-0.2cm and 1.2cm of excursion] {Extract Boundary\\$\mathcal{L} = \partial \mathcal{E}$};
    \node (normals) [process, contour, above right=-0.2cm and 1cm of excursion] {Normals\\$N(t)$};
    \node (trig) [process, contour, right=of normals] {Trig. Integrals\\$\mathcal{C}, \mathcal{S}$};
    \node (invG) [decision, contour, right=of trig] {Invert\\$g(\kappa)$};
    \node (outC) [startstop, contour, right=of invG, text width=1.5cm] {$\widehat{\kappa}_C, \widehat{\theta}_0$};

    % --- Bottom Branch: LKC Method ---
    % Direct topology calculation on Excursion set
    %\node (curv) [process, lkc, below right=-0.2cm and 1.2cm of excursion] {Curvature\\Density};
    \node (euler) [process, lkc, below right=-0.2cm and 1cm of excursion] {Euler Char.\\$GC$};
    % Note: skipped 'Trig' step to align visually with logic
    \node (invR) [decision, lkc, right=of euler, xshift=2.9cm] {Invert\\$R(\kappa)$}; 
    % xshift aligns it with the InvG node above for symmetry
    \node (outLKC) [startstop, lkc, right=of invR, text width=1.5cm] {$\widehat{\kappa}_{\rm LKC}$};

    % --- Arrows ---
    \draw[arrow] (input) -- (excursion);
    
    % Split arrows
    \draw[arrow] (excursion.east) -- ++(0.5,0) |- node[near start, above, font=\bfseries] {} (normals.west);
    \draw[arrow] (excursion.east) -- ++(0.5,0) |- node[near start, below, font=\bfseries] {} (euler.west);

    % Contour flow
    %\draw[arrow] (boundary) -- (normals);
    \draw[arrow] (normals) -- (trig);
    \draw[arrow] (trig) -- (invG);
    \draw[arrow] (invG) -- (outC);

    % LKC flow
    %\draw[arrow] (curv) -- (euler);
    \draw[arrow] (euler) -- (invR);
    \draw[arrow] (invR) -- (outLKC);

    % --- Grouping/Labels (Optional background) ---
    % Optional: Visual bracket to show they are parallel
    \draw[dashed, gray!50] ($(normals.north west)+(-0.2,0.2)$) rectangle ($(outC.south east)+(0.2,-0.4)$);
    \node[anchor=south east, font=\itshape, color=blue!60!black] at ($(normals.north west)+(4.5,0.1)$) {{\bf Contour} $($Orientation-based$)$};

    \draw[dashed, gray!50] ($(euler.north west)+(-0.2,0.2)$) rectangle ($(outLKC.south east)+(0.2,-0.4)$);
    \node[anchor=south east, font=\itshape, color=red!60!black] at ($(euler.north west)+(3.5,-1.9)$) {{\bf LKC} $($Topology-based$)$};

\end{tikzpicture}
 } % End resizebox
\caption{Comparison of estimation pipelines. Both methods rely on the excursion set, but diverge in geometric summary statistics: the Contour method uses boundary normals (orientation), while LKC uses curvature and Euler characteristics (topology).}
\label{fib:flow_chart_Contour_LKC}
\end{figure}
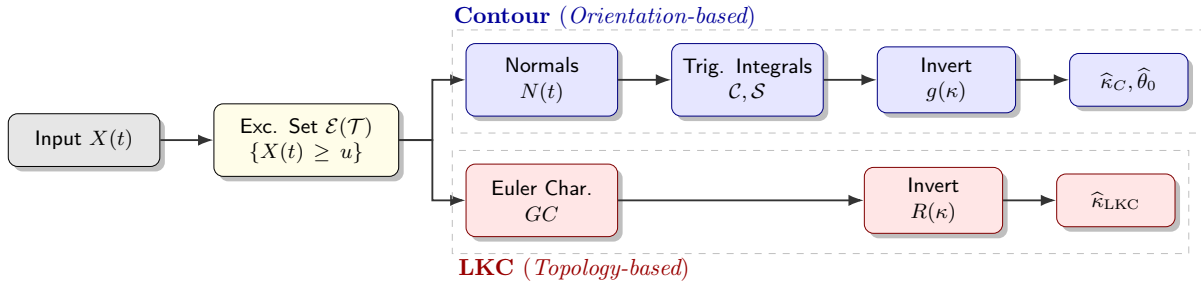
 
\subsection{Contour methods} 
\label{s:cont}
Originally, \citet{cabana1987affine} and \citet{wschebor} used non-Gaussian models. This difference is detailed in Section \ref{s:affine}. The method of Caba\~na is based on the idea that, in case of anisotropy, the distribution of the direction of the gradient along the level curve is far from being uniform on the sphere $\mathds{S}^{d-1}$. The paper considers computations of three integrals along the level curve: 
\begin{align*}
    |\lev(\cT)|&:= \cH^1( \lev(\cT) ) &\text{the length of the level curve,}\\
    \cC(\cT)&:= \int_{\lev(\cT)}  \cos\big( 2\Theta (t) \big)  \d \cH^1(t) &\text{the cosine integral,}\\
    \cS(\cT)&:= \int_{\lev(\cT)}  \sin\big( 2\Theta (t) \big)
 \d \cH^1(t)&\text{the sine integral,}
\end{align*}
where $\Theta(t) $ is the angle of the gradient  at point $t$. Note that Lemma \ref{l:buli1} shows that a.s. $\Theta(t)$ is well defined almost everywhere on the level curve. Moreover, it is observable from $\cE(\cT)$ or $\lev(\cT)$. The paper \citep{cabana1987affine} provides no precise explanation for this choice. We give an interpretation in Section \ref{sec:palm}.
 
\paragraph{Nuisance-free normalized cosine and sine.} By the use of three Kac-Rice formulas, Theorems 2.2 and 6.1 of \citep{armentano2025general} without any further assumptions than~\eqref{eq:assumption}:
\begin{subequations}
    \begin{align} 
    \label{e:peri}
    \E\big( |\lev(\cT)| \big) &= |\cT|\, \E(\|X'(0) \|)\  \frac1\sigma\phi\Big(\frac{u-\mu}\sigma\Big)\,, \\
    \E(\cC(\cT)) &= |\cT|\, \E\big(\|X'(0) \| \cos(2\Theta(0))  \frac1\sigma\phi\Big(\frac{u-\mu}\sigma\Big)\,, \\
 \E(\cS(\cT))& = |\cT|\, \E\big(\|X'(0) \| \sin(2\Theta(0))\frac1\sigma\phi\Big(\frac{u-\mu}\sigma\Big)\,.
    \end{align}
\end{subequations}
We see clearly that computing the quotients $\frac{ \E(\cC(\cT))}{ \E( |\lev(\cT)|) }$ and $\frac{ \E(\cS(\cT))}{ \E( |\lev(\cT)|) }$ permits to get rid of the nuisance parameters $u$, $\mu$, $\sigma$, and also of $|\cT|$. A calculation that will be detailed in Section \ref{s:palm:calcul} gives
\begin{subequations}
\begin{equation}   \label{e:cabana}
  \frac{\E(\cC(\cT)) }{\E(|\lev(\cT)|)} = \cos(2\theta_0)  g(\kappa), \quad   \frac{\E(\cS(\cT)) }{ \E(|\lev(\cT)|)} = \sin(2\theta_0)  g(\kappa),
\end{equation}
where  $\kappa$ has been defined in \eqref{e:kappa} and $g(\cdot)$ is defined by 
 \begin{equation} \label{def:G}
    g(\kappa)
    := \frac
                {\int_{-\pi}^\pi (2\cos^2\!\theta-1)\,\big(1-\kappa^2\cos^2\!\theta\big)^{-\frac{3}{2}}\mathrm d\theta}
                {\int_{-\pi}^\pi \big(1-\kappa^2\cos^2\!\theta\big)^{-\frac{3}{2}}\mathrm d\theta}.\\
\end{equation} 
\end{subequations}
The ratios \ref{e:cabana} are referred to as the \emph{normalized cosine and sine}, they are nuisance-free as they do not depend on the nuisance parameters $(u,\mu,\sigma)$. The paper \citep{cabana1987affine} gives an expression of $g(\cdot)$ that contains some typos and the correct expression of $g(\cdot)$ is given in \ref{def:G}. In Section \ref{s:palm:calcul} we give a derivation of this formula and the following lemma is proven there. A plot of $g(\cdot)$ is given in Figure \ref{fig:inverse_MB}. 

\begin{lemma} \label{l:g}
The function $g(\cdot)$ is invertible.
\end{lemma}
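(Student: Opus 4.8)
The strategy is to show that $g$ is strictly monotone on $[0,1)$, hence injective, by differentiating with respect to $\kappa$ and checking the sign of $g'(\kappa)$. Write $g(\kappa) = N(\kappa)/D(\kappa)$ with
\[
D(\kappa) = \int_{-\pi}^{\pi} \big(1-\kappa^2\cos^2\theta\big)^{-3/2}\,\mathrm d\theta, \qquad
N(\kappa) = \int_{-\pi}^{\pi} (2\cos^2\theta-1)\,\big(1-\kappa^2\cos^2\theta\big)^{-3/2}\,\mathrm d\theta.
\]
Since $2\cos^2\theta-1 = 1 - 2\sin^2\theta$, we can also write $N = D - 2I$ where $I(\kappa) = \int_{-\pi}^{\pi}\sin^2\theta\,(1-\kappa^2\cos^2\theta)^{-3/2}\,\mathrm d\theta$, so $g(\kappa) = 1 - 2\,I(\kappa)/D(\kappa)$; it then suffices to prove that $\kappa \mapsto I(\kappa)/D(\kappa)$ is strictly monotone. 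At $\kappa=0$ one gets $I(0)/D(0) = 1/2$ so $g(0)=0$, and as $\kappa\to 1$ the integrand of $D$ blows up like $(1-\kappa^2\cos^2\theta)^{-3/2}$ near $\theta=0,\pi$ while that of $I$ carries an extra $\sin^2\theta$ factor killing the singularity, so $I/D \to 0$ and $g(\kappa)\to 1$; this already identifies the range as $[0,1)$ and strongly suggests strict monotonicity.

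The cleanest route to the sign is a covariance (Chebyshev-type) argument. Differentiating, $\frac{\mathrm d}{\mathrm d\kappa}\log g$ — or more directly differentiating $I/D$ — produces, after clearing denominators, an expression of the form $D\, I' - I\, D'$, and using $\partial_\kappa (1-\kappa^2\cos^2\theta)^{-3/2} = 3\kappa\cos^2\theta\,(1-\kappa^2\cos^2\theta)^{-5/2}$ this becomes proportional to
\[
\int\!\!\int \big(\cos^2\theta - \cos^2\theta'\big)\big(\sin^2\theta' - \sin^2\theta\big)\, w(\theta)\,w(\theta')\,(1-\kappa^2\cos^2\theta')^{-1}\,\mathrm d\theta\,\mathrm d\theta'
\]
for suitable positive weights, i.e. a symmetrized double integral whose integrand has a definite sign because $\cos^2\theta - \cos^2\theta'$ and $\sin^2\theta'-\sin^2\theta = \cos^2\theta-\cos^2\theta'$ have the \emph{same} sign. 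Hence $D I' - I D'$ has a fixed sign for $\kappa\in(0,1)$, so $I/D$ is strictly monotone and $g$ is strictly monotone, therefore invertible on $[0,1)$. (Equivalently, one recognizes $I/D$ as $\tfrac12\,\mathbb E[\sin^2\Theta]$ under a probability measure on the circle that puts relatively more mass near $\theta=0$ as $\kappa$ increases, and monotone reweighting toward small $|\theta|$ decreases $\mathbb E[\sin^2\Theta]$.)

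Alternatively — and this is likely the route the paper takes in Appendix~\ref{s:palm:calcul} — one reduces $N$ and $D$ to complete elliptic integrals $K(\kappa)$ and $E(\kappa)$ via standard substitutions, obtaining an explicit closed form $g(\kappa) = \big(2E(\kappa) - (2-\kappa^2)K(\kappa)\big)\big/\big(\kappa^2 K(\kappa)\big)$ or similar, and then invokes known monotonicity/convexity properties of ratios of elliptic integrals. The main obstacle in either approach is controlling the sign rigorously: in the integral-identity route one must be careful with the symmetrization and with integrability at the endpoints as $\kappa\uparrow 1$; in the elliptic-integral route one must correctly carry out the reduction (the original source reportedly contains typos here) and then cite or re-derive the relevant inequality for $E/K$. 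Given that the paper explicitly promises "the explicit form/monotonicity of $g(\kappa)$" in Appendix~\ref{s:palm:calcul}, I would lead with the elliptic-integral computation to get the closed form, and use the covariance argument above as the conceptual explanation of why monotonicity must hold.
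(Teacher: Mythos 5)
Your proposal is correct and follows essentially the same route as the paper's Appendix~\ref{s:palm:calcul}: there, $g=2\,\mathds C(\kappa)-1$ and the numerator $U'V-UV'$ of the derivative of the ratio is written as a double integral that is non-positive by Cauchy--Schwarz, which is exactly your Chebyshev/covariance argument after symmetrization (the paper differentiates in a reparametrized variable $a$ rather than in $\kappa$, a cosmetic difference). Your displayed weight $w(\theta)w(\theta')(1-\kappa^2\cos^2\theta')^{-1}$ is not the symmetric one that actually comes out (it should be $(1-\kappa^2\cos^2\theta)^{-5/2}(1-\kappa^2\cos^2\theta')^{-5/2}$), but since the product $(\cos^2\theta-\cos^2\theta')(\sin^2\theta'-\sin^2\theta)=(\cos^2\theta-\cos^2\theta')^2$ is pointwise non-negative the conclusion stands, and your version even yields the strict monotonicity needed for invertibility, which the paper leaves at ``non-positive.''
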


Moreover, the normalized cosine and sine integrals given in \eqref{e:cabana} can be interpreted as the coordinates of a vector in $\R^2$ whose angle is $2\theta_0$ and whose norm is $g(\kappa)$. Their empirical counterparts are studied and represented in Figure \ref{fig:Cabana_MB}.
\paragraph{Anisotropy angle and parameter.}
\begin{subequations}
 Using \eqref{e:cabana}, the anisotropy angle $\theta _0$ is estimated by 
 \begin{equation} \label{e:theta:hat}
     \widehat{\theta}_0 = \frac12 \arctan \Big( \frac{\cS(\cT)}{\cC(\cT)} \Big),
 \end{equation} 
 and since 
 \begin{equation}%\label{e:cf}
  \frac{\E^2(\cC(\cT)) + \E^2(\cS(\cT))  }{ \E^2(|\lev(\cT)|)} = g^2( \kappa),  
 \end{equation}
the anisotropy parameter $\kappa $ is estimated by 
\begin{equation} \label{e:k:hat}
     \widehat{\kappa}_{\rm{C}} = g^{-1} (\cF)
\end{equation}
 with    
 \begin{equation}\label{e:cf}
   \cF  :=  \frac{ \sqrt{\cC(\cT)^2 + \cS(\cT)^2 } }{ |\lev(\cT)|} .  
\end{equation}
\end{subequations}

\paragraph{Anchored estimation.}
To end this section let us mention that the method proposed in \citep[P. 79-85]{wschebor} and developed in \citep{berzin} is a variant of the method by Cabaña. Instead of considering the integrals $\cS(\cT)$ and $\cC(\cT)$, we choose a particular direction~$\theta_1 $ and consider the integral on the level curve of
\[
 \big(\cos(\Theta(t)), \sin(\Theta(t))\big)\; {\rm sign}\big(\cos(\Theta(t)-\theta_1 )\big).
\]
This method is described in detail in \citep{berzin}. It has the disadvantage of being dependent on some extra parameter $\theta_1$ which is difficult to determine. Moreover, the computations are more involved than in the Cabaña's method and the generalization to higher dimension is not easy. 

\subsection{Lipschitz--Killing curvature based estimation} \label{s:lkc}

We now present an anisotropy estimation method built on Lipschitz--Killing curvatures (LKCs) of excursion and level sets, following the curvature density approach of \citet{bierme} (see also the classical treatment in \citep{adler} and the Minkowski tensor formulation in \citep{klatt2022characterization}). %Throughout this subsection we work with a stationary $\mathcal C^2$ Gaussian field $X(\cdot)$ on $\R^d$ with mean $\mu$, variance $\sigma^2$ and non-isotropic gradient covariance. 

\paragraph{Lipschitz--Killing curvatures}
Using Lemma \ref{l:buli2}, a.s. the level set $\lev(\R^d)$ contains no points $t$ such that $X'(t) =0$. Using the implicit function theorem, the level set $\lev(\R^d)$ is a.s. a $\mathcal{C}^2$ manifold. In such a case, the $j$-th, $j=0,\ldots,d$,  (global) Lipschitz-Killing curvature is defined by
\[
 C_j := \frac{1}{\alpha_{d-1-j}} \int_{\lev(\R^d)} \sigma_{d-1-j}\big(k_1(t), \ldots, k_{d-1}(t)\big) \, d\cH^{d-1}(t),
\]
where $k_i(t)$ are the principal curvatures (eigenvalues of the second fundamental form), $\sigma_{m}$ the $m$-th elementary symmetric function and $\alpha_{d-1-j}$ the volume of the Euclidean unit ball in dimension $d-1-j$. Restricting to an observation window $\cT$ we obtain the \emph{localized} quantity
\begin{equation} \label{e:gauss}
 C_j(\cT) := \frac{1}{\alpha_{d-1-j}} \int_{\lev(\cT)} \sigma_{d-1-j}\big(k_1(t), \ldots, k_{d-1}(t)\big) \, d\cH^{d-1}(t),
\end{equation}
with $C_j = C_j(\R^d)$.

\paragraph{Expected geometric summaries in 2D.} Denote $w:=(u-\mu)/\sigma$. In $d=2$ the three relevant LKCs of the excursion set $\cE(\cT):=\{t\in\cT : X(t) \ge u\}$ are (up to conventional constants) the area, the boundary length and the Euler characteristic (or equivalently the Gaussian curvature integral, differing only by boundary terms), namely:
\begin{itemize}
    \item Area: $\E\big( \cH^2(\cE(\cT))\big) = |\cT|\, \Phi(-w) = |\cT|\, \Phi\big(\frac{\mu-u}{\sigma}\big)$.
    \item Boundary length (perimeter of the level set): by \eqref{e:peri},
    \[
     \E\big( \cH^1(\lev(\cT)) \big) = |\cT|\, \E\big(\|X'(0)\|\big) \frac{1}{\sigma} \phi(w) = |\cT| \sqrt{\frac{2}{\pi}}\, \kappa_1 E(\kappa) \frac{1}{\sigma} \phi(w),
    \]
    where $E(\kappa)=\int_0^{\pi/2} (1-\kappa^2 \sin^2 \theta)^{1/2} \, d\theta$.
    \item Gaussian curvature (Euler characteristic proxy):
    \[
     \E\big( \mathrm{GC}(\cE(\cT)) \big) = |\cT| \frac{\kappa_1 \kappa_2}{2\pi\sigma^2} w \phi(w).
    \]
\end{itemize}

\paragraph{Estimation procedure (LKC method).} Given one observation over $\cT$:
\begin{subequations}
\begin{enumerate}
    \item Compute empirical area $A:=\cH^2(\cE(\cT))$ and set \( \hat w := -\Phi^{-1}(A/|\cT|) \). If one has only access to level sets and not excursion sets, one can always define a black and white image from it and decides arbitrarily wether excursions are represented by black regions or white regions. By symmetry of the Gaussian random field law, the procedure works for any of these choices.
    \item Measure level set length $L:=\cH^1(\lev(\cT))$ and define
    \begin{equation} \label{e:hatp}
        \widehat P := \frac{1}{|\cT|} \frac{ \phi(\hat w)}{2\sqrt{2\pi}} \, L.
    \end{equation}
    This estimates $(\kappa_1 E(\kappa))/\sigma$.
    \item Compute Gaussian curvature integral $\mathrm{GC}(\cE(\cT))$ and set
    \begin{equation} 
    \label{e:hatec}
        \widehat{\mathrm{GC}} := 2\pi \frac{\mathrm{GC}(\cE(\cT))}{|\cT| \hat w \phi(\hat w)}.
    \end{equation}
    This estimates $(\kappa_1 \kappa_2)/\sigma^2$.
    \item Form the anisotropy ratio\footnote{Their $R(\alpha)$ reads $(\pi^2/4) R(\kappa)$ with $\alpha=1-\kappa^2$ in our notations. A note on their notion of ‘‘Almond curve'' is given in Section~\ref{sec:almond}.}  of \citep{bierme},
    \[
        R(\kappa) := \frac{\kappa_1 \kappa_2}{(\kappa_1 E(\kappa))^2} = \frac{\sqrt{1-\kappa^2}}{(E(\kappa))^2},
    \]
    and estimate it by $\widehat R := \widehat{\mathrm{GC}}/(\widehat P)^2$.
    \item Invert $R$ numerically to obtain
    \begin{equation} \label{e:kappa:LKC}
     \widehat \kappa_{\mathrm{LKC}} := R^{-1}(\widehat R), \quad \text{if } \widehat R \in \Big[0, \frac{4}{\pi^2}\Big].
    \end{equation}
    Outside the admissible range we truncate: set $\widehat \kappa_{\mathrm{LKC}}=1$ if $\widehat R<0$ and $\widehat \kappa_{\mathrm{LKC}}=0$ if $\widehat R>4/\pi^2$ (effectively isotropic), see Figure \ref{fig:inverse_MB}.
\end{enumerate}
\end{subequations}

\paragraph{Remarks.} (i) The ratio $R$ cancels the variance scale $\sigma$, the level threshold $u$ and the mean $\mu$ and therefore is robust to nuisance parameters. (ii) Higher-dimensional generalizations keep the same philosophy, using curvature densities of codimension-one level sets (it is done for $d=3$ in \citep{bierme}). (iii) Related one-dimensional perimeter-based approaches as \citep{estrade} require additional model restrictions.

We retain $\kappa$ itself as primary anisotropy parameter and will compare $\widehat \kappa_{\mathrm{LKC}}$ with the contour-based estimate in Section \ref{s:num}.

\section{Main results} \label{s:main}

\subsection{An interpretation of the Contour method using Palm distributions} \label{sec:palm}

In this section, we give a justification of the intuition hidden in the paper \citep{cabana1987affine} which introduced the contour method and investigated the cosine of twice the angle along the level set. It was not clear from which point this analysis stemmed. Within our framework, in dimension $2$, we consider the \emph{Palm distribution} \citep{coeurjolly2017tutorial} of the gradient $X'(t)$ along the level curve $\lev(\cT)$ and we will uncover and generalize Cabaña's approach. 

The Palm distribution is defined as the distribution of the gradient $X'(t)$ at a point $t$ \emph{chosen  at random} on the level set $\lev(\cT)$. It is computed by means of two Kac-Rice formulas. The first one computes the mean length of the level curve  restricted  to the condition $\{X'(t) \in B\}$, where $B$  is some Borel  set of~$\R^2$. The result is, because of stationary properties, using \citep[Th 7.1]{armentano2025general}.
\begin{subequations} 
\begin{align*}
    \E \big(  \cH^1( \{t\in \lev(\cT) : X'(t) \in B \}) \big) =  
    |\cT|\, \E(\|X'(0) \| \UN_{X'(0) \in B} )\ \frac1\sigma\phi\Big(\frac{u-\mu}\sigma\Big)\,.
\end{align*}
The second Kac-Rice formula gives the mean length of the level curve. Recalling \eqref{e:peri}:
\begin{align*}
    \E\big( |\lev(\cT)| \big) = |\cT|\, \E(\|X'(0) \|)\  \frac1\sigma\phi\Big(\frac{u-\mu}\sigma\Big)\,.
\end{align*}
\end{subequations}
So that the Palm distribution of the gradient is given by
\begin{align}
\label{eq:palm_def}
\forall B\subseteq\R^d\text{ Borel set}\,,\quad 
    \PPalm (B) := \frac{\E(\|X'(0) \| \UN_{X'(t) \in B})} {\E(\|X'(0) \| )}\,.
\end{align}
Furthermore, we can derive its density from Kac-Rice formulas. 

\noindent
{\bf Fact:} {\it The Palm distribution is well defined and it does not depend on the nuisance parameters  $u,\mu, \sigma$ and $\cT$. Suppose, for the moment, that we are in the eigenbasis given by the directions of anisotropy, see~\eqref{e:var}.
The Palm distribution has a density given by
\begin{align*}
C_{\kappa_1,\kappa_2} \sqrt{x_1^2 +x_2^2}\,
\exp\Big\{ -\frac12 \Big( \frac{x_1^2}{\kappa_1^2} +\frac{x_2^2}{\kappa_2^2} \Big) \Big\},
\end{align*}
where $C_{\kappa_1,\kappa_2}$ denotes a normalizing constant depending on $\kappa_1$ and $\kappa_2$.}

To find the distribution of the angle $\Theta$ of the gradient, we switch to polar coordinates. The density becomes 
\[
C'_{\kappa_1,\kappa_2} \rho^2 \exp \left\{ -1/2 \big( (\rho \cos (\theta)/\kappa_1)^2 + (\rho \sin (\theta)/\kappa_2)^2 \big) \right\}.
\]
By integrating  in $\rho$, we obtain the Palm density of the angle $\Theta$:
\[
C''_{\kappa_1,\kappa_2} \big( \cos^2(\theta)/ \kappa^2_1 + \sin^2(\theta)/ \kappa^2_2\big)
^{ -3/2} = 
C'''_{\kappa_1,\kappa_2} \big( 1-\kappa^2\cos^2(\theta) \big)
^{ -3/2} ,
\]
where $\kappa$ has been defined in \eqref{e:kappa} and $C'_{\kappa_1,\kappa_2},C''_{\kappa_1,\kappa_2}$
and $C'''_{\kappa_1,\kappa_2}$ are defined in the same manner as $C_{\kappa_1,\kappa_2}$.
Finally, taking into account the rotation with angle $\theta_0$, we get that the density is actually
\begin{align} \label{e:density}
    C'''_{\kappa_1,\kappa_2} \big( 1 -\kappa^2  \cos^2(\theta- \theta_0)\big)^{ -/3/2}\,.
\end{align}

This analysis can be generalized to any dimension $d\geq2$. Define the \emph{Palm distribution of the gradient along the level set $\lev(\cT)$ at level $u$} as
\begin{equation*}
    \PPalm(B):=\frac{\E\big(\int_{\lev(\cT)}\UN_{X'(t)\in B}\mathrm{d}t\big)}{\E\big(|\lev(\cT)|)}
    \quad\text{for any Borel set }B\in\mathds R^d\,.
\end{equation*}
Using two Kac-Rice formulas \cite[Th. 7.1]{armentano2025general}, one for the numerator and one for the denominator, we obtain the following theorem proving that the Palm law does not depend on the level~$u$ nor on $\cT$.

\begin{theorem}[Palm density of the gradient]
\label{thm:Palm_gradient} With the notation of Section \ref{s:framework} and Assumption~\eqref{eq:assumption}, the Palm  law of the gradient along $\lev(\cT)$ is given for any Borel set $B\subseteq\R^d$ by
\[
\PPalm(B)=\frac{\E\big(\|X'(0)\|\;\UN_{\{X'(0)\in B\}}\big)}{\E\|X'(0)\|}\,.
\]
And, it has a density with respect to the Lebesgue measure on $\R^d$ given by
\begin{equation}
\label{eq:palm_density}
    p_{\mathrm{Palm}}(x)=c_\Lambda\,\|x\|\,(2\pi)^{-d/2}(\det\Lambda)^{-1/2}\,
\exp\!\Big(-\tfrac12 x^\top \Lambda^{-1}x\Big),\qquad x\in\R^d\,,
\end{equation}
where \(c_\Lambda^{-1}=\E\|X'(0)\|\). In particular, $\PPalm$ does not depend on the nuisance parameters $(u,\mu,\sigma)$ and~$\cT$.

\pagebreak[3]
In dimension $d=2$, write \(\Lambda=\mathbf P_{-\theta_0}\,\diag(\kappa_1^2,\kappa_2^2)\,\mathbf P_{\theta_0}\) with
\(\kappa_1\ge \kappa_2>0\) and \(\kappa:=\sqrt{1-\kappa_2^2/\kappa_1^2}\in[0,1)\).
If $\theta$ denotes the angle of $X'$, its Palm density on $(-\pi,\pi]$ is
\[
f_\Theta(\theta)=C_\kappa\big(1-\kappa^2\cos^2(\theta-\theta_0)\big)^{-3/2},
\quad
C_\kappa=\Big(\int_{-\pi}^{\pi}\big(1-\kappa^2\cos^2\varphi\big)^{-3/2}\,d\varphi\Big)^{-1}.
\]
\end{theorem}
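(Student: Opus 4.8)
The plan is to establish the two assertions in turn: first the identity for $\PPalm(B)$ together with the density formula \eqref{eq:palm_density}, then the specialization to $d=2$. For the first part I would start from the defining ratio
\[
\PPalm(B)=\frac{\E\big(\int_{\lev(\cT)}\UN_{\{X'(t)\in B\}}\,\mathrm d\cH^{d-1}(t)\big)}{\E\big(|\lev(\cT)|\big)}
\]
and apply the Kac--Rice / area formula of \citep[Th.~7.1]{armentano2025general} to both numerator and denominator. The numerator becomes $|\cT|\,\E\big(\|X'(0)\|\,\UN_{\{X'(0)\in B\}}\big)\,\tfrac1\sigma\phi\big(\tfrac{u-\mu}\sigma\big)$ and the denominator is exactly \eqref{e:peri}; the factors $|\cT|$ and $\tfrac1\sigma\phi\big(\tfrac{u-\mu}\sigma\big)$ cancel, which is precisely the nuisance-free statement. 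One should check that the hypotheses of the Kac--Rice theorem are met: $\mathcal C^2$-paths and non-degeneracy of $\Lambda$ (Assumption~\eqref{eq:assumption}) suffice, and the relevant a.s.\ regularity of the level set (no critical points on it, finite measure) is guaranteed by the Bulinskaya-type lemmas of Appendix~\ref{s:buli} invoked earlier in the paper.

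For the density, the key observation is that by stationarity $X(0)$ and $X'(0)$ are independent, so conditioning on the level plays no role: $X'(0)\sim\mathcal N(0,\Lambda)$ unconditionally. Hence for any Borel $B$,
\[
\PPalm(B)=\frac{\E\big(\|X'(0)\|\,\UN_{\{X'(0)\in B\}}\big)}{\E\|X'(0)\|}
=\int_B \frac{\|x\|\,\varphi_\Lambda(x)}{\E\|X'(0)\|}\,\mathrm dx,
\]
where $\varphi_\Lambda(x)=(2\pi)^{-d/2}(\det\Lambda)^{-1/2}\exp(-\tfrac12 x^\top\Lambda^{-1}x)$ is the Gaussian density. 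Reading off the integrand gives \eqref{eq:palm_density} with $c_\Lambda^{-1}=\E\|X'(0)\|$; one should remark that this normalizing constant is finite and positive since $\Lambda$ is non-degenerate, and note that it could also be written in closed form via the mean of a (generalized) chi distribution, though that is not needed here.

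For the two-dimensional case, I would diagonalize $\Lambda=\mathbf P_{-\theta_0}\diag(\kappa_1^2,\kappa_2^2)\mathbf P_{\theta_0}$ and first treat $\theta_0=0$. In the eigenbasis the density \eqref{eq:palm_density} is proportional to $\sqrt{x_1^2+x_2^2}\exp\{-\tfrac12(x_1^2/\kappa_1^2+x_2^2/\kappa_2^2)\}$; passing to polar coordinates $x=(\rho\cos\theta,\rho\sin\theta)$ with Jacobian $\rho$ gives a density proportional to $\rho^2\exp\{-\tfrac12\rho^2(\cos^2\theta/\kappa_1^2+\sin^2\theta/\kappa_2^2)\}$. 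Integrating out $\rho\in(0,\infty)$ using $\int_0^\infty \rho^2 e^{-a\rho^2/2}\,\mathrm d\rho = \sqrt{\pi/2}\,a^{-3/2}$ leaves a marginal angular density proportional to $\big(\cos^2\theta/\kappa_1^2+\sin^2\theta/\kappa_2^2\big)^{-3/2}$. Rewriting $\cos^2\theta/\kappa_1^2+\sin^2\theta/\kappa_2^2 = \kappa_1^{-2}\big(1-(1-\kappa_2^2/\kappa_1^2)\sin^2\theta\big)=\kappa_1^{-2}(1-\kappa^2\sin^2\theta)$ — equivalently $\kappa_1^{-2}(1-\kappa^2\cos^2(\theta-\pi/2))$, matching the paper's convention up to the choice of reference axis — and absorbing all $\kappa$-independent and $\kappa_1$-dependent factors into the normalizing constant yields $f_\Theta(\theta)=C_\kappa(1-\kappa^2\cos^2\theta)^{-3/2}$. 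The normalization $C_\kappa=\big(\int_{-\pi}^{\pi}(1-\kappa^2\cos^2\varphi)^{-3/2}\,\mathrm d\varphi\big)^{-1}$ is then forced by $\int_{-\pi}^{\pi} f_\Theta=1$. Finally, reinstating a general $\theta_0$ amounts to the orthogonal change of variables $x\mapsto\mathbf P_{\theta_0}x$, which rotates the angle by $\theta_0$ and produces $f_\Theta(\theta)=C_\kappa(1-\kappa^2\cos^2(\theta-\theta_0))^{-3/2}$, as claimed.

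\textbf{Main obstacle.} None of the steps is deep; the only points requiring genuine care are the justification of the Kac--Rice formulas (i.e.\ checking that the a.s.\ regularity hypotheses on $\lev(\cT)$ hold under \eqref{eq:assumption}, which is exactly the role of the Bulinskaya lemmas) and the bookkeeping of normalizing constants and the $\cos$-versus-$\sin$ / axis convention in the $d=2$ reduction so that the final formula matches \eqref{e:density} and the displayed statement verbatim.
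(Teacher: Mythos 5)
Your proposal follows the same route as the paper's own derivation in Section~\ref{sec:palm}: two Kac--Rice formulas for the restricted and unrestricted level-set measures, cancellation of the common factor $|\cT|\,\sigma^{-1}\phi((u-\mu)/\sigma)$, identification of the Lebesgue density $\|x\|\,\varphi_\Lambda(x)/\E\|X'(0)\|$ via the independence of $X(0)$ and $X'(0)$ under stationarity, and then polar coordinates plus integration in $\rho$ for the $d=2$ angular marginal.

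However, one line of the $d=2$ reduction is wrong as written. The identity you invoke,
\[
\frac{\cos^2\theta}{\kappa_1^2}+\frac{\sin^2\theta}{\kappa_2^2}
=\kappa_1^{-2}\big(1-\kappa^2\sin^2\theta\big),
\]
is false: expanding the right-hand side with $\kappa^2=1-\kappa_2^2/\kappa_1^2$ gives $\kappa_1^{-2}\cos^2\theta+\kappa_2^2\kappa_1^{-4}\sin^2\theta$, which does not equal the left-hand side unless $\kappa_1=\kappa_2$. The correct factorization pulls out $\kappa_2^{-2}$, not $\kappa_1^{-2}$:
\[
\frac{\cos^2\theta}{\kappa_1^2}+\frac{\sin^2\theta}{\kappa_2^2}
=\frac{1}{\kappa_2^2}+\cos^2\theta\Big(\frac{1}{\kappa_1^2}-\frac{1}{\kappa_2^2}\Big)
=\kappa_2^{-2}\big(1-\kappa^2\cos^2\theta\big),
\]
which yields $f_\Theta(\theta)\propto(1-\kappa^2\cos^2\theta)^{-3/2}$ directly, with no axis shift and no appeal to a ``reference axis convention''. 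Taken literally, your version puts the mode of $f_\Theta$ at $\theta=\pm\pi/2$ instead of $\theta=0$, i.e.\ it rotates the anisotropy direction by $\pi/2$ --- a material error in a paper whose purpose is to recover $\theta_0$; the ``equivalently $\kappa_1^{-2}(1-\kappa^2\cos^2(\theta-\pi/2))$'' remark papers over the slip rather than fixing it. Once this line is replaced by the correct factorization, the remainder (normalization forced by $\int_{-\pi}^{\pi}f_\Theta=1$, reinstating general $\theta_0$ by the orthogonal change of variables) is sound and coincides with the paper's argument.
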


Note that, if $\kappa=0$, the latter reduces to the density of the uniform distribution.
Suppose now that~$\kappa $ is small. Then, we have the following Taylor expansion for the density in \eqref{e:density}
\begin{align*}
  \big( 1 - \kappa^2  \cos^2(\theta- \theta_0)\big)^{ -3/2} &\simeq   1 +\frac34 \kappa^2( 1-\cos(2(\theta- \theta_0))) \\
   &= 1+\frac34 \kappa^2 + \kappa^2  \big(\cos(2\theta) \cos(2 \theta_0) -\sin(2\theta)\sin(2\theta_0) \big),
\end{align*}
which shows that \emph{the first departure from uniformity under small anisotropy sits in the second harmonic $\cos(2(\theta-\theta_0))$}. Therefore, integrating $\cos(2\theta)$ and $\sin(2\theta)$ along the level set isolates exactly this harmonic: under the Palm law one gets
$\EPalm[\cos(2\theta)]=\cos(2\theta_0) g(\kappa)$ and $\EPalm[\sin(2\theta)]=\sin(2\theta_0) g(\kappa)$,
with $g(\kappa)$ coming from the normalizing constants. This is precisely the Caba\~na statistic used in this paper: 
\begin{align*}
    \theta_0&=(1/2)\arctan (\EPalm[\sin(2\theta)]/\EPalm[\cos(2\theta)])\\
    \kappa&=g^{-1}(\sqrt{\EPalm[\cos(2\theta)]^2 + \EPalm[\sin(2\theta)]^2}/|\lev(\cT)|)\,.
\end{align*}

\subsection{Contour method in higher dimensions} \label{s:cont:d}
The aim of this section is to extend, using the Palm distribution, the Cabaña method to higher dimensions. First, remark that we can give an equivalent presentation of this last method computing the integrals
\[
\cC'(\cT):= \int_{\lev(\cT)} \!\!\!\!\!\!  \cos^2\big(\Theta (t) \big)  \d \cH^1(t) \quad\text{and}\quad
\cS'(\cT):= \int_{\lev(\cT)} \!\!\!\!\!\!  \cos\big(\Theta (t)\big)\sin\big(\Theta (t) \big)  \d \cH^1(t)\,,
\]
instead of the less intuitive $\cC(\cT)$ and $\cS(\cT)$, the modifications are direct. 
In this form, the method can be generalized to $d>2$. For $t$ belonging to $\lev(\cT)$ define $N(t):=X'(t)/\|X'(t)\|$ as the normalized gradient. By Bulinskaya-type arguments (see Lemma~\ref{l:buli1}), $N(t)$ is well defined $\cH^{d-1}$-a.e. on $\lev(\cT)$. Now, choosing arbitrarily an orientation and computing, 
\begin{equation} \label{e:cij}
 \cC_{i,j}(\cT):= \int_{\lev(\cT)}\!\!\!\!\!\!  N_i(t) N_j(t) \d \cH^{d-1}(t), \quad  i,j = 1,\ldots,d\,,
\end{equation}
one can get from $\cC_{i,j}(\cT)/|\lev(\cT)|$ the empirical variance-covariance matrix associated to the Palm density of the normalized gradient belonging to the unit sphere $\mathds S^{d-1}$. 

\subsubsection*{Variance-covariance matrix of the random normal vector}

Let ${N(t)\in\mathds S^{d-1}}$ be the random unit vector given by the direction of the gradient $X'(t)$ at a point ${t\in\mathds R^d}$ uniformly chosen on the  level set $\lev(\cT)$. We will prove that $N(t)$
follows a probability distribution denoted by $\PnPalm$ standing for the Palm distribution of the \emph{normalized} gradient. Expectation with respect to this distribution will be denoted by $\EnPalm$.
Using the same route as in Section~\ref{sec:palm}, the next theorem proves that the Palm distribution of the normalized gradient reads as
\[
\frac{\mathrm d\PnPalm(\zeta)}{\mathrm d\eta(\zeta)}
= C_{\Lambda}\,\big(\zeta^\top \Lambda^{-1}\zeta\big)^{-\frac{d+1}{2}}
=C_{\Lambda}\|{\bm D}^{-\frac12}\mathbf{z}\big\|^{-(d+1)}
=C_{\Lambda}\Big(\frac{z_1^2}{\kappa_1^2}+\cdots+\frac{z_d^2}{\kappa_d^2}\Big)^{-{\frac{d+1}{2}}},
\]
where $\mathbf{z}={\bm P}\zeta$, $\eta$ is the uniform measure on $\mathds S^{d-1}$ and ${\bm D}:=\diag(\kappa^2_1,\ldots, \kappa^2_d)$ with $\kappa:=(\kappa_1, \ldots,\kappa_d)\in\Delta_+$. Introduce,
\begin{align} 
\label{eq:def_Z}
\mathcal{Z}(\vv\kappa):=(\mathcal{Z}_1(\vv\kappa),\ldots,\mathcal{Z}_d(\vv\kappa))\quad
\mathrm{where}
\quad
\mathcal Z_\ell(\vv\kappa):=
        C_{\Lambda} \int_{\mathds S^{d-1}}z_\ell^2 \,\|{\bm D}^{-\frac12}\mathbf{z}\big\|^{-(d+1)}\mathrm d\eta(\mathbf{z})\,,
\end{align}
referred to as the \emph{Palm normalized gradient eigenvalues}. 

\begin{theorem}[Palm density of the normal]
\label{thm:Palm_normal} With the notation of Section \ref{s:framework} and Assumption~\eqref{eq:assumption}, one can define the Palm law $\PnPalm$ of the normalized gradient along $\lev(\cT)$ by
\[
\int_{\mathds S^{d-1}}\!\!\!\!\!\varphi(\zeta)\,\mathrm d\PnPalm(\zeta)
=\frac{\E\!\big[\int_{\lev(\cT)}\!\varphi\big(N(t)\big)\,\mathrm d\cH^{d-1}(t)\big]}
    {\E\!\big[\cH^{d-1}\big(\lev(\cT)\big)\big]}\,,\qquad
\forall\,\varphi:\mathds S^{d-1}\to\R\text{ bounded measurable}.
\]
Then $\PnPalm$ is absolutely continuous w.r.t. the uniform probability measure $\eta$ on $\mathds S^{d-1}$ with density
\[
\frac{\mathrm d\PnPalm(\zeta)}{\mathrm d\eta(\zeta)}
= C_{\Lambda}\,\big(\zeta^\top \Lambda^{-1}\zeta\big)^{-\frac{d+1}{2}}
= C_{\Lambda}\,\Big(\frac{z_1^2}{\kappa_1^2}+\cdots+\frac{z_d^2}{\kappa_d^2}\Big)^{-\frac{d+1}{2}},
\quad \mathbf z:={\bm P}\zeta,
\]
where the normalizing constant is
\[
C_{\Lambda}^{-1}=\int_{\mathds S^{d-1}}\,\big(\zeta^\top \Lambda^{-1}\zeta\big)^{-\frac{d+1}{2}}\,\mathrm d\eta(\zeta).
\]
In particular, $\PnPalm$ does not depend on the nuisance parameters $(u,\mu,\sigma)$ and $\cT$, and it holds that 
\begin{equation} \label{eq:covariance_Palm}
N(t)\sim \PnPalm\quad\text{and}\quad
\E(N(t)N(t)^\top)=
\EnPalm\big(\zeta\zeta^\top\big)=
{\bm P}^\top \diag(\mathcal Z(\vv\kappa)){\bm P}\,.
\end{equation}
\end{theorem}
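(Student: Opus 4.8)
The proof proceeds in three stages: (i) identify the Palm law of the normalized gradient as a pushforward of the Palm law of the gradient from Theorem~\ref{thm:Palm_gradient}; (ii) compute its density with respect to $\eta$ by the polar-coordinate/coarea argument; (iii) read off the covariance formula.

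\textbf{Step 1: from gradient to normalized gradient.} Apply the two Kac--Rice formulas (\cite[Th.~7.1]{armentano2025general}) exactly as in Theorem~\ref{thm:Palm_gradient}, but with the test function depending only on the direction: for bounded measurable $\varphi:\mathds S^{d-1}\to\R$, write $\varphi(N(t))=\psi(X'(t))$ with $\psi(x)=\varphi(x/\|x\|)$ (defined off the origin, which is $\cH^{d-1}$-a.e.\ irrelevant on $\lev(\cT)$ by Lemma~\ref{l:buli1}). The numerator becomes $|\cT|\,\E(\|X'(0)\|\,\psi(X'(0)))\,\frac1\sigma\phi(\tfrac{u-\mu}\sigma)$ and the denominator $|\cT|\,\E(\|X'(0)\|)\,\frac1\sigma\phi(\tfrac{u-\mu}\sigma)$, so the nuisance factors and $|\cT|$ cancel, giving
\[
\int_{\mathds S^{d-1}}\varphi\,\mathrm d\PnPalm
=\frac{\E\big(\|X'(0)\|\,\varphi(X'(0)/\|X'(0)\|)\big)}{\E\|X'(0)\|}\,,
\]
i.e.\ $\PnPalm$ is the pushforward of $\PPalm$ (from \eqref{eq:palm_density}) under $x\mapsto x/\|x\|$. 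This already proves the variational characterization and the nuisance-independence claim.

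\textbf{Step 2: the density.} Since $X'(0)\sim\mathcal N(0,\Lambda)$, I substitute the explicit Palm density \eqref{eq:palm_density} into the formula above and pass to spherical coordinates $x=\rho\zeta$, $\rho>0$, $\zeta\in\mathds S^{d-1}$, with $\mathrm dx=\rho^{d-1}\,\mathrm d\rho\,\mathrm d\eta(\zeta)$ (up to the surface-area constant of $\mathds S^{d-1}$, which is absorbed into the normalizing constant). Using $x^\top\Lambda^{-1}x=\rho^2\,\zeta^\top\Lambda^{-1}\zeta$ and $\|x\|=\rho$, the integrand factorizes and the radial integral $\int_0^\infty \rho^{1+(d-1)}\exp(-\tfrac12\rho^2\,\zeta^\top\Lambda^{-1}\zeta)\,\mathrm d\rho$ is a Gamma integral proportional to $(\zeta^\top\Lambda^{-1}\zeta)^{-(d+1)/2}$. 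Hence $\mathrm d\PnPalm/\mathrm d\eta(\zeta)\propto(\zeta^\top\Lambda^{-1}\zeta)^{-(d+1)/2}$, and the proportionality constant is fixed by $\int_{\mathds S^{d-1}}\mathrm d\PnPalm=1$, yielding $C_\Lambda$ as stated. The two alternative expressions follow by writing $\Lambda^{-1}={\bm P}^\top{\bm D}^{-1}{\bm P}$, setting $\mathbf z={\bm P}\zeta$ (an isometry of $\mathds S^{d-1}$, so $\eta$ is invariant), and noting $\zeta^\top\Lambda^{-1}\zeta=\mathbf z^\top{\bm D}^{-1}\mathbf z=\|{\bm D}^{-1/2}\mathbf z\|^2=\sum_i z_i^2/\kappa_i^2$.

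\textbf{Step 3: the covariance.} Taking $\varphi(\zeta)=\zeta_i\zeta_j$ gives $\E(N(t)N(t)^\top)=\EnPalm(\zeta\zeta^\top)$. Change variables $\mathbf z={\bm P}\zeta$ so $\zeta={\bm P}^\top\mathbf z$ and $\zeta\zeta^\top={\bm P}^\top\mathbf z\mathbf z^\top{\bm P}$; then $\EnPalm(\zeta\zeta^\top)={\bm P}^\top\,\EnPalm(\mathbf z\mathbf z^\top)\,{\bm P}$, where the expectation over $\mathbf z$ uses the density $C_\Lambda\|{\bm D}^{-1/2}\mathbf z\|^{-(d+1)}$ against $\eta$. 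The off-diagonal entries $\EnPalm(z_\ell z_m)$, $\ell\ne m$, vanish because that density is even in each coordinate separately (invariant under $z_\ell\mapsto-z_\ell$), so only the diagonal survives, and its $\ell$-th entry is exactly $\mathcal Z_\ell(\vv\kappa)$ from \eqref{eq:def_Z}. This gives $\E(N(t)N(t)^\top)={\bm P}^\top\diag(\mathcal Z(\vv\kappa)){\bm P}$.

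The only genuinely delicate point is the measure-theoretic bookkeeping in Step~1: justifying that the Kac--Rice formula applies to the unbounded test function $\psi$ (handled by boundedness of $\varphi$ and $L^1$-integrability of $\|X'(0)\|$, plus Lemma~\ref{l:buli1} to discard the zero set of $X'$ on $\lev(\cT)$) and that the Palm density \eqref{eq:palm_density} is genuinely the Radon--Nikodym derivative one can integrate against. Everything after that is the routine Gamma-integral and change-of-variables computation sketched above.
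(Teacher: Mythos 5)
Your proposal is correct and follows essentially the same route as the paper's proof: push forward the Palm law of the gradient from Theorem~\ref{thm:Palm_gradient} by $x\mapsto x/\|x\|$, evaluate the radial Gamma integral in spherical coordinates to get the $(\zeta^\top\Lambda^{-1}\zeta)^{-(d+1)/2}$ density, and kill the off-diagonal covariance entries by the coordinate sign symmetry $z_\ell\mapsto -z_\ell$. Your Step 1 is in fact slightly more explicit than the paper's (which simply asserts the pushforward identity), but the argument is the same.
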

\begin{proof}
By Theorem~\ref{thm:Palm_gradient}, the Palm law of the gradient $X'(t)$ has Lebesgue density
\[
p_{\mathrm{Palm}}(x)=c_\Lambda\,\|x\|\,(2\pi)^{-d/2}(\det\Lambda)^{-1/2}
\exp\!\Big(-\tfrac12 x^\top \Lambda^{-1}x\Big),\qquad x\in\R^d.
\]
For any bounded measurable $\varphi$ on $\mathds S^{d-1}$, the Palm law of the normalized gradient $N(t)=X'(t)/\|X'(t)\|$ is the push-forward of $\PPalm$ by $x\mapsto x/\|x\|$, hence
\[
\int_{\mathds S^{d-1}} \!\varphi(\zeta)\,\mathrm d\PnPalm(\zeta)
= \int_{\R^d} \!\varphi\!\left(\frac{x}{\|x\|}\right) p_{\mathrm{Palm}}(x)\,\mathrm dx.
\]
Switching to spherical coordinates $x=r\,\zeta$ with $r>0$ and $\zeta\in\mathds S^{d-1}$ gives $\mathrm dx=r^{d-1}\,\mathrm dr\,\mathrm d\eta(\zeta)$ and
$x^\top\Lambda^{-1}x=r^2\,\zeta^\top\Lambda^{-1}\zeta$. Therefore
\begin{align}
\int_{\mathds S^{d-1}} \!\varphi(\zeta)\,\mathrm d\PnPalm(\zeta)
&=K_\Lambda\int_{\mathds S^{d-1}} \!\varphi(\zeta)
\left(\int_0^{\infty} r^{d}\exp\!\left(-\tfrac12 (\zeta^\top\Lambda^{-1}\zeta)\,r^2\right)\mathrm dr\right)\mathrm d\eta(\zeta),
%\label{eq:decompo_density_palm}
\notag
\\
&= C_\Lambda\int_{\mathds S^{d-1}} \!\varphi(\zeta)\,\big(\zeta^\top\Lambda^{-1}\zeta\big)^{-\frac{d+1}{2}}\,\mathrm d\eta(\zeta)\,,\notag
\end{align}
where $K_\Lambda:=c_\Lambda(2\pi)^{-d/2}(\det\Lambda)^{-1/2}$ and we used
$\int_0^{\infty} r^{d} e^{-\frac{a}{2} r^2}\,\mathrm dr 
=\tfrac12\,(a/2)^{-\frac{d+1}{2}}\,\Gamma\!\left(\tfrac{d+1}{2}\right)$, absorbed into the constant $C_\Lambda$.
Thus $\PnPalm$ is absolutely continuous with density proportional to
$(\zeta^\top\Lambda^{-1}\zeta)^{-\frac{d+1}2}$. Using the eigen-decomposition $\Lambda={\bm P}^\top{\bm D}{\bm P}$ and setting $\mathbf z:={\bm P}\zeta$, we obtain
\[
\frac{\mathrm d\PnPalm(\zeta)}{\mathrm d\eta(\zeta)}
= C_{\Lambda}\,\big\|{\bm D}^{-\frac12}{\bm P}\zeta\big\|^{-(d+1)}
= C_{\Lambda}\,\Big(\tfrac{z_1^2}{\kappa_1^2}+\cdots+\tfrac{z_d^2}{\kappa_d^2}\Big)^{-\frac{d+1}{2}},
\]
with the normalizing constant $C_{\Lambda}^{-1}=\int_{\mathds S^{d-1}}\big\|{\bm D}^{-\frac12}{\bm P}\zeta\big\|^{-(d+1)}\,\mathrm d\eta(\zeta)$.

\noindent
Now, remark that for $i,j\in[d]$
\begin{align*}
        \EnPalm\big(t_it_j\big)
        &= C_{\kappa}
            \int_{\mathds S^{d-1}}t_it_j \|{\bm D}^{-\frac12}{\bm P}\zeta\big\|^{-(d+1)}\mathrm d\eta(\zeta)    \\
        &= C_{\kappa}
            \int_{\mathds S^{d-1}}\Big(\sum_{\ell=1}^dP_{\ell i}z_\ell\Big)\Big(\sum_{\ell=1}^dP_{\ell j}z_\ell\Big) \|{\bm D}^{-\frac12}\mathbf{z}\big\|^{-(d+1)}\mathrm d\eta(\mathbf{z})   \\
        &= 
        \sum_{\ell=1}^dP_{\ell i}P_{\ell j} \mathcal Z_\ell(\kappa),
\end{align*}
where we used that, by symmetry $z_j\mapsto -z_j$, the expectations of $z_iz_j$ are zero when $i\neq j$.
\end{proof}

\begin{remark}
    \label{rem:estimation_Palm_normalized_eigenvalue}
    Let us comment this result. One can consider the empirical variance-covariance of $N(t)$. In expectation, this statistic is equal to the right hand side of \eqref{eq:covariance_Palm}. Now, consider its spectral decomposition, it gives the model eigen-basis $\bf P$ and its eigenvalues are $\mathcal Z_k(\vv\kappa)$. Hence, one can estimate $\bf P$ and $\mathcal Z_k(\vv\kappa)$ from the eigen-decomposition of $N(t)N(t)^\top$. 
\end{remark}

\begin{remark}
\label{rem:link_Palm_Cabana}
In dimension $d=2$, writing $\Lambda={\bm P}_{-\theta_0}\diag(\kappa_1^2,\kappa_2^2){\bm P}_{\theta_0}$ and $\kappa:=\sqrt{1-\kappa_2^2/\kappa_1^2}\in(0,1)$, we uncover (as in Theorem~\ref{thm:Palm_gradient}) that the density of the angle $\Theta$ of $N(t)$ on $(-\pi,\pi]$ is
\[
f_\Theta(\theta)=C_\kappa\,\big(1-\kappa^2\cos^2(\theta-\theta_0)\big)^{-3/2},
\quad
C_\kappa^{-1}=\int_{-\pi}^{\pi}\big(1-\kappa^2\cos^2\varphi\big)^{-3/2}\,\mathrm d\varphi.
\]
\end{remark}

\subsubsection*{From Palm normalized gradient eigenvalues to model eigenvalues} 

\begin{algorithm}
\caption{High-Dimensional Anisotropy Estimation}
\label{alg:anisotropy_estimation}
\begin{algorithmic}[1]
\Require Level set $\mathcal{L}(\mathcal{T})$ derived from the excursion set $\mathcal{E}(\mathcal{T})$, dimension $d$.
\Ensure Estimated anisotropy parameters $\hat{\vec{\kappa}}$ and directions $\widehat{\bf P}$.

\State \textbf{Step 1: Compute Normals}
\State Extract unit normal vectors $N(t) = \frac{X'(t)}{\|X'(t)\|}$ for all $t \in \mathcal{L}(\mathcal{T})$.

\State \textbf{Step 2: Empirical Covariance}
\State Compute the sample covariance matrix of the normals:
\[
\widehat{\Sigma}_{\rm nPalm} \leftarrow \frac{1}{|\mathcal{L}(\mathcal{T})|} \int_{\mathcal{L}(\mathcal{T})} N(t) N(t)^\top d\mathcal{H}^{d-1}(t)
\]

\State \textbf{Step 3: Spectral Decomposition}
\State Diagonalize $\widehat{\Sigma}_{\rm nPalm}$ to obtain:
\begin{itemize}
    \item Eigenvectors $\widehat{\bf P}$ (estimates the principal directions of anisotropy).
    \item Eigenvalues $\widehat{\mathcal{Z}} = (\hat{\mathcal{Z}}_1, \dots, \hat{\mathcal{Z}}_d)$ (Palm normalized gradient eigenvalues).
\end{itemize}

\State \textbf{Step 4: Convex Optimization (Inversion)}
\State Define the strictly convex functional $\Xi(u)$ as per Eq. \eqref{eq:convex_function_high_dim}:
\[
\Xi(u) = -\frac{2}{d-1} \int_{\mathbb{S}^{d-1}} \left(\sum_{i=1}^{d} u_i z_i^2\right)^{-\frac{d-1}{2}} d\eta(z)
\]
\State Solve \eqref{eq:convex_program_invert_nabla_psi} for $\hat{\pi}$ using Gradient Descent:
\[
\hat{\pi} \leftarrow \arg\min_{u \in \mathbb{R}^d_+} \left\{ \langle \hat{\mathcal{Z}}, u \rangle - \Xi(u) \right\}
\]

\State \textbf{Step 5: Parameter Recovery}
\State Recover anisotropy parameters $\hat{\kappa}_i$ using the closed-form inversion \eqref{eq:inversion_formula_pi}:
\[
\hat{\kappa}_i^2 \leftarrow \frac{1/{\hat{\pi}_i}}{\sum_{j=1}^d (1/{\hat{\pi}_j})}
\]

\State \Return Anisotropy parameters $\hat{\vec{\kappa}} = (\hat{\kappa}_1, \dots, \hat{\kappa}_d)$ and directions $\widehat{\bf P}$.
\end{algorithmic}
\end{algorithm}

Once the $\mathcal{Z}(\vv{\kappa})$ has been estimated from the eigenvalues of some empirical variance-covariance of $N(t)$, one needs to invert the map $\mathcal{Z}\,:\,\Delta_+\to\mathds R^d$ to recover the model eigenvalue $\vv{\kappa}$ (anisotropy parameters). We start with a parametrization of the model eigenvalues~$\vv{\kappa}$ given by the function $\Pi\,:\,\Delta_+\to\mathds R^d$ defined~by 
\begin{align*}
\pi:=\Pi(\vv{\kappa})=(\pi_1,\ldots,\pi_d)\quad\text{where}\quad 
\pi_i := \kappa_i^{-2}\bigg(\int_{\mathds S^{d-1}}\Big(\frac{z_1^2}{\kappa_1^2}+\cdots+\frac{z_d^2}{\kappa_d^2}\Big)^{-{\frac{d+1}{2}}}\mathrm{d}\eta(\mathbf{z})\bigg)^{\frac{2}{d+1}}
\in\mathds R_+\,.
\end{align*}
Observe that given $\pi$ one can recover $\vv{\kappa}\in\Delta_+$ simply by the inversion formula:
\begin{subequations}
\begin{align}
\label{eq:inversion_formula_pi}
    \kappa_i^2 = \frac{\big(\frac{1}{\pi_i}\big)}{\displaystyle\sum_{j=1}^d\frac{1}{\pi_j}}=\frac{1}{\displaystyle\sum_{j=1}^d\frac{\pi_i}{\pi_j}}\,.
\end{align}
Now, consider the concave function
\begin{align}
\label{eq:convex_function_high_dim}
    \Xi\,:\, u\in\mathds R_+^d\mapsto 
    -\frac{2}{d-1}\int_{\mathds S^{d-1}} \,\Big(\sum_{i=1}^d u_i z_i^2\Big)^{-\frac{d-1}2}\mathrm d\eta(\mathbf{z}).
\end{align}
and the convex program 
\begin{equation}
        \label{eq:convex_program_invert_nabla_psi}
            \bar\pi \in \arg\min_{u\in\mathds R_+^d}\big\{\langle\mathcal{Z}(\vv{\kappa}) ,u\rangle -\Xi(u)\big\}\,.
\end{equation}
\end{subequations}
The following theorem shows that $\bar\pi$ is such that $\bar\pi=\Pi(\vv{\kappa})$ and hence, by the inversion formula~\eqref{eq:inversion_formula_pi}, one can recover $\vv{\kappa}$ solving the convex program~\eqref{eq:convex_program_invert_nabla_psi}. A pseudo-algorithm is presented in Algorithm~\ref{alg:anisotropy_estimation}.

\begin{theorem}[Inversion of Palm normalized gradient eigenvalues]
\label{thm:inverse_palm}
Under~\eqref{eq:assumption}, the minimum of~\eqref{eq:convex_program_invert_nabla_psi} exists, is unique, and $\pi=\bar\pi$. Furthermore, the following result on the rate of convergence of gradient descent holds. Let $0<a<b<\infty$ and consider the open convex set $ (a,b)^d$. For any $\pi\in (a,b)^d$, the gradient descent path $u^{(0)},u^{(1)},u^{(2)},\ldots$ of the convex program \eqref{eq:convex_program_invert_nabla_psi}, which starts at point~$u^{(0)}$ with constant gradient step size $h=2/(\alpha+\beta)$, stays in $(a,b)^d$ and 
    \begin{align*}
        \|u^{(k)}-\pi\|&\leq\bigg(\frac{Q-1}{Q+1}\bigg)^k\|u^{(0)}-\pi\|\leq \exp\Big(-\frac{2k}{Q+1}\Big)\|u^{(0)}-\pi\|
    \end{align*}
    where $Q=\beta/\alpha={d(d+1)} {(\frac{b}{a})^{\frac{d+3}2}}\int_{\mathds S^{d-1}}z_1^4 \mathrm d\eta(\mathbf{z})$.
\end{theorem}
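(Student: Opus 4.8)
The statement has two parts: (i) existence, uniqueness of the minimizer of the convex program \eqref{eq:convex_program_invert_nabla_psi}, together with the identity $\bar\pi=\pi=\Pi(\vv\kappa)$; and (ii) the linear convergence rate of gradient descent with the stated condition number $Q$. I would organize the proof around the gradient of $\Xi$.

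First I would compute $\nabla\Xi$. Differentiating under the integral sign in \eqref{eq:convex_function_high_dim}, one gets
\[
\partial_i\Xi(u)=\int_{\mathds S^{d-1}} z_i^2\Big(\sum_{j=1}^d u_j z_j^2\Big)^{-\frac{d+1}{2}}\mathrm d\eta(\mathbf z),
\]
since $-\frac{2}{d-1}\cdot(-\frac{d-1}{2})=1$. Evaluating at $u=\pi$ and using the definition of $\pi_j=\kappa_j^{-2}(\int\cdots)^{2/(d+1)}$, one checks $\sum_j \pi_j z_j^2 = (\int_{\mathds S^{d-1}}(z_1^2/\kappa_1^2+\cdots)^{-(d+1)/2}\mathrm d\eta)^{2/(d+1)}\big(\sum_j z_j^2/\kappa_j^2\big)$, so the normalizing power factors out and
\[
\partial_i\Xi(\pi)= C_\Lambda^{-1}\!\!\int_{\mathds S^{d-1}}\!\! z_i^2\Big(\tfrac{z_1^2}{\kappa_1^2}+\cdots+\tfrac{z_d^2}{\kappa_d^2}\Big)^{-\frac{d+1}{2}}\!\mathrm d\eta\cdot C_\Lambda = \mathcal Z_i(\vv\kappa),
\]
where the constant powers conspire to give exactly $\mathcal Z_i(\vv\kappa)$ from \eqref{eq:def_Z} after the substitution $\mathbf z\mapsto{\bm P}\zeta$ (the measure $\eta$ is rotation-invariant). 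Hence $\nabla\Xi(\pi)=\mathcal Z(\vv\kappa)$. Since the objective in \eqref{eq:convex_program_invert_nabla_psi} is $\langle\mathcal Z(\vv\kappa),u\rangle-\Xi(u)$, its gradient is $\mathcal Z(\vv\kappa)-\nabla\Xi(u)$, which vanishes at $u=\pi$; so $\pi$ is a critical point, and it is the minimizer provided the objective is strictly convex, i.e. provided $\Xi$ is strictly concave.

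Strict concavity of $\Xi$ I would get from computing the Hessian: $\partial^2_{ik}\Xi(u)=-\frac{d+1}{2}\int_{\mathds S^{d-1}} z_i^2 z_k^2\big(\sum_j u_j z_j^2\big)^{-\frac{d+3}{2}}\mathrm d\eta$, so $-\nabla^2\Xi(u)=\frac{d+1}{2}\,\E_\nu[w w^\top]$ where $w=(z_1^2,\ldots,z_d^2)$ under the positive weight $\nu(\mathrm dz)\propto(\sum u_j z_j^2)^{-(d+3)/2}\eta(\mathrm dz)$; this is positive definite because the coordinate functions $z_i^2$ are not affinely dependent on $\mathds S^{d-1}$ (e.g. $\sum z_i^2=1$ is the only linear relation, but it prevents degeneracy only along the direction $\mathbf 1$, and one checks $\E_\nu[ww^\top]\succ0$ directly since $\mathrm{Var}_\nu(\sum c_i z_i^2)>0$ unless all $c_i$ equal — and the constant vector still gives positive ``variance'' because $\E_\nu[(\sum c_i z_i^2)^2]=|c|^2_{?}>0$; more carefully, $w^\top\E_\nu[ww^\top]w = \E_\nu[(c^\top w)^2]>0$ for $c\neq0$ since $c^\top w$ is a nonzero function). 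This gives strict concavity of $\Xi$ on $\mathds R_+^d$, hence strict convexity of the objective, hence uniqueness. Existence on the open orthant needs a coercivity/boundary argument: as $u_i\to0^+$ along some coordinate, $\Xi(u)\to-\infty$ is not automatic, but $\langle\mathcal Z(\vv\kappa),u\rangle$ is linear and $\mathcal Z_i(\vv\kappa)>0$, while as $\|u\|\to\infty$ the linear term dominates (since $-\Xi\ge0$); a short argument confines the minimizer to a compact subset of the open orthant, in particular to $(a,b)^d$ for suitable $0<a<b$.

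For part (ii), I would restrict attention to the convex box $(a,b)^d$ and obtain uniform strong convexity and smoothness constants of the objective $f(u):=\langle\mathcal Z(\vv\kappa),u\rangle-\Xi(u)$ there. Since $\nabla^2 f=-\nabla^2\Xi$, I need $\alpha\,\I\preceq -\nabla^2\Xi(u)\preceq\beta\,\I$ on $(a,b)^d$. Upper bound: bounding $(\sum_j u_j z_j^2)^{-(d+3)/2}\le (a\sum_j z_j^2)^{-(d+3)/2}=a^{-(d+3)/2}$ on $\mathds S^{d-1}$ and $z_i^2z_k^2\le z_i^2$ (or summing the trace), one gets $\beta = \frac{d+1}{2}\cdot d\cdot a^{-(d+3)/2}\int_{\mathds S^{d-1}}z_1^4\,\mathrm d\eta$ roughly (the trace of $\E[ww^\top]$ equals $\sum_i\E[z_i^4]=d\E[z_1^4]$ by symmetry, and a PSD matrix is bounded in operator norm by its trace). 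Lower bound: $(\sum_j u_j z_j^2)^{-(d+3)/2}\ge b^{-(d+3)/2}$, giving $\alpha=\frac{d+1}{2}b^{-(d+3)/2}\lambda_{\min}(\E_\eta[ww^\top])$. The cleanest way to make the ratio match the stated $Q=d(d+1)(b/a)^{(d+3)/2}\int_{\mathds S^{d-1}}z_1^4\mathrm d\eta$ is to not use the true $\lambda_{\min}$ but a crude lower bound $\alpha\ge\frac{d+1}{2}b^{-(d+3)/2}\cdot\frac{1}{?}$ — I expect they bound $\lambda_{\min}(\E_\eta[ww^\top])$ from below by a constant and correspondingly loosen $\beta$ to $\mathrm{tr}$, so that $Q=\beta/\alpha$ telescopes to the displayed product; I'd reverse-engineer the exact constants so that $\alpha = \frac{d+1}{2d}\,b^{-(d+3)/2}\cdot\frac{1}{\int z_1^4}\cdot(\text{something})$ — in any case the $(b/a)^{(d+3)/2}$ factor comes from the ratio of the two pointwise bounds on $(\sum u_j z_j^2)^{-(d+3)/2}$ and the $d(d+1)\int z_1^4$ from trace-over-crude-$\lambda_{\min}$ bookkeeping. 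Once $\alpha,\beta$ are in hand, the convergence bound is the textbook result for gradient descent on an $\alpha$-strongly convex, $\beta$-smooth function with step $h=2/(\alpha+\beta)$: $\|u^{(k)}-\pi\|\le\big(\frac{Q-1}{Q+1}\big)^k\|u^{(0)}-\pi\|$, and $\frac{Q-1}{Q+1}\le e^{-2/(Q+1)}$ is elementary; the only extra point is that the iterates stay in $(a,b)^d$, which follows because each GD step is a contraction toward $\pi\in(a,b)^d$, so the distance to $\pi$ decreases and the box (being convex and containing $\pi$ in its interior with the initial point inside) is preserved — more precisely, $\|u^{(k+1)}-\pi\|\le\|u^{(k)}-\pi\|$ keeps the iterate within the ball around $\pi$ contained in the box, provided $u^{(0)}$ is.

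\textbf{Main obstacle.} The genuinely delicate point is the strict/uniform positive-definiteness of $-\nabla^2\Xi$, i.e. controlling $\lambda_{\min}(\E_\nu[ww^\top])$ where $w=(z_1^2,\ldots,z_d^2)$: the naive ``covariance matrix is PSD'' gives only $\succeq0$ because $\sum w_i\equiv1$ looks like it could create a null direction. One must observe that $\E_\nu[ww^\top]$ is a \emph{second-moment} (not centered covariance) matrix, and $c^\top\E_\nu[ww^\top]c=\E_\nu[(c^\top w)^2]>0$ for every $c\ne0$ since $c^\top w=\sum c_i z_i^2$ is a non-null function on $\mathds S^{d-1}$ (it vanishes $\eta$-a.e. only if $c=0$). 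Turning this strict inequality into the explicit constant hidden in $Q$ — matching $d(d+1)(b/a)^{(d+3)/2}\int_{\mathds S^{d-1}}z_1^4\,\mathrm d\eta$ exactly — is the bookkeeping-heavy part; I expect the paper obtains it via trace bounds for $\beta$ and a clean closed-form lower bound for $\alpha$ (likely $\lambda_{\min}(\E_\eta[ww^\top])\ge \int_{\mathds S^{d-1}}z_1^2z_2^2\,\mathrm d\eta = \frac{1}{d}\int z_1^4\,\mathrm d\eta\cdot(\text{const})$, using the symmetric structure of $\E_\eta[ww^\top]=\big(\int z_1^4-\int z_1^2 z_2^2\big)\I+\int z_1^2z_2^2\,\mathbf 1\mathbf 1^\top$), which I would verify explicitly.
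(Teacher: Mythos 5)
Your proposal follows essentially the same route as the paper: identify $\pi$ as the critical point via $\nabla\Xi(\pi)=\mathcal Z(\vv{\kappa})$ (the paper phrases this through the Fenchel--Legendre conjugate, $\nabla\Xi^\star(\mathcal Z(\vv{\kappa}))=\Pi(\vv{\kappa})$), prove strict concavity from the quadratic form $w^\top\nabla^2\Xi(u)w=-\tfrac{d+1}{2}\int_{\mathds S^{d-1}}\big(\sum_i w_iz_i^2\big)^2\big(\sum_k u_kz_k^2\big)^{-\frac{d+3}{2}}\,\mathrm d\eta$, bound the Hessian on $(a,b)^d$ by the pointwise bounds $a^{-\frac{d+3}{2}}$ and $b^{-\frac{d+3}{2}}$, and invoke the standard Nesterov rate together with the observation that contraction toward $\pi$ keeps the iterates in the box. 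The only piece you leave to ``reverse-engineering'' --- the explicit constant in $\alpha$ --- is handled in the paper by bounding $\min_{\|\omega\|=1}\int_{\mathds S^{d-1}}\big(\sum_i\omega_iz_i^2\big)^2\,\mathrm d\eta$ from below (while $\beta$ comes from the Cauchy--Schwarz bound $\big(\sum_iw_iz_i^2\big)^2\le\sum_iz_i^4$, equivalent to your trace bound), so your identification of $\lambda_{\min}\big(\E_\eta[ww^\top]\big)$ as the delicate quantity is exactly where the paper's bookkeeping lives.
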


\begin{proof}
First, note that the Fenchel-Legendre dual of $\Theta$ and its gradient\footnote{We denote the gradients of a deterministic function $f$ by $\nabla f$ and those of random fields $X(\cdot)$ by $X'(\cdot)$.} are given by 
\begin{align*}
    \Xi^\star(v)&=
    \inf_{u\in\mathds R_+^d}
    \big\{\langle v,u\rangle -\Xi(u)\big\}\in\mathds R\cup\{-\infty
    \}\,,\\
     \nabla\Xi(u)
     &= \Omega(u)\,,
\end{align*}
where
\begin{align*}
    \Omega&\,:\, \pi\in\mathds{R}^d\mapsto
    \bigg(            
        {\int_{\mathds S^{d-1}}z_\ell^2 \,\Big(\sum_{i=1}^d\pi_i z_i^2\Big)^{-\frac{d+1}2}\mathrm d\eta(\mathbf{z})}
    \bigg)_\ell\in\mathds{R}^d\\
    \text{is such that }\Omega&(\Pi(\vv{\kappa}))=\mathcal{Z}(\vv{\kappa})\,,\quad \forall\vv{\kappa}>0\,,
\end{align*}
where $\vv{\kappa}>0$ means that all the coordinates $\kappa_i$ of $\vv{\kappa}$ are positive. The next lemma shows that 
\[
\nabla\Xi^\star(\mathcal{Z}(\vv{\kappa}))=\Pi(\vv{\kappa})\,,
\]
the minimum of~\eqref{eq:convex_program_invert_nabla_psi} exists, is unique, and $\pi=\bar\pi$.
\begin{lemma} \label{lem:Psi_strongly_concave}
The function $\Xi$ is strictly concave on a open convex set, see \cite[Proposition 3.1.1, Page 140]{nesterov2018lectures} for a definition. Hence, for all $v\in\mathds R^d$ such that there exists $u_v\in\mathds{R}_+^d$ with $v=\nabla\Xi(u_v)$, it holds that $\nabla\Xi^\star(v)$ is the unique solution to the following convex program 
\begin{align*}
    \nabla\Xi^\star(v)=\arg\min_{u\in\mathds R_+^d}\big\{\langle v,u\rangle -\Xi(u)\big\}\,,
\end{align*}
and $\nabla\Xi^\star(v)=u_v$. The points $u_v$ and $v$ are conjugates by the diffeomorphisms $\nabla\Xi$ and $\nabla\Xi^\star$. 
\end{lemma}

\begin{proof}
For any unit norm vector $w\in\mathds S^{d-1}$, the Hessian of $\Xi$ at point $u\in\mathds R_+^d$ satisfies
\begin{align*}
        w^\top\nabla^2\Xi(u)w & = -\frac{d+1}{2} \sum_{i,j}\int_{\mathds S^{d-1}}w_iz_i^2w_jz_j^2 \,(\sum_{k=1}^du_k z_k^2)^{-\frac{d+3}2}\mathrm d\eta(\mathbf{z})\\
        &= -\frac{d+1}{2} \int_{\mathds S^{d-1}}(\sum_{i=1}^dw_iz_i^2)^2 \,(\sum_{k=1}^du_k z_k^2)^{-\frac{d+3}2}\mathrm d\eta(\mathbf{z})<0
\end{align*}
Hence $\Xi$ is a strictly concave function. The uniqueness and existence follows from standard results in convex optimization \citep[Theorem 2.1.1]{nesterov2018lectures}. %The conjugacy is a standard result of Fenchel dual functions theory.
\end{proof}

Now, we dig into bounding the Hessian of $\Theta$ with the following lemma.
\begin{lemma}
\label{lem:smooth_param_express}
Let $0<a<b<\infty$. Restricted to the open convex set $(a,b)^d$, the function $\Xi$ is $\beta$-smooth and $\alpha$-strongly convex \citep[Proposition 2.1.11, Page 75]{nesterov2018lectures} with $\alpha={b^{-\frac{d+3}2}}/{2}$ and $\beta = ({d(d+1)}/{2}) {a^{-\frac{d+3}2}}\int_{\mathds S^{d-1}}z_1^4 \mathrm d\eta(\mathbf{z})$.
\end{lemma}

\begin{proof}
    For any unit norm vector $w\in\mathds S^{d-1}$, the Hessian of $\Xi$ at point $u\in\mathds R_+^d$ satisfies
    \begin{align*}
        w^\top\nabla^2\Xi(u)w %& = -\frac{d+1}{2} \sum_{i,j}\int_{\mathds S^{d-1}}w_iz_i^2w_jz_j^2 \,(\sum_{k=1}^du_k z_k^2)^{-\frac{d+3}2}\mathrm d\eta(\mathbf{z})\\
        &= -\frac{d+1}{2} \int_{\mathds S^{d-1}}(\sum_{i=1}^dw_iz_i^2)^2 \,(\sum_{k=1}^du_k z_k^2)^{-\frac{d+3}2}\mathrm d\eta(\mathbf{z})\\
        &\leq -\frac{d+1}{2}b^{-\frac{d+3}2}\int_{\mathds S^{d-1}}(\sum_{i=1}^dw_iz_i^2)^2\mathrm d\eta(\mathbf{z})\\
        &\leq -\frac{d+1}{2}b^{-\frac{d+3}2}\min_{\omega\in\mathds R^d}\int_{\mathds S^{d-1}}(\sum_{i=1}^d\omega_iz_i^2)^2\mathrm d\eta(\mathbf{z})
        \\
        &=-\frac{d+1}{2d}b^{-\frac{d+3}2}\\
        &\leq -\frac{b^{-\frac{d+3}2}}{2}
    \end{align*}
    Hence $\Xi$ is a strongly concave function. Now, for any unit norm vector $w\in\mathds S^{d-1}$, the Hessian of $\Xi$ at point $u\in\mathds R_+^d$ satisfies
    \begin{align*}
        w^\top\nabla^2\Xi(u)w %& = -\frac{d+1}{2} \sum_{i,j}\int_{\mathds S^{d-1}}w_iz_i^2w_jz_j^2 \,(\sum_{k=1}^du_k z_k^2)^{-\frac{d+3}2}\mathrm d\eta(\mathbf{z})\\
        &= -\frac{d+1}{2} \int_{\mathds S^{d-1}}(\sum_{i=1}^dw_iz_i^2)^2 \,(\sum_{k=1}^du_k z_k^2)^{-\frac{d+3}2}\mathrm d\eta(\mathbf{z})\\
        &\geq -\frac{d+1}{2} {a^{-\frac{d+3}2}}\int_{\mathds S^{d-1}}(\sum_{i=1}^dz_i^4) \mathrm d\eta(\mathbf{z})\\
        &=-\frac{d(d+1)}{2} {a^{-\frac{d+3}2}}\int_{\mathds S^{d-1}}z_1^4 \mathrm d\eta(\mathbf{z})
    \end{align*}
    Hence $\Xi$ is a smooth function.
\end{proof}

\noindent
To finish the proof, apply \cite[Theorem 2.1.15, Page 81]{nesterov2018lectures} and the only technical point is to prove that the gradient descent stays in $(a,b)^d$. This is clear since at each step the distance to the minimum decreases and the set $(a,b)^d$ is convex.
%By a standard result on gradient descent for smooth and strongly convex functions, one has the result:
\end{proof}

\paragraph{Consequences and practical reading}
The inversion theorem has two immediate consequences.

\emph{Identifiability and algorithmics:} for any observed Palm normalized-gradient eigenvalues $\mathcal Z(\vv{\kappa})$ there is a unique $\pi=\Pi(\vv{\kappa})$ solving the convex program, hence a unique $\vv{\kappa}$ by the closed-form inversion \eqref{eq:inversion_formula_pi}. Moreover, plain gradient descent with a constant step enjoys global $Q$-linear convergence with an explicit rate, so the map $\mathcal Z\mapsto\vv{\kappa}$ can be computed robustly.

\emph{Stability:} the rate depends only on the smoothness/strong-convexity moduli on the box $(a,b)^d$ through $Q=\beta/\alpha$, hence the conditioning is explicit; larger boxes (or larger anisotropy) increase $Q$, but the convergence remains linear.

\emph{Ill-conditioned problems:} Regarding the hypothesis $\pi\in(a,b)^d$ in terms of the model eigenvalues $(\kappa_i)$, recall that
\[
\pi_i \;=\; \frac{s(\vec{\kappa})}{\kappa_i^2},\qquad
s(\vec{\kappa}):=\Bigg(\int_{\mathbb S^{d-1}}\Big(\sum_{j=1}^d\frac{z_j^2}{\kappa_j^2}\Big)^{-\frac{d+1}{2}}\,\mathrm d\eta(\mathbf z)\Bigg)^{\!\frac{2}{d+1}}.
\]
Let $\kappa_{\min}:=\min_i\kappa_i>0$ and $\kappa_{\max}:=\max_i\kappa_i$. Since
$\sum_j z_j^2/\kappa_j^2\in[1/\kappa_{\max}^2,\,1/\kappa_{\min}^2]$ on $\mathbb S^{d-1}$, one has
$s(\vec{\kappa})\in[\kappa_{\min}^2,\kappa_{\max}^2]$ and therefore, for all $i \in \{1, \dots, d\}$,
\[
\frac{1}{r^2}=\frac{\kappa_{\min}^2}{\kappa_{\max}^2}\;\le\;\pi_i\;\le\;\frac{\kappa_{\max}^2}{\kappa_{\min}^2}=r^2\quad\text{where}\quad r := \frac{\kappa_{\max}}{\kappa_{\min}} \ge 1\,.
\]
The previous inequality implies that the true parameter $\pi$ lies within the interval $[r^{-2}, r^2]$. Consequently, the optimization box $(a,b)^d$ must satisfy $a < r^{-2}$ and $b > r^2$.

While the ratio $r$ is fixed for any specific realization of the field, it influences the convergence rate of the estimator. A convenient choice of bounds that guarantees the inclusion of $\pi$ is to set $a$ and $b$ such that the ratio $b/a$ is of the order of $r^4$ (for instance $a = \frac{1}{2}r^{-2}$ and $b = 2r^2$). With such a choice, the condition number $Q = \beta/\alpha$ derived in Lemma \ref{lem:smooth_param_express} satisfies:
\[
Q \;=\; C_d \left(\frac{b}{a}\right)^{\frac{d+3}{2}} \;=\; \mathcal{O}_{r\shortrightarrow\infty}\big(r^{2(d+3)}\big),
\]
where $C_d$ is a positive constant depending only on the dimension $d$. This yields the following convergence bound:
\[
\|u^{(k)}-\pi\|\le\Big(\frac{Q-1}{Q+1}\Big)^k\|u^{(0)}-\pi\|
\quad\text{with}\quad
\frac{Q-1}{Q+1}=1-\frac{2}{C_d}r^{-2(d+3)}+o_{r\shortrightarrow\infty}\big(r^{-2(d+3)}\big)
\,.
\]
This result demonstrates that higher anisotropy (a larger $r$) increases the condition number $Q$ polynomially, which slows down the linear convergence rate, but does not affect the uniqueness of the solution nor the global convergence of the algorithm. In practice, if prior bounds on $(\kappa_i)$ are unavailable, one may select a sufficiently wide box (e.g., $a\ll 1 \ll b$) and project the gradient steps onto $[a,b]^d$; the theorem holds as long as the true $\pi$ is contained within $(a,b)^d$.

\subsection{Consistency and asymptotic distributions of estimators} \label{s:coco}
Recall that $\cT_n=(-n,n)^d, \ n\in\N$, and define the contour integral
\begin{equation*} \label{eq:non_asymptotic_Palm}
    I_n(f) :=(2n)^{-d} \int_{\lev(\cT_n)} f\left(\frac{X'(t)}{|X'(t)|}\right) \ \d\cH^1(t),
\end{equation*}
where $f(\cdot) $ is a bounded, continuous real-valued function defined on the sphere $\mathds S^{d-1}$. 
Note that choosing $f(\zeta_1,\zeta_2)=\zeta_1$ (resp. $\zeta_2$) we get $\cC(\cT_n)$ (resp. $\cS(\cT_n)$) in dimension $2$, while choosing $f(\zeta_1,\dots,,\zeta_d)=\zeta_i\zeta_j$ we get $\cC_{ij}(\cT_n)$ in dimension $d\geq2$, see \eqref{e:cij}. Choosing $f=1$ yields $|\lev(\cT_n)|$ in any dimension. In \citep[Th. 4.1]{berzin} it is proved that $I_n(f)$ converges almost surely towards $\E[I_1(f)]$\footnote{In \citep{berzin} Theorems 4.1 and 4.7 are written for $\R^2$, but it is straightforward to see that they hold in $\R^d$. The restriction to dimension $2$ in \citep{berzin} comes from the solution to the implicit equation defining the estimators, see her Remark on page 8.}. As a corollary we get:
\begin{theorem}
With the notation of Section \ref{s:framework} and Assumption~\eqref{eq:assumption}, if the covariance $r(\cdot)$ of $X(\cdot)$ vanishes at infinity $($i.e., $r(t)\to0$ as $\|t\|\to\infty)$, then 
\begin{enumerate}
 \item When $d=2$ and $\kappa\neq0$, $\widehat{\theta}_0$ and $\widehat{\kappa}_C$ defined in~\eqref{e:theta:hat} and \eqref{e:k:hat} are strongly consistent estimators of~$\theta_0$ and $\kappa$ respectively. 
 \item When $d\geq 2$, the matrix defined by the entries $\Big(\frac{\cC_{ij}(\cT_n)}{|\lev(\cT_n)|}\Big)_{i,j=1,\dots,d}$, see \eqref{e:cij}, is a strongly consistent estimator of the matrix ${\bm P}^\top \diag(\mathcal Z(\vv\kappa)){\bm P}$.
\end{enumerate}
\end{theorem}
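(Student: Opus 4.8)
The plan is to reduce everything to the almost-sure law of large numbers for contour integrals recalled just above, namely \citep[Th.~4.1]{berzin}: under the assumed decay $r(t)\to0$ at infinity, $I_n(f)\to\E[I_1(f)]$ almost surely for every bounded continuous $f:\mathds S^{d-1}\to\R$. The skeleton is then: (i) specialize $f$ to the functions producing $|\lev(\cT_n)|$, $\cC(\cT_n)$, $\cS(\cT_n)$ and the $\cC_{ij}(\cT_n)$; (ii) form the relevant ratios, which requires the limiting denominator $\E[I_1(1)]$ to be strictly positive; (iii) identify the limits through the Kac--Rice formula \eqref{e:peri}, the Caba\~na identities \eqref{e:cabana}, and the Palm identity \eqref{eq:covariance_Palm}; (iv) push the convergence through the post-processing maps ($g^{-1}$ and the half-arctangent) by the continuous mapping theorem.

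First, Assumption~\eqref{eq:assumption} forces $\E\|X'(0)\|>0$ (non-degeneracy of $\Lambda$) and $\sigma^{-1}\phi((u-\mu)/\sigma)>0$, so by \eqref{e:peri} one has $\E[I_1(1)]=2^{-d}\,\E[|\lev(\cT_1)|]>0$. Hence, on the almost-sure event where $I_n(1)=(2n)^{-d}|\lev(\cT_n)|$ and $I_n(\zeta_i\zeta_j)=(2n)^{-d}\cC_{ij}(\cT_n)$ simultaneously converge, the ratio $\cC_{ij}(\cT_n)/|\lev(\cT_n)|=I_n(\zeta_i\zeta_j)/I_n(1)$ converges to $\E[I_1(\zeta_i\zeta_j)]/\E[I_1(1)]$. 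By the definition of $\PnPalm$ in Theorem~\ref{thm:Palm_normal} (applied with $\cT=\cT_1$ and $\varphi(\zeta)=\zeta_i\zeta_j$), this limit is $\EnPalm(\zeta_i\zeta_j)$, which by \eqref{eq:covariance_Palm} equals the $(i,j)$ entry of ${\bm P}^\top\diag(\mathcal Z(\vv\kappa)){\bm P}$. Intersecting the finitely many almost-sure events over $i,j\in\{1,\dots,d\}$ yields the claimed entrywise --- hence matrix-norm --- almost-sure convergence, which is part~2.

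For part~1, the same LLN applied to the bounded continuous functions $f(\zeta)=\zeta_1^2-\zeta_2^2=\cos2\Theta$ and $f(\zeta)=2\zeta_1\zeta_2=\sin2\Theta$, together with \eqref{e:cabana}, gives
\[
\Big(\frac{\cC(\cT_n)}{|\lev(\cT_n)|},\ \frac{\cS(\cT_n)}{|\lev(\cT_n)|}\Big)\ \longrightarrow\ g(\kappa)\,\big(\cos2\theta_0,\ \sin2\theta_0\big)\qquad\text{a.s.}
\]
Since $\kappa\neq0$, Lemma~\ref{l:g} together with $g(0)=0$ and the strict monotonicity of $g$ established in Section~\ref{s:palm:calcul} gives $g(\kappa)>0$, so the limit vector is nonzero; consequently $\cF\to g(\kappa)$ a.s. As $g(\kappa)$ then lies in the interior of the range of $g$ and $g^{-1}$ is continuous there, $\widehat\kappa_C=g^{-1}(\cF)\to g^{-1}(g(\kappa))=\kappa$ a.s. (for $n$ large $\cF$ is in the range of $g$, so the convention chosen for $\widehat\kappa_C$ off-range is irrelevant to the limit). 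For the angle, the two-argument arctangent $(x,y)\mapsto\tfrac12\arg(x+\mathrm iy)$ is continuous at the nonzero point $g(\kappa)(\cos2\theta_0,\sin2\theta_0)$, up to the intrinsic $\pi$-periodicity of a line direction, so $\widehat\theta_0\to\theta_0$ a.s.; the branch point $\cos2\theta_0=0$ of the one-argument $\arctan$ in \eqref{e:theta:hat} is handled by using $\mathrm{atan2}$, which only relabels representatives of $\theta_0\bmod\pi$.

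The argument is largely mechanical once Berzin's theorem is invoked; the only genuinely delicate points are in step (iv): verifying that $g(\kappa)>0$ whenever $\kappa\neq0$ (so that both the angle is well defined and $g(\kappa)$ sits strictly inside the range of $g$, making $g^{-1}$ continuous there), and disposing of the arctangent branch issue in the degenerate direction $\cos2\theta_0=0$.
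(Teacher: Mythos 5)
Your proof is correct and follows exactly the route the paper intends: the theorem is stated there as a corollary of Berzin's almost-sure convergence $I_n(f)\to\E[I_1(f)]$, and your write-up simply supplies the details (choice of test functions, positivity of the limiting denominator $\E\|X'(0)\|\,\sigma^{-1}\phi((u-\mu)/\sigma)>0$, identification of the limits via the Palm/Caba\~na identities, and the continuous mapping step through $g^{-1}$ and the half-angle arctangent). Your attention to the two delicate points --- $g(\kappa)>0$ when $\kappa\neq0$ and the $\arctan$ branch at $\cos 2\theta_0=0$ --- is a welcome addition that the paper leaves implicit.
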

\noindent 
Furthermore, \cite{berzin} proves a Central Limit Theorem for the functionals $I_n$.
\begin{theorem} {\cite[Th. 4.7]{berzin}} \label{t:coco}
With the notation of Section \ref{s:framework} and Assumption~\eqref{eq:assumption}, assume that there exists an integrable function $\Psi:\R^d\to\R$ vanishing at infinity which dominates each derivative of the covariance $r(\cdot)$ of $X(\cdot)$ up to the second order, namely, 
\[ 
  |r(t)|, |r_i(t)|, |r_{ij}(t)|\le \Psi(t),\ t\in\R^d,\  i,j=1,\dots,d.
\]
Then, for any bounded, continuous real-valued function $f$ defined on $\mathds S^{d-1}$, there exists $v(f)\in[0,\infty)$ such that 
\[
  J_n(f):= (2n)^{d/2} (I_n(f) - \E(I_n(f))) \Rightarrow  \mathcal N(0, v^2(f)),
\]
where $ \Rightarrow $ denotes the convergence in distribution. 
\end{theorem}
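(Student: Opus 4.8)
\emph{Proof idea.} We reconstruct the argument of \citet{berzin}. Set $Z(t):=(X(t),X'(t))\in\R^{d+1}$, a centred stationary Gaussian field that is nondegenerate at each point ($\sigma^2>0$, $\Lambda\succ0$, and $X(0)\perp X'(0)$ by stationarity). By the coarea formula, with $g(y):=|y|\,f\big(y/|y|\big)$ (continuous on $\R^d$, $|g(y)|\le\|f\|_\infty|y|$) one has, in the sense of distributions in the level $u$,
\[
\int_{\lev(\cT_n)} f\big(X'(t)/|X'(t)|\big)\,\d\cH^{d-1}(t)=\int_{\cT_n}\delta_u\big(X(t)\big)\,g\big(X'(t)\big)\,\d t .
\]
Fix a nonnegative $\mathcal C^\infty$ kernel $\varphi$ with $\int\varphi=1$ and compact support, put $\varphi_\ve:=\ve^{-1}\varphi(\cdot/\ve)$, and define the regularized functional
\[
I_n^\ve(f):=(2n)^{-d}\int_{\cT_n}\varphi_\ve\big(X(t)-u\big)\,g\big(X'(t)\big)\,\d t=(2n)^{-d}\int_{\cT_n}G_\ve\big(Z(t)\big)\,\d t,\quad G_\ve(x,y):=\varphi_\ve(x-u)g(y),
\]
a genuine additive functional of $Z$; since $G_\ve\in L^2$ of the law of $Z(0)$, the Wiener-chaos machinery applies to it.

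\emph{CLT for the regularized functional.} Let $\bar G_\ve:=G_\ve-\E[G_\ve(Z(0))]$, which has Hermite rank $\ge1$. The hypothesis $|\partial^\alpha r(t)|\le\Psi(t)$ for $|\alpha|\le2$ with $\Psi\in L^1(\R^d)$ says exactly that every entry of the covariance function $t\mapsto\E[Z(0)Z(t)^\top]$ lies in $L^1(\R^d)$, which is precisely the summability condition in the multivariate Breuer--Major/Arcones central limit theorem for functionals of vector-valued stationary Gaussian fields. Applying that theorem yields
\[
J_n^\ve(f):=(2n)^{d/2}\big(I_n^\ve(f)-\E[I_n^\ve(f)]\big)\ \Longrightarrow\ \mathcal N\big(0,v_\ve^2(f)\big),\qquad v_\ve^2(f)=\int_{\R^d}\Cov\big(\bar G_\ve(Z(0)),\bar G_\ve(Z(t))\big)\,\d t\in[0,\infty),
\]
the integral being absolutely convergent by the same $L^1$ domination (expand $\bar G_\ve$ in Hermite polynomials of $Z(0)$ and use Arcones' inequality chaos by chaos).

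\emph{Removing the regularization.} It remains to pass $\ve\to0$ uniformly in $n$. The crux is the Kac--Rice second-moment bound
\[
\sup_{n\in\N}\ \E\Big[(2n)^{d}\big(I_n^\ve(f)-I_n(f)\big)^2\Big]\longrightarrow0\quad\text{as }\ve\to0 .
\]
To obtain it, expand the left-hand side as a double integral over $\cT_n\times\cT_n$ of $(\varphi_\ve-\delta_u)^{\otimes2}$ tested against $\E\big[g(X'(s))\,g(X'(t))\mid X(s),X(t)\big]\,p_{(X(s),X(t))}$, and split into the far region $\{|s-t|>\rho\}$ and the near-diagonal region $\{|s-t|\le\rho\}$. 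On the far region $(X(s),X(t))$ is uniformly nondegenerate because $r(t)\to0$ (forced by $\Psi\in L^1$), so conditional moments and joint densities are bounded, the $\ve\to0$ passage is by dominated convergence, and normalisation by $(2n)^{-d}$ together with $\Psi\in L^1$ keeps the total mass $O(1)$. On the near-diagonal region the Geman-type nondegeneracy condition --- automatic here since $X(\cdot)$ is $\mathcal C^2$ and $\Lambda\succ0$ --- bounds the Kac--Rice density and produces a contribution $O(\rho)$ uniformly in $\ve$ and $n$; letting first $\ve\to0$ and then $\rho\to0$ finishes the bound. In particular $\E[I_n^\ve(f)]\to\E[I_n(f)]$ and $v_\ve^2(f)$ is Cauchy in $\ve$, hence $v_\ve^2(f)\to v^2(f)\in[0,\infty)$; a routine Slutsky/$3\ve$ argument combining the CLT for $J_n^\ve(f)$ with this uniform $L^2$ approximation upgrades the convergence to $J_n(f)\Rightarrow\mathcal N(0,v^2(f))$.

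\emph{Main obstacle.} The delicate point is the Kac--Rice second-moment bound uniform in $n$ and $\ve$: controlling the near-diagonal singularity of $\E\big[\,|X'(s)|\,|X'(t)|\mid X(s)=X(t)=u\,\big]\,p_{(X(s),X(t))}(u,u)$ and integrating the off-diagonal tail. This is the classical difficulty in second-moment/variance computations for level-set functionals, and it is precisely where the $\mathcal C^2$/non-degeneracy assumption and the $L^1$ domination of $r$ together with its first two derivatives are both used; everything else (chaos expansion, Arcones-type CLT in the $L^1$-covariance regime, Slutsky patching) is standard once this bound is secured.
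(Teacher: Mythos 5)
This theorem is not proved in the paper: it is imported verbatim from \citet{berzin}, and the paper's only accompanying material is a remark describing that proof as a chaotic (Hermite) expansion combined with the Peccati--Tudor method and the Arcones inequality, with the $L^1$ domination by $\Psi$ used to control integrals of products of covariance derivatives. Your reconstruction — coarea representation, mollification, a Breuer--Major/Arcones CLT for the regularized functional, and removal of the regularization via uniform Kac--Rice second-moment bounds — is essentially that same route, and you correctly isolate the genuinely delicate step (the near-diagonal second-moment control uniform in $n$ and $\ve$) rather than glossing over it.
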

It is worth mentioning that, assuming in addition that $\int_{\R^d} r(t) >0$ and that $f$ is of constant sign, Berzin proves that the asymptotic variance $v$ of $J_n$ in dimension $2$ is strictly positive.

Some remarks and examples are in order. Though the assumptions on the covariance seem quite awful, on the one hand they appear naturally in the proofs and on the other hand they are satisfied by most of the usual random fields.
\begin{remark}{\bf Comments on the hypotheses and the proof of Theorem \ref{t:coco}.}
In the first place we would like to mention that in the literature there exist variations of the so-called Arcones type condition, that is, on the integrability of the function $\Psi$, replacing $L^1$ by $L^2$ or even by $L^4$. Secondly, the domination of the covariance and its derivatives by such a function $\Psi$ implies that they are in $L^p(\R^d)$ for all $p\in\N$. By the inversion formula, $X(\cdot)$ admits a continuous spectral density which is positive at $0$. Finally, from the methodological point of view, Berzin's proof is based on the Peccati-Tudor method \citep{peccati-tudor}, see also \citep{nourdin-peccati}, which consists on expanding $J_n$ in the so-called chaotic expansion and applying a simplified version of the classical Moments Method. The Arcones inequality \citep[L.1.]{arcones} plays a key role. This method yields that, in order to state the finiteness of the asymptotic variance and the asymptotic normality of $J_n$, it suffices to control integrals of products of derivatives of the covariance $r$. The domination by $\Psi$ allows one to control these integrals. 
\end{remark}

\begin{remark}{\bf Examples of covariances verifying the hypotheses of Theorem \ref{t:coco}.}
The hypotheses of the theorem are quite weak in the sense that they are satisfied by most of the usual Gaussian random fields. The following list is not exhaustive. \cite{berzin} points out the \emph{Powered exponential covariance} $r(t)=C\exp\{-\alpha\|t\|^2\}$, $C,\alpha>0$, including the celebrated \emph{Bargmann-Focks} model; the \emph{Generalized Cauchy covariance} $r(t)= C(\alpha^2+\|t\|^2)^{-\nu}$, $C,\alpha,\nu>0$ and the \emph{Whittle-Matérn covariance} $r(t)=C(\alpha\|t\|)^\nu K_\nu(\alpha\|t\|)$, $C,\alpha>0$ and $\nu>2$. 
Besides, in the context of random waves on~$\R^3$, \cite{del} presents some Gaussian random fields including the \emph{Gamma type} covariance $r(x)=\frac 1p(1+\frac{|x|^2}{\beta^2})^{-p}\,\sum_{1\le j\le p;\,j\,odd}(-1)^{(j-1)/2}\,\binom{p}{j}\,\beta^{-(j-1)}\,|x|^{j-1}$, $p\in\N$, $\beta>0$ and $p\geq d$, which is a variation of the Generalized Cauchy covariance and satisfy these hypotheses; and the \emph{Black Body Radiation} and the \emph{Monochromatic Random Waves} models which covariances are not integrable, but they still provide us further examples by taking convenient powers of their covariances. 
\end{remark}

Theorem \ref{t:coco} is sufficient to obtain, by polarization, the joint normality of the integrals involved in the Contour-Method and in its generalization to higher dimensions. This gives by the Delta-method \cite[Ch.3]{vandervaart} the normality of the estimator $\widehat\kappa_{\rm{C}}$ given by \eqref{e:k:hat}.
\begin{corollary}
\label{cor:assymptotic_normality}
Under the conditions of Theorem \ref{t:coco}, we have in dimension $2$ that
\begin{enumerate}
\item $\cC(\cT_n)$, $\cS(\cT_n)$ and $\lev(\cT_n)$ are asymptotically jointly Gaussian.
\item $\cC(\cT_n)$ and $\cS(\cT_n)$ are asymptotically independent.
\item If $\kappa\neq0$, the estimator $\widehat\kappa_{\rm{C}}$  given by \eqref{e:k:hat} is asymptotically normal.
\item $Q_{n,N}$ defined in \eqref{e:qn} is asymptotically $\chi^2(2)$. 
\end{enumerate}
Furthermore, in higher dimension $d$ we have
\begin{enumerate}[resume]
 \item The $\cC_{ij}(\cT_n):1\leq i,j\leq d$ defined in \eqref{e:cij} are asymptotically jointly Gaussian. 
\end{enumerate}
\end{corollary}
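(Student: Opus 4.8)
The statement to prove is Corollary~\ref{cor:assymptotic_normality}, which collects five asymptotic facts about the contour integrals. The overarching strategy is that every item reduces, via polarization and the Delta method, to the multivariate CLT of Theorem~\ref{t:coco} applied to suitable finite families of bounded continuous test functions on $\mathds S^{d-1}$. First I would set up the joint convergence: for any finite collection $f_1,\dots,f_m$ of bounded continuous functions on $\mathds S^{d-1}$ and any linear combination $f=\sum_k \lambda_k f_k$, Theorem~\ref{t:coco} gives $J_n(f)\Rightarrow\mathcal N(0,v^2(f))$; since $J_n$ is linear in $f$, the Cramér--Wold device yields joint asymptotic normality of $(J_n(f_1),\dots,J_n(f_m))$ with covariance obtained by polarizing $v^2$. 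Applying this with the triple $f\in\{\zeta_1,\zeta_2,1\}$ (recall $I_n(\zeta_1)=\cC(\cT_n)/(2n)^d$ etc.) gives item~(1), and with the family $\{\zeta_i\zeta_j:1\le i\le j\le d\}$ gives item~(5).

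For item~(2), asymptotic independence of $\cC(\cT_n)$ and $\cS(\cT_n)$, it suffices to show the limiting covariance $\lim_n\Cov(J_n(\zeta_1),J_n(\zeta_2))$ vanishes. By polarization this equals $\tfrac14(v^2(\zeta_1+\zeta_2)-v^2(\zeta_1-\zeta_2))$, so I would argue it is zero by a symmetry/parity argument: under the Palm law the normal angle density is $C_\kappa(1-\kappa^2\cos^2(\theta-\theta_0))^{-3/2}$, and the covariance structure of the field's gradient field is even under the reflection that sends $\theta_0$ fixed directions appropriately; more concretely, one invokes the explicit form of the asymptotic variance from Berzin's chaotic expansion, where the cross-term integrand involves odd trigonometric moments that integrate to zero. (If a cleaner route is available, one can note that in the isotropic-limit computation the cosine and sine integrals pick out orthogonal harmonics, so their Hermite chaos expansions are supported on orthogonal chaos components.) Item~(3) then follows: write $\widehat\kappa_C = g^{-1}(\cF)$ with $\cF$ a smooth function of the asymptotically Gaussian triple $(\cC(\cT_n),\cS(\cT_n),|\lev(\cT_n)|)$, normalized by the nonvanishing quantity $\E(|\lev(\cT_1)|)>0$; since $g$ is invertible (Lemma~\ref{l:g}) with nonzero derivative at $\kappa\ne0$, the Delta method \cite[Ch.3]{vandervaart} transfers asymptotic normality. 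One must check $g'(g^{-1}(\cdot))\ne0$ and that the relevant asymptotic variance is positive, which uses Berzin's positivity remark (valid for $f$ of constant sign, applicable after a harmless reparametrization, or one simply assumes $\kappa\ne0$ so the limit is genuinely Gaussian and nondegenerate).

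Item~(4), that $Q_{n,N}$ defined in~\eqref{e:qn} is asymptotically $\chi^2(2)$, is the substantive one and I expect it to be the main obstacle, since it is the heart of the model-agnostic test: here the variance appearing in Theorem~\ref{t:coco} is unknown and must be estimated from the cell-wise normals of the partition. The plan is to show the vector $(J_n(\zeta_1),J_n(\zeta_2))/(2n)^{d/2}$, suitably recentered using that $\EPalm[\cos 2\theta]=\EPalm[\sin 2\theta]=0$ under quasi-isotropy~\eqref{e:h0}, converges to a bivariate Gaussian with covariance $v^2 I_2$ (using items~(1)--(2) and that by isotropy $v^2(\zeta_1)=v^2(\zeta_2)$ and the cross-covariance vanishes), while the cell-based estimator $\widehat v^2_{n,N}$ built from the $N$ cells converges in probability to $v^2$ — this last step being an ergodic/law-of-large-numbers argument over the cells (the cells are asymptotically independent blocks because $r$ vanishes at infinity, so a block-variance estimator is consistent), combined with Slutsky's lemma to conclude $Q_{n,N}=\|(J_n(\zeta_1),J_n(\zeta_2))\|^2/\widehat v^2_{n,N}\Rightarrow\chi^2(2)$. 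The delicate points will be: (a) controlling the joint limit $n\to\infty$ together with the number of cells $N$ (either $N$ fixed, or $N=N(n)\to\infty$ slowly, matching whatever regime \eqref{e:qn} uses); (b) verifying that the cell-wise contour integrals are, after normalization, asymptotically i.i.d. Gaussian so their empirical variance is consistent; and (c) ensuring $v^2>0$ so the ratio is well-defined, again via Berzin's positivity argument. Full details of (a)--(c) belong to Section~\ref{s:chi2}, and here I would only state the reduction and cite that section.
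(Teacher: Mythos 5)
Your plan is essentially identical to the paper's own (very brief) justification: the paper obtains items (1), (2) and (5) from Theorem~\ref{t:coco} by exactly the polarization/Cram\'er--Wold step you describe, item (3) by the Delta method via the invertibility of $g$, and it likewise delegates item (4) to the cell-wise variance estimator and Slutsky argument of Section~\ref{s:chi2}. No substantive difference in route.
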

%%%% LKC

Let us finish this section reviewing the LKC case. The (separate) asymptotic normality of the Lipschitz-Killing curvatures was proved in \citep{hung} in the case of the volume, in \citep{estrade-leon} for the Euler-Poincaré Characteristic and in \citep{muller2017} and \citep{kratz-Vadlamani} for the remaining curvatures. However, the limit joint normality of all the LKC seems more challenging, as noted in \citep{kratz-Vadlamani}, and it has not been addressed in the literature so far. In particular, the integral formulas representing the volume on one side and the remaining LKC on the other are of different nature. It is our intention to generalize, in a future paper, Theorem 4.7 in \citep{berzin} to an integral of the form 
\[
  J_n(f) :=(2n)^{-d} \int_{\lev(\cT_n)}  f(X'(t), X''(t)) \ \d\cH^1(t).
\]
This would permit us to obtain the joint normality of the quantities considered here, possibly with the exception of the volume.

\subsection{Testing procedure of anisotropy}  \label{s:chi2}

To apply the results of Berzin, we assume the conditions of Theorem \ref{t:coco}. All asymptotic below are under~$\mathds H_0$ as $n\to\infty$. We restrict to $d=2$ for clarity; higher dimensions are similar.

\begin{theorem}[Asymptotic chi-square limit under $\mathds H_0$]\label{thm:chi2}
Under the assumptions of Theorem~\ref{t:coco} and  for $d=2$, it holds that
\[
\frac{\cC(\cT_n)^2 + \cS(\cT_n)^2}{|\cT_n|} \Rightarrow V^2\,\chi^2(2),
\]
where $V^2$ is a model-dependent variance given in \citep{berzin}.
\end{theorem}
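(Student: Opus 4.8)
The plan is to deduce Theorem~\ref{thm:chi2} directly from the multivariate CLT of Theorem~\ref{t:coco} together with the Kac--Rice expectation formulas~\eqref{e:cabana} specialised to the null hypothesis. First I would observe that under quasi-isotropy~\eqref{e:h0} we have $\kappa=0$, so $g(\kappa)=g(0)=0$ and hence $\E(\cC(\cT_n))=\E(\cS(\cT_n))=0$ for every $n$ (the only surviving harmonic in the Palm density of the angle vanishes, as explained after Theorem~\ref{thm:Palm_gradient}). Consequently the centred and normalised statistics $J_n(\cos 2\Theta)=(2n)\big((2n)^{-2}\cC(\cT_n)\big)$ and $J_n(\sin 2\Theta)$ coincide, up to the factor $(2n)^{-1}$, with $(2n)^{-1}\cC(\cT_n)$ and $(2n)^{-1}\cS(\cT_n)$; equivalently $\cC(\cT_n)/\sqrt{|\cT_n|}$ and $\cS(\cT_n)/\sqrt{|\cT_n|}$ are exactly the objects $J_n(f)$ of Theorem~\ref{t:coco} with $f(\zeta)=\zeta_1$ and $f(\zeta)=\zeta_2$ (in the $2\Theta$ parametrisation), since $\E(I_n(f))=0$ here.

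Next I would assemble the joint limit. Applying Theorem~\ref{t:coco} to the two functions $f_1(\zeta)=\cos 2\Theta$ and $f_2(\zeta)=\sin 2\Theta$ and polarising (as in Corollary~\ref{cor:assymptotic_normality}, items 1--2) gives that $\big(\cC(\cT_n),\cS(\cT_n)\big)/\sqrt{|\cT_n|}$ converges jointly to a centred bivariate Gaussian. The key structural point is that the limiting covariance matrix is a scalar multiple of the identity: this is forced by the rotational symmetry of the null model. Concretely, under~\eqref{e:h0} the Palm law of the angle $\Theta$ is uniform on $(-\pi,\pi]$, hence invariant under $\Theta\mapsto\Theta+\alpha$ for any $\alpha$; this symmetry carries over to the joint law of the pair of centred integrals $(\cC,\cS)$ (a rotation of the field's coordinates rotates $\Theta$ by a fixed angle and mixes $\cos 2\Theta,\sin 2\Theta$ by the rotation of angle $2\alpha$). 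An isotropic-in-law centred Gaussian vector in $\R^2$ must have covariance $V^2\,\I_2$; this $V^2$ is precisely Berzin's $v^2(f_1)=v^2(f_2)$, and the asymptotic independence of $\cC$ and $\cS$ (Corollary~\ref{cor:assymptotic_normality}, item 2) confirms the off-diagonal entry is zero. Then
\[
\frac{\cC(\cT_n)^2+\cS(\cT_n)^2}{|\cT_n|}
=\Big(\tfrac{\cC(\cT_n)}{\sqrt{|\cT_n|}}\Big)^2+\Big(\tfrac{\cS(\cT_n)}{\sqrt{|\cT_n|}}\Big)^2
\Rightarrow V^2\big(G_1^2+G_2^2\big)\overset{d}{=}V^2\,\chi^2(2),
\]
by the continuous mapping theorem, where $(G_1,G_2)\sim\mathcal N(0,\I_2)$.

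I would finish by addressing the one subtlety, namely that Theorem~\ref{t:coco} is stated with the normalisation $J_n(f)=(2n)^{d/2}(I_n(f)-\E I_n(f))$, where $I_n(f)=(2n)^{-d}\int_{\lev(\cT_n)}f\,\mathrm d\cH^1$; a short bookkeeping check ($d=2$, so $(2n)^{d/2}(2n)^{-d}=(2n)^{-1}$ and $(2n)^{-1}=|\cT_n|^{-1/2}$) shows $J_n(f)=(\cC(\cT_n)-\E\cC(\cT_n))/\sqrt{|\cT_n|}$, and the mean term is zero under $\mathds H_0$. The main obstacle, if any, is establishing rigorously that the limiting $2\times2$ covariance is proportional to the identity rather than merely diagonal: the cleanest route is the symmetry argument above (rotational invariance of the null Palm law transported to the functional CLT via stationarity and the rotational equivariance of $\lev(\cT_n)$ under simultaneous rotation of the field), but one could alternatively invoke Berzin's explicit variance formula and verify $v^2(\cos 2\Theta)=v^2(\sin 2\Theta)$ directly from the chaos expansion, using that under isotropy the spectral measure is rotation-invariant. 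Either way, positivity $V^2>0$ follows from Berzin's supplementary hypothesis $\int_{\R^d} r>0$ quoted after Theorem~\ref{t:coco}, which we carry along, so the limit is a genuine (non-degenerate) $\chi^2(2)$ up to the scale $V^2$.
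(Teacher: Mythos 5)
Your overall route is the same as the paper's: Theorem~\ref{thm:chi2} is obtained there exactly as you propose, namely from Theorem~\ref{t:coco} by polarization (items 1, 2 and 4 of Corollary~\ref{cor:assymptotic_normality}) — joint asymptotic normality of $(\cC(\cT_n),\cS(\cT_n))/\sqrt{|\cT_n|}$, vanishing means under $\mathds H_0$ since $g(0)=0$, a limiting covariance $V^2\,\I_2$, and the continuous mapping theorem. Your bookkeeping of the normalisation $(2n)^{d/2}(2n)^{-d}=|\cT_n|^{-1/2}$ for $d=2$ is also correct, as is the positivity of $V^2$ under Berzin's extra condition $\int_{\R^d}r>0$.

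The step you rightly single out as the crux — that the limiting covariance is a scalar multiple of the identity — is, however, not justified by the symmetry argument you give. You transport the rotational invariance of the \emph{one-point} Palm law of $\Theta$ to the joint law of the two functionals via "a rotation of the field's coordinates rotates $\Theta$". That reasoning requires \emph{strict} isotropy: rotating coordinates leaves the law of the whole field invariant only if $r$ is rotation-invariant. The null~\eqref{e:h0} is only \emph{quasi}-isotropy, which constrains the law of $X'(0)$ but not the law of the field; the asymptotic variance $v^2(f)$ is a two-point quantity (an integral over $\R^2$ of covariances of local functionals at lag $s$) depending on $r$ and its derivatives at all lags, which for a quasi-isotropic field such as $r(t_1,t_2)=\rho(t_1)\rho(t_2)$ are not rotation-invariant. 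The same objection applies to your fallback ("the spectral measure is rotation-invariant"), which again presumes strict isotropy. Reflection symmetries of $r$ typically do give $\Cov(\cC,\cS)\to 0$, but the equality $v^2(\cos 2\Theta)=v^2(\sin 2\Theta)$ under mere quasi-isotropy needs to be read off from Berzin's explicit variance formula rather than from a symmetry of the field's law. The paper does not spell this verification out either, so you are no worse off than the authors, but you should not present the rotation argument as a proof under the null as stated.
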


\begin{remark}
The scale factor $V^2$ is a nuisance parameter: it depends on the underlying model but not on the presence of anisotropy under $\mathds H_0$. Estimating $V^2$ is therefore required to obtain a pivotal statistic for testing.
\end{remark}

We estimate $V^2$ by spatial sub-sampling. Consider a quasi-partition of $\cT_n$ into $N^2$ isometric rectangles of volume $|\cT_n|/N^2$, with local statistics $\cC_{i}$ and $\cS_{i}$. A natural (approximately unbiased) estimator is
\begin{subequations}
\begin{equation} \label{e:V}
\hat V^2 = \frac{1}{2(N^2-1)}\sum_{i=1}^{N^2}[(\cC_{i}-\bar \cC)^2+(\cS_{i}-\bar \cS)^2]
\simeq \frac{V^2 |\cT|}{N^2},
\end{equation}
where $\bar \cS$ and $\bar \cC$ denote the means of $(\cS_i)_i$ and $(\cC_i)_i$, respectively and the equivalence is obtained as $n,N \to \infty$  with $N=o(n)$.

\begin{remark}
In practice, choose $N$ so that $N^2$ is large enough (many blocks) while each block still contains many sampling points; this balances the variance and the bias of $\hat V^2$.
\end{remark}

\noindent
As a consequence we have  if we assume that $n,N\to\infty$ with $N=o(n)$
\begin{equation} \label{e:qn}
 Q_{n,N} :=\frac{\cS(\cT_n)^2 + \cC(\cT_n)^2}{N^2 \hat V^2}  \Rightarrow \chi^2(2),
\end{equation}
\end{subequations}
\begin{theorem}[Asymptotic p-value under $\mathds H_0$]\label{thm:pvalue}
Assume~\eqref{eq:assumption}. Let $F_{\chi^2(2)}$ denote the cdf of a $\chi^2$ distribution with 2 degrees of freedom and define the test statistic
\[
\alpha := F_{\chi^2(2)}\big(Q_{n,N}\big).
\]
Under $\mathds H_0$, one has $\alpha \Rightarrow \mathrm{Unif}(0,1)$ as $n,N\to\infty$ with $N=o(n)$. Equivalently, the upper-tail p-value $1-\alpha$ is also asymptotically $\mathrm{Unif}(0,1)$.
\end{theorem}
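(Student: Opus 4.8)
The plan is to chain together the two asymptotic statements already established: the $\chi^2(2)$ limit for the (unnormalized) contour statistic from Theorem~\ref{thm:chi2}, and the consistency of the block estimator $\hat V^2$ from~\eqref{e:V}, so that~\eqref{e:qn} follows by Slutsky's lemma, and then apply the probability integral transform. First I would restate precisely what Theorem~\ref{thm:chi2} gives under $\mathds H_0$: $(\cC(\cT_n)^2+\cS(\cT_n)^2)/|\cT_n|\Rightarrow V^2\chi^2(2)$, equivalently $(\cC(\cT_n),\cS(\cT_n))/\sqrt{|\cT_n|}\Rightarrow V\cdot(Z_1,Z_2)$ with $(Z_1,Z_2)$ standard bivariate normal (using that under $\mathds H_0$ the two integrals are asymptotically independent and equal-variance, which is item~2 of Corollary~\ref{cor:assymptotic_normality} specialized to the isotropic case, and the fact that $\E(\cC(\cT_n))=\E(\cS(\cT_n))=0$ under $\mathds H_0$ since $g(\kappa)=0$ when $\kappa=0$).

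Next I would show $N^2\hat V^2/|\cT_n|\xrightarrow{\ \P\ }V^2$ as $n,N\to\infty$ with $N=o(n)$. Here one writes $\hat V^2=\frac{1}{2(N^2-1)}\sum_{i=1}^{N^2}[(\cC_i-\bar\cC)^2+(\cS_i-\bar\cS)^2]$, where each $\cC_i,\cS_i$ is a contour integral over a block of side $\sim 2n/N\to\infty$. Rescaling, each normalized block statistic $\sqrt{N^2/|\cT_n|}\,\cC_i$ is, by Theorem~\ref{t:coco} applied on the sub-domain, asymptotically $\mathcal N(0,V^2)$; the centering terms $\bar\cC,\bar\cS$ are $O_\P(N^{-1})$ relative to the individual terms and are asymptotically negligible. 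The cross-block covariances decay because the covariance $r$ (hence its derivatives, dominated by the integrable $\Psi$) vanishes at infinity, so distinct blocks become asymptotically independent as their separation grows; a second-moment (variance) computation then shows the empirical average of the $N^2$ block contributions concentrates at its mean $V^2$. This is the step I expect to be the main obstacle: making rigorous the simultaneous double limit $n,N\to\infty$ with $N=o(n)$, i.e. controlling the bias coming from finite block size (edge effects on $\lev$ near block boundaries, and the fact that $\E(I_1(f))$ on a finite block differs from its infinite-volume value) against the variance reduction coming from having $N^2$ blocks. One would invoke the $L^1$-domination hypothesis of Theorem~\ref{t:coco} to bound the mixing-type error terms and a law-of-large-numbers argument (e.g. an $L^2$ bound on $\hat V^2 - \E\hat V^2$ together with $\E\hat V^2\to V^2$) to close it; the blocks being only a \emph{quasi}-partition (leftover strips of lower-order volume) contributes a further lower-order error that must be absorbed.

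Given these two ingredients, Slutsky's lemma yields
\[
Q_{n,N}=\frac{\cS(\cT_n)^2+\cC(\cT_n)^2}{N^2\hat V^2}
=\frac{(\cS(\cT_n)^2+\cC(\cT_n)^2)/|\cT_n|}{N^2\hat V^2/|\cT_n|}
\Rightarrow \frac{V^2\chi^2(2)}{V^2}=\chi^2(2),
\]
which is exactly~\eqref{e:qn} and item~4 of Corollary~\ref{cor:assymptotic_normality}. Finally, since $F_{\chi^2(2)}$ is continuous and strictly increasing on $(0,\infty)$ and $\chi^2(2)$ has a density there, the continuous mapping theorem gives $\alpha=F_{\chi^2(2)}(Q_{n,N})\Rightarrow F_{\chi^2(2)}(W)$ where $W\sim\chi^2(2)$; by the probability integral transform $F_{\chi^2(2)}(W)\sim\mathrm{Unif}(0,1)$, hence $\alpha\Rightarrow\mathrm{Unif}(0,1)$, and $1-\alpha\Rightarrow\mathrm{Unif}(0,1)$ as well since the law is symmetric about $1/2$. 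I would close by remarking that continuity of the limiting cdf is what upgrades convergence in distribution of $Q_{n,N}$ to convergence of $F_{\chi^2(2)}(Q_{n,N})$, so no extra regularity is needed beyond what Theorem~\ref{thm:chi2} already supplies.
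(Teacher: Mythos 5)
Your proposal is correct and follows essentially the same route the paper takes (Theorem~\ref{thm:chi2} for the $V^2\chi^2(2)$ limit, consistency of the block estimator $\hat V^2$ in the regime $N=o(n)$, Slutsky to obtain~\eqref{e:qn}, then the probability integral transform); the paper in fact leaves the consistency of $\hat V^2$ at the level of the heuristic ``$\simeq$'' in~\eqref{e:V}, so your explicit identification of the double-limit bias/variance trade-off as the main technical obstacle is an accurate reading of where the real work lies rather than a deviation from the intended argument.
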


\section{Numerical experiments} 
\label{s:num}

 In this section  we present for the first time a numerical evaluation of the contour method and a comparison with LKC method and the so-called ‘‘full observation'' where one has access to the random field and compute the covariance of the gradient. This  will serve latter as an oracle to compare our methods. All our experiments are publicly available on a Github repository: \url{https://github.com/ydecastro/COMETE-Contour-method-Gaussian-Random-Fields/}.

\subsection{Numerical framework}

We consider stationary Gaussian random fields on the square $\cT=[0,200]^2$ with covariance
\[\Cov(X(0),X(t)) = \exp\Big(-\tfrac12\big(a^2 t_1^2 + a^{-2} t_2^2\big)\Big), \quad a>1.\]
By direct differentiation, $\kappa_1^2=a^2$, $\kappa_2^2=a^{-2}$ and thus $\kappa^2 = 1-a^{-4}$. We explore three anisotropy levels corresponding to $\kappa\in\{0,0.5,0.9\}$. Without loss of generality, we set $\mu=0$, $\sigma=1$, and consider levels $u\in\{0,1,2\}$. We draw $2,000$ realizations under $\mathds H_0$ and $5,000$ under the alternative. Each realization is observed on a $1,000\times1,000$ grid; the level set is extracted with the Python \texttt{Contour} routine, and $10^7$ points are equispaced on the level set (with respect to the curvilinear abscissa) to compute integrals such as $\cS$ (and $\cC$) and the Gaussian curvature.

\paragraph{Computing level sets and normals.}
We extract the level set $\mathcal{L}(\mathcal{T}) = \{t \in \mathcal{T} : X(t) = u\}$ from the discretized random field using the \texttt{contour} routine from the \texttt{matplotlib} library, which applies the marching squares algorithm to generate polygonal approximations of the iso-contours. This procedure yields a collection of paths (sequences of vertices). To ensure high-precision numerical computation of the integrals $\mathcal{C}$ and $\mathcal{S}$, we resample these polygonal paths uniformly with respect to the arc-length, typically targeting a total of $10^7$ points across the observation window. At each point $t$ on the resampled curve, the unit tangent vector is computed via centered finite differences of the neighboring vertices, and the unit normal vector $N(t)$ is obtained by a rotation of $\pi/2$. We note that while the marching squares algorithm generally preserves the orientation of closed loops (gradient direction), the statistics $\mathcal{C}$ and $\mathcal{S}$ depend on $2\Theta(t)$ and are therefore invariant to a global sign flip of the normal $N(t) \to -N(t)$, making the method robust to orientation conventions.

\paragraph{Computing the Lipschitz–Killing curvatures.} The accuracy of the LKC method relies heavily on how the Euler characteristic (the third Lipschitz–Killing curvature in 2D) is estimated. While the standard topological approach—counting connected components minus holes—is intuitive, it proves numerically unstable on discrete grids due to sensitivity to local noise. Instead, we adopt a geometric approach based on the Gauss–Bonnet theorem. By computing the integral of the Gaussian curvature along the boundary as in \eqref{e:gauss}, we obtain a significantly more robust estimator.

  \begin{figure}[!ht]
     \centering
     \includegraphics[width=0.38\linewidth]{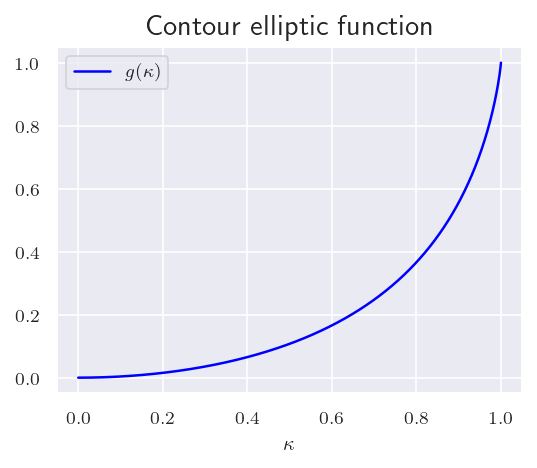}\hspace*{1cm}
     \includegraphics[width=0.38\linewidth]{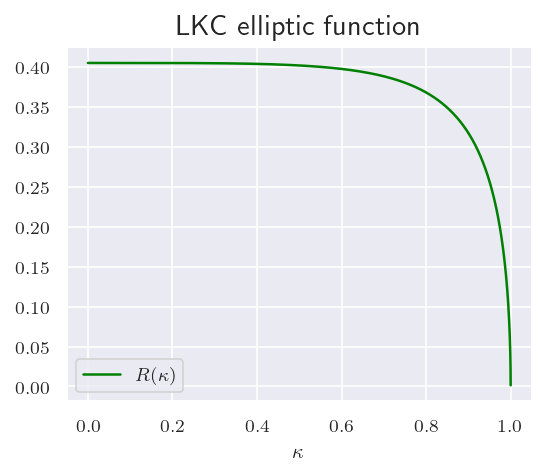}
      \caption{Elliptic functions linking anisotropy and statistics: $g(\cdot)$ for the Contour method (left) and $R(\cdot)$ for the LKC method (right).}
     \label{fig:inverse_MB}
 \end{figure}

\paragraph{Inverting elliptic functions.}
The theoretical functions linking the anisotropy parameter $\kappa$ to the observed statistics—$g(\cdot)$ for the Contour method and $R(\cdot)$ for the LKC method—are defined by complete elliptic integrals. We evaluate these integrals numerically. As illustrated in Figure~\ref{fig:inverse_MB}, both functions are strictly monotone over the domain $\kappa \in [0, 1)$, which ensures that the estimators are well-defined and unique. In practice, we recover the anisotropy estimate $\hat{\kappa}$ by inverting these mappings using a standard root-finding algorithm, such as the bisection method (dichotomy) or Brent's method.

\subsection{Estimation of the anisotropy ratio and principal direction}
 We investigate the distribution of the estimates $\widehat\kappa$ of the anisotropy ratio $\kappa$ in dimension $d=2$. In the simulations (see the accompanying notebooks\footnote{\url{https://github.com/ydecastro/COMETE-Contour-method-Gaussian-Random-Fields/}}), we recover~$\kappa$ using four complementary approaches and compare their behavior:
 
 \begin{itemize}
 \item {\bf Contour method.} We extract unit normals along the estimated level set and compute the trigonometric aggregates $\cC$ and $\cS$ built from $\cos(2\Theta)$ and $\sin(2\Theta)$. Their normalized magnitude is a monotone function $g(\kappa)$ of the anisotropy ratio; hence we obtain the estimator $\widehat\kappa_{\rm C}$ by numerically inverting $g(\cdot)$ (cf. Fig.~\ref{fig:inverse_MB}). The principal direction is estimated by $\widehat\theta_0=\tfrac12\arctan(\cS/\cC)$ using the {\tt atan2} function of Python which is more robust.
 
 \item {\bf LKC method.} From the excursion geometry we compute a robust Lipschitz--Killing curvature-based statistic (using Gaussian curvature) which depends monotonically on $\kappa$ through an elliptic function $R(\kappa)$. The estimator $\widehat\kappa_{\rm LKC}$ is defined by inverting the elliptic function $R(\cdot)$ (cf. Fig.~\ref{fig:inverse_MB}).
 
 \item {\bf $\chi^2$ calibration (isotropy test).} Using a partition of the domain into $N^2$ cells, we form $Q_{n,N}$ in~\eqref{e:qn} and the $p$-value $\alpha=F_{\chi^2(2)}(Q_{n,N})$ (Theorem~\ref{thm:pvalue}). This yields a well-calibrated test for quasi-isotropy which do not require any prior knowledge of the covariance function of the random field (model-free method). We report it alongside point estimators (Contour and LKC model-based methods) to assess significance; it is not used to define $\widehat\kappa$ directly.
 
 \item {\bf Gradient covariance (oracle benchmark).} When the full field is available on a grid, we compute finite-difference gradients, estimate $\widehat\Lambda=\widehat{\Var}(X')$, and take its eigenvalues $\hat\lambda_1\ge\hat\lambda_2$. In $d=2$, the anisotropy parameter is recovered by
 $\displaystyle\widehat\kappa_{\rm grad}=\sqrt{1-\hat\lambda_2/\hat\lambda_1}$,
 and the principal direction is the leading eigenvector. This serves as a reference in simulations.
 \end{itemize}

\subsection{Contour cosine and sine align along the true angle}
\label{sec:Cabana}

We now illustrate the core mechanism of the Contour method by visualizing the empirical statistics used to estimate anisotropy. This experiment serves to empirically validate the theoretical expectations derived from the Palm distribution (Theorem~\ref{thm:Palm_gradient} and Remark~\ref{rem:link_Palm_Cabana}) and demonstrates how the ``signal'' of anisotropy emerges from the geometry of the level sets.

Using the simulation framework described before, we generate realizations of the Gaussian field on~$\mathcal{T}$ with a fixed principal direction $\theta_0=1$ radian. For each realization, we compute the \emph{normalized} trigonometric statistics $(\hat\cC, \hat\cS)$ by averaging the unit normals along the level set:
\[
    \hat\cC=\int_{\mathcal{L}(\mathcal{T})} \cos(2\Theta(t))\,\frac{\d\mathcal{H}^1(t)}{|\mathcal{L}(\mathcal{T})|},\qquad
    \hat\cS=\int_{\mathcal{L}(\mathcal{T})} \sin(2\Theta(t))\,\frac{\d\mathcal{H}^1(t)}{|\mathcal{L}(\mathcal{T})|}\,.
\]
These quantities represent the empirical mean of the vector $(\cos 2\Theta, \sin 2\Theta)$ with respect to the uniform measure on the level curve.

 \begin{figure}[!ht]
     \centering
     \includegraphics[width=\linewidth]{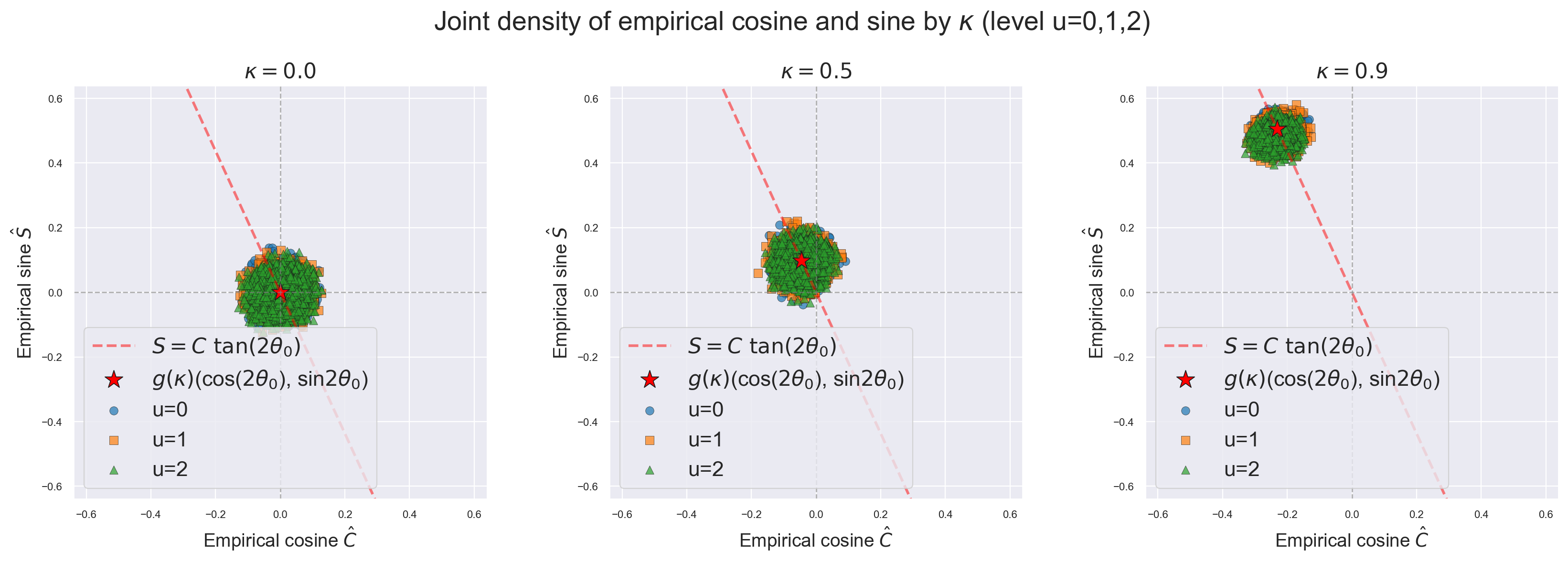}
        \caption{Scatter plots of $(\hat\cC,\hat\cS)$ over many realizations for $\kappa\in\{0,0.5,0.9\}$. Under isotropy ($\kappa=0$), the cloud is ‘‘distributed'' around $(0,0)$. Under anisotropy, the cloud is ‘‘distributed'' around $g(\kappa)(\cos2\theta_0,\sin2\theta_0)$, enabling estimation of $\theta_0$ via $\tfrac12\operatorname{arctan}(\hat\cS,\hat\cC)$ and of $\kappa$ via $g^{-1}$.}
     \label{fig:Cabana_MB}
 \end{figure}

\paragraph{Recovering parameters from the cloud of points}
Figure~\ref{fig:Cabana_MB} displays the distribution of the vector $(\hat\cC,\hat\cS)$ across many realizations. The scatter plots reveal the link between the observed geometry and the model parameters derived in Section~\ref{s:cont}:
\begin{itemize}
    \item \textbf{Direction ($\theta_0$):} The cloud of points is oriented along the angle $2\theta_0$. Consequently, $\theta_0$ can be recovered using the four-quadrant inverse tangent function, defined as $\widehat\theta_0 = \tfrac{1}{2}\operatorname{atan2}(\hat\cS, \hat\cC)$.\footnote{The function $\operatorname{atan2}(y, x)$ computes the principal value of the argument function applied to the complex number $x+\mathrm{i}y$, returning an angle in $(-\pi, \pi]$. Unlike the standard arctan, it distinguishes between all four quadrants.}
    \item \textbf{Intensity ($\kappa$):} The distance of the cloud from the origin corresponds to the anisotropy strength. As predicted by \eqref{e:cabana}, the theoretical mean is given by $\mathbb{E}[(\hat\cC, \hat\cS)] \approx g(\kappa)(\cos 2\theta_0, \sin 2\theta_0)$. Since $g(\kappa)$ is strictly increasing (Lemma~\ref{l:g}), the magnitude $\hat{\mathcal{F}} = \sqrt{\hat\cC^2+\hat\cS^2}$ serves as a consistent estimator for $\kappa$ via inversion.
\end{itemize}

\paragraph{Validation of the asymptotic behavior}
The visualization confirms the robustness of the Contour method. Under quasi-isotropy ($\kappa=0$), the cloud concentrates around the origin $(0,0)$, validating the null hypothesis behavior where no direction is preferred. Conversely, as $\kappa$ increases (from $0.5$ to $0.9$), the cloud shifts clearly away from the origin. This shift empirically validates the Palm density derived in Theorem~\ref{thm:Palm_gradient}, showing that the normals concentrate preferentially orthogonal to the direction of anisotropy. The stability of the cloud's angle across realizations supports the use of the Contour method for precise direction estimation, even at moderate observation windows.

 \begin{figure}[!ht]
     \centering
     \includegraphics[width=\linewidth]{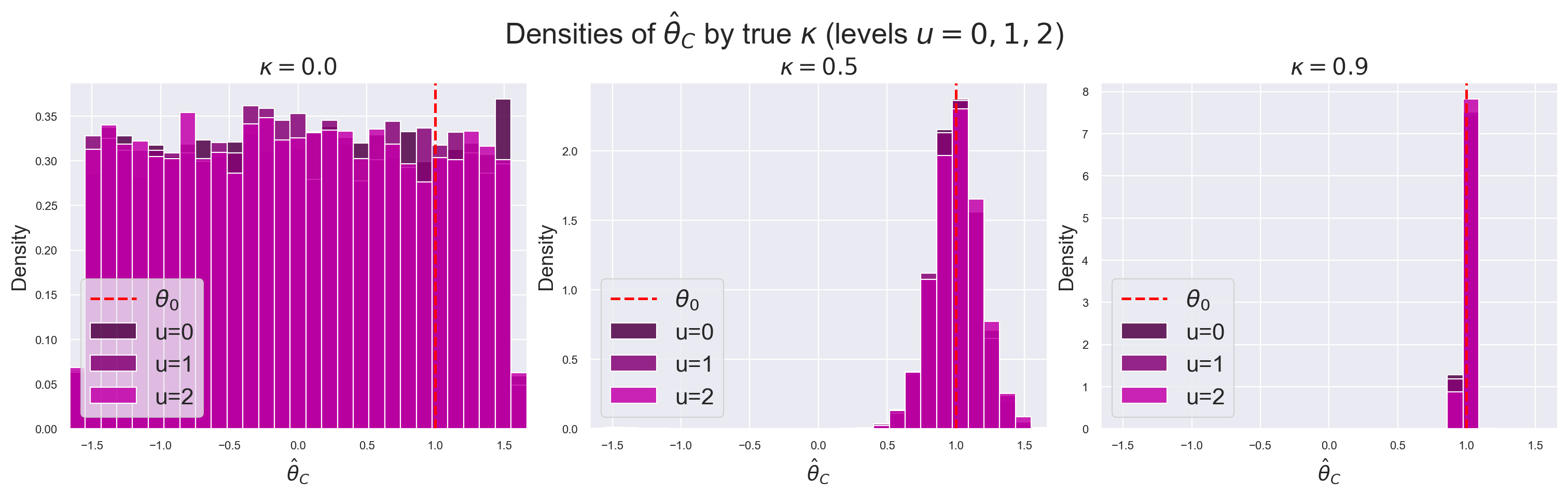}
        \caption{Densities of the angle estimation $\widehat\theta_{\rm C}$ for different values of $\kappa$ and levels $u$.}
     \label{fig:angle_cabana}
 \end{figure}

\paragraph{Estimation of the angle of isotropy}
In our study, the true angle is $\theta_0=1$ radian. We estimate this angle using $\widehat\theta_{\rm C} = \tfrac{1}{2}\operatorname{atan2}(\hat\cS, \hat\cC)$. In Figure~\ref{fig:angle_cabana} we uncover that the angle is uniformly distributed under the null hypothesis (isotropy), for all levels. As the anisotropy increases, the Contour estimator $\widehat\theta_{\rm C}$ remains unbiased and precise, with a variance that is \emph{remarkably stable across different observation levels $u$}

\subsection{Contour is more accurate than LKC}
\label{sec:compare_LKC}

We compare the point-estimation accuracy of the anisotropy parameter using the Contour estimator $\widehat\kappa_{\rm C}$ (obtained by inverting $g(\kappa)$ from the trigonometric summaries of normals) and the LKC estimator $\widehat\kappa_{\rm LKC}$ (obtained by inverting the theoretical function $R(\kappa)$ from Lipschitz–Killing curvatures) in Figure~\ref{fig:marginal_density_kappa_estimates_MB}. For each configuration, we run many realizations and report the joint empirical distributions of $(\widehat\kappa_{\rm C},\widehat\kappa_{\rm LKC})$.

 \begin{figure}[!ht]
     \centering
     \includegraphics[width=\linewidth]{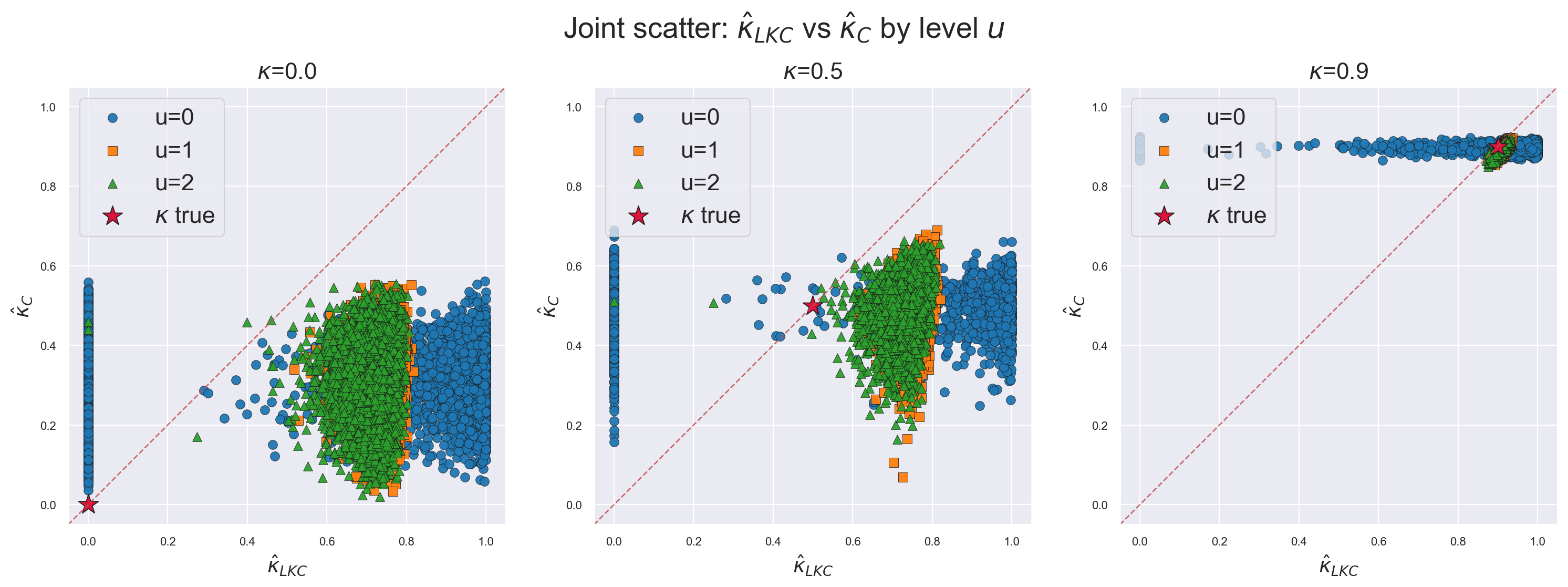}
     \includegraphics[width=\linewidth]{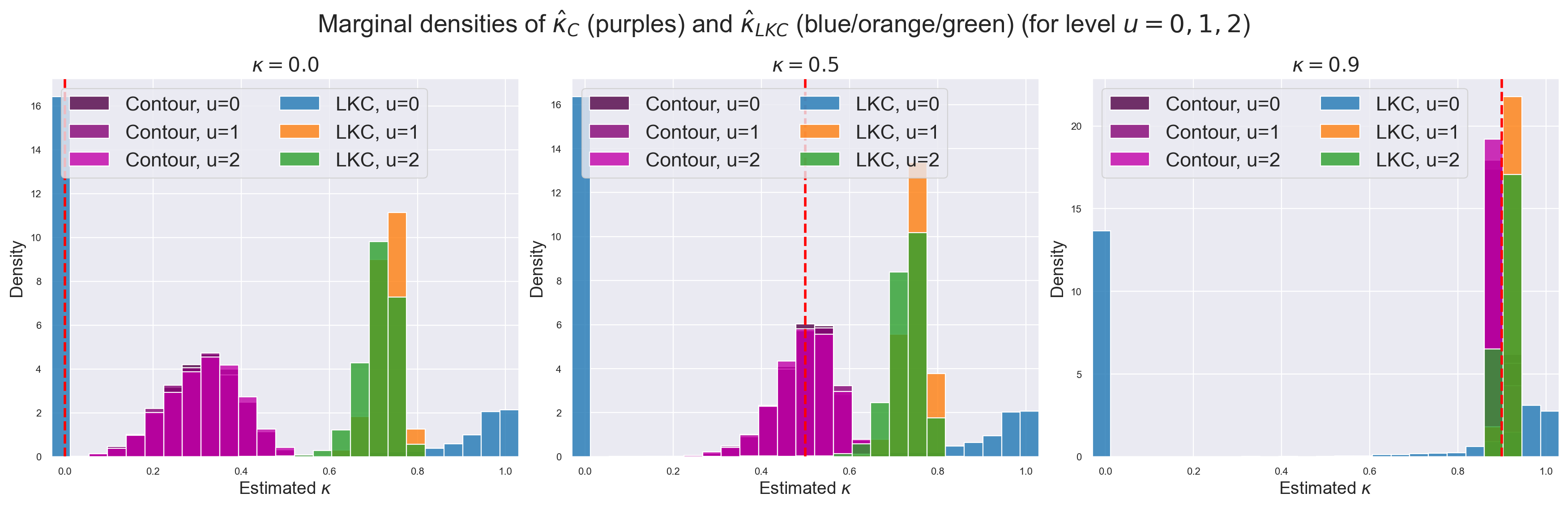}
     \caption{Joint and marginal empirical distribution of $(\widehat\kappa_{ {\rm LKC}},\widehat \kappa_{{\rm C}})$.}
     \label{fig:density_kappa_estimates_MB}
     \label{fig:marginal_density_kappa_estimates_MB}
 \end{figure}

 \vspace*{-0.5cm}
\paragraph{Bias and variance analysis.} 
The asymptotic normality of the underlying functionals (see Theorem~\ref{t:coco} and Corollary~\ref{cor:assymptotic_normality}) implies that the estimators follow a distribution shaped by the inverse of the corresponding elliptic functions. This structural difference leads to distinct behaviors for Contour and~LKC:

\noindent
\textbf{Quasi-isotropy ($\kappa=0$):} A major advantage of the Contour method is that the observed statistic~$\mathcal{F}$ almost surely falls within the domain of the inverse function $g^{-1}(\cdot)$. Consequently, $\widehat\kappa_{\rm C}$ \emph{behaves like a centered normal variable transformed by a smooth function}. In contrast, the LKC statistic often falls outside the admissible domain of the inverse function $R^{-1}(\cdot)$ due to variance in the Euler characteristic estimation. These "impossible" values are truncated to zero (isotropy), creating a large Dirac mass at $\kappa=0$ in the empirical distribution of $\widehat\kappa_{\rm LKC}$. While this artificially lowers the variance, the non-truncated component exhibits a heavy right tail and significant positive bias.%, making the estimator unreliable for detecting weak anisotropy.

\noindent
\textbf{Anisotropy ($\kappa=0.5, 0.9$):} As the anisotropy increases, the Contour estimator remains unbiased and precise, with a variance that is \emph{remarkably stable across different observation levels $u$}. Conversely, the performance of $\widehat\kappa_{\rm LKC}$ is highly dependent on the level $u$. It is notably unstable at the mean level ($u=0$), where the expected Euler characteristic vanishes, leading to numerical singularities in the ratio $R(\kappa)$. The LKC estimator only achieves accuracy comparable to the Contour method at high excursion levels (e.g., $u=2$), where the geometric approximation of the excursion set becomes more reliable.

\noindent
\textbf{Accuracy:} Overall, $\widehat\kappa_{\rm C}$ provides a \emph{more robust and statistically} efficient measure of anisotropy, particularly close to the mean.
\pagebreak[3]

\subsection{Contour is almost as good as full observation}
We would like to compare the ‘‘Oracle'' which implies observing the gradient field everywhere, against the Contour method which only observes the geometry of a level set. 
In dimension $d=2$, writing $\Lambda=\Var(X'(0))$ with eigenvalues $\lambda_1\ge \lambda_2>0$ and principal direction given by the leading eigenvector, the anisotropy parameter is
\[
\kappa\;=\;\sqrt{1-\lambda_2/\lambda_1},\qquad \theta_0\;=\;\arg(\text{eigenvector associated with }\lambda_1).
\]
In practice (as implemented in the notebooks of COMETE\footnote{\url{https://github.com/ydecastro/COMETE-Contour-method-Gaussian-Random-Fields/}}), we proceed as follows: $(i)$ Compute finite-difference gradients on the $T\times T$ grid; $(ii)$ Form the \emph{empirical variance-covariance matrix} $\widehat\Lambda$ of these gradients over all valid pixels and compute its eigen-decomposition; $(iii)$ Estimate $\widehat\kappa_{\rm grad}^2={1-\hat\lambda_2/\hat\lambda_1}$ and the direction by the leading eigenvector.

This is an \emph{oracle benchmark}: it requires the full field values (not just a level set). It is therefore not applicable in our observation model, but it provides a performance benchmark for what could be achieve with complete data (full information).

\begin{figure}[!ht]
    \centering
    \includegraphics[width=\linewidth]{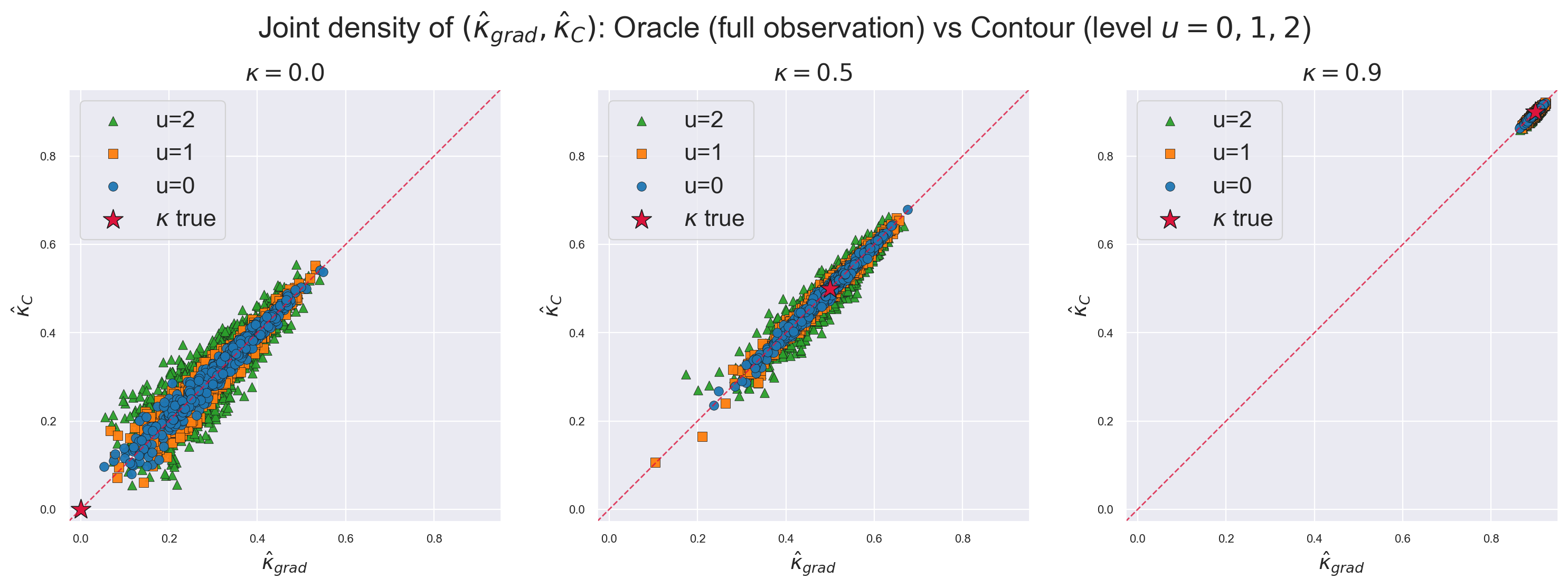}
    \caption{Joint distribution  of $ ( \hat \kappa_{grad},\hat \kappa_C)$ across $\kappa\in\{0,0.5,0.9\}$ and levels $u\in\{0,1,2\}$.} 
    \label{fig:fullobs_cabana}
\end{figure}

The results are presented in Figure~\ref{fig:fullobs_cabana}. It shows that the performance of both estimators is almost indistinguishable and, moreover, that they behave in the same direction: when, for example, one estimator over-estimates, the second does the same. We find that the contour-based estimator $\widehat\kappa_{\rm C}$ is essentially as accurate as the full-observation estimator $\widehat\kappa_{\rm grad}$, while requiring only the level set and remaining agnostic to the field’s mean, variance, and covariance. This supports the practical relevance of the contour approach in settings where full observation is unavailable.

In private simulations, we have compute the normalized gradient covariance matrix on a fixed grid (as in the Oracle benchmark). We computed the anisotropy parameter estimate $\hat\kappa_{\rm nGrad}$ by inverting some elliptic function. We found that $\hat\kappa_{\rm nGrad}$ and $\hat\kappa_{\rm grad}$ are very close and that the joint distribution of $(\hat\kappa_{\rm nGrad},\hat\kappa_{\rm C})$ is almost the same as $(\hat\kappa_{\rm grad},\hat\kappa_{\rm C})$, shown in Figure~\ref{fig:fullobs_cabana}. Hence the normalized gradient convey as much information as the gradient observed on the full domain, when observed along a level set ($\hat\kappa_{\rm C}$) or on the full domain ($\hat\kappa_{\rm nGrad}$).
 
\subsection{Power studies}
\label{sec:Power}

We compare three procedures: MB-Contour (model-based contour test), MB-LKC (model-based Lipschitz–Killing curvatures), and our $\chi^2(2)$-Contour test (model-agnostic). Under the null ($\kappa=0$), the empirical CDF of the $\chi^2(2)$-Contour $p$-values closely follows the uniform CDF across levels $u\in\{0,1,2\}$, indicating excellent calibration (without simulating the null model, model-agnostic test). Under the alternative ($\kappa=0.5$), the $\chi^2(2)$-Contour and MB-Contour tests exhibit the largest power among the three. However, the MB-Contour requires the true model to simulate the null distribution. 

We display the MB-Oracle (Model-based gradient full observation) curve as the oracle reference in Figure~\ref{fig:power}. The MB-LKC test can be clearly less powerful in this setting: First, it suffers from numerical instability when estimating higher-order LKC by counting components/holes which is sensitive and less stable than normal-based implementations, especially at moderate levels and finite windows; Second these experiments suggests that its statistical variance is greater than of Contour methods, resulting in a lack of power (a phenomenon already observed in Figure~\ref{fig:density_kappa_estimates_MB}).

 \begin{figure}[!ht]
    \centering
    \includegraphics[width=0.98\linewidth]{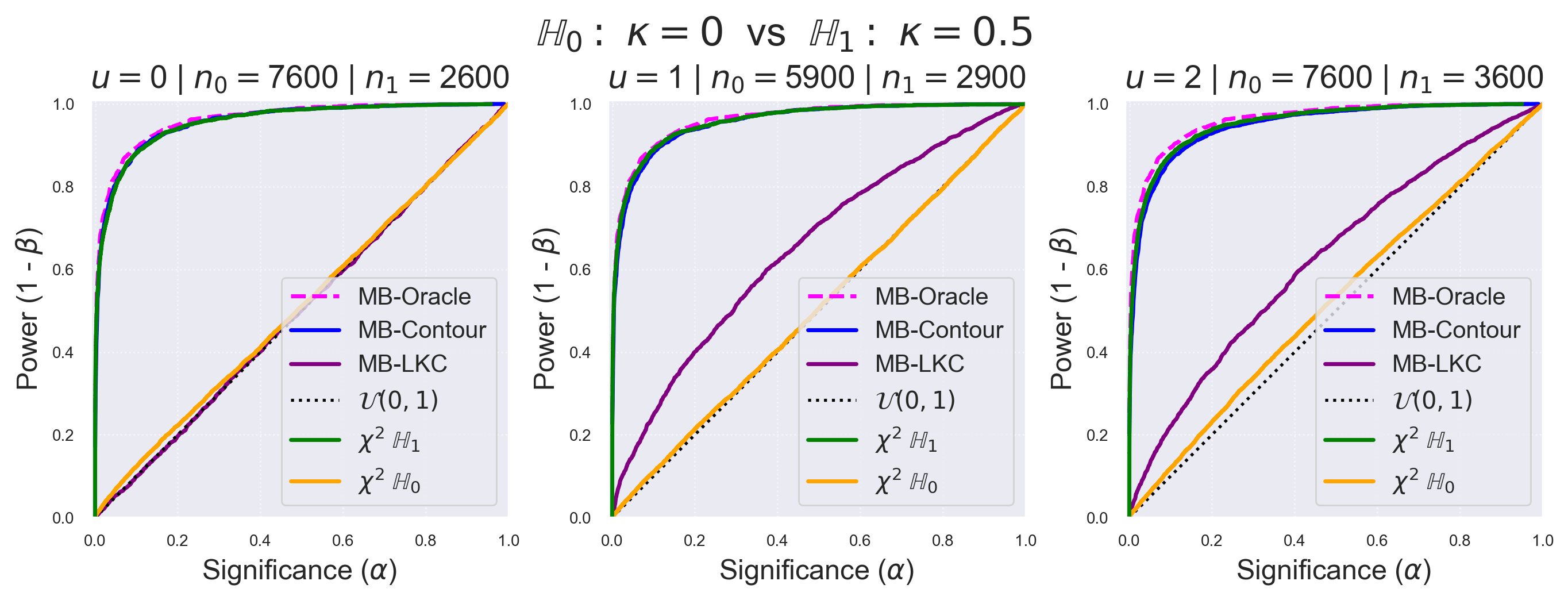}
    \caption{Comparison of the three tests: MB-Contour (Model-Based contour method); MB-LKC (Model-Based Lipschitz-Killing curvatures method); and $\chi^2(2)$-Contour (test based on contour method, introduced in this paper). Note that model based tests require knowing the true model to simulate the null distribution while the $\chi^2(2)$-Contour test is agnostic to the model and does not require its knowledge. The stationary Gaussian random field of unit variance was observed on a window of size $T\times T$ with $T=200$ and the tests were performed on contours at levels $0$, $1$, and $2$. The null hypothesis $\mathds{H}_0$: ‘‘$\kappa=0$'' means that the random field is quasi-isotropic. The distribution under the null is simulated over more than $n_0=6,000$ samples in model-based tests. The power is given when the true law is drawn with respect to the alternative $\mathds{H}_1$:~‘‘$\kappa=0.5$'' for which the random field is anisotropic. The distribution under the alternative is simulated over more than $n_1=2,500$ samples. Under the null, the $\chi^2(2)$-Contour CDF (orange curve) matches the uniform CDF, the test is well calibrated. Under the alternative, the $\chi^2(2)$-Contour test (violet curve) has the best power, significantly exceeding the one of MB-LKC test.}
    \label{fig:power}
\end{figure}

We also observed the expected trends: $(i)$ \emph{increasing the observation window~$T$ improves power} (in private simulations); $(ii)$ the test’s power varies across levels $u$ because contour length and local geometry change but \emph{remains remarkably stable for contour methods (MB and $\chi^2(2)$)} (as also documented in Figure~\ref{fig:density_kappa_estimates_MB}); $(iii)$ choosing the partition size $N$ to balance variance (more cells) and bias (enough points per cell) stabilizes the variance estimator $\hat V^2$ and thus the $\chi^2$ calibration. We choose $N=10$ for levels $u=0,2$ and $N=25$ for level $u=1$. The theoretical study emphasizes that $N=o(T)$ is required in the asymptotic $T\to\infty$. Further theoretical and methodological investigations are necessary to understand how to choose $N$ for finite values of $T$. In this first set of experiments on this subject, we choose $N$ so as to be calibrated under the null. We observed that taking a larger $N$ results in conservative tests (in private simulations), enforcing a controlled Type-$I$ error. 

Overall, the contour-based chi-square approach is both robust (no model knowledge and stable across levels) and competitive in power, while MB procedures are sensitive to model specification, implementation details (LKC estimation) and nuisance parameters (LKC estimation).
 
\subsection{Planck experiment}
\label{sec:Planck}

Using a notebook available on Github\footnote{COMETE: \url{https://github.com/ydecastro/COMETE-Contour-method-Gaussian-Random-Fields/}}, we replicate our pipeline on Planck DR3 CMB (Cosmic Microwave Background) temperature data (Commander product). We extract a rectangular patch of size $1{,}550\times1{,}932$ pixels and proceed as follows: (i) we estimate a representative level set (for display we show $u=175$; the isotropy test itself does not require fixing $u$), (ii) along this level set we compute unit normals and the trigonometric summaries $\cC$ and $\cS$, (iii) we partition the image into a $10\times10$ grid and estimate the variance $\hat V^2$ by sub-sampling as in~\eqref{e:V}, and (iv) we form the chi-square statistic $Q_{n,N}$ in~\eqref{e:qn} and its p-value $\alpha=F_{\chi^2(2)}(Q_{n,N})$ (Theorem~\ref{thm:pvalue}). The principal direction is estimated from $\tfrac12\operatorname{atan2}(\cS,\cC)$. On this patch we reject quasi-isotropy with observed significance $6.57\times10^{-7}$ and estimate the anisotropy angle at $1.74$ radians. We observe similar decisions across nearby display levels and subwindows, suggesting robustness of the conclusion. We stress that our procedure is model-agnostic and uses only contour geometry; a dedicated CMB analysis could additionally account for masking, foreground residuals, and inhomogeneous noise.

\begin{figure}[!ht]
    \centering
    \includegraphics[width=0.45\linewidth]{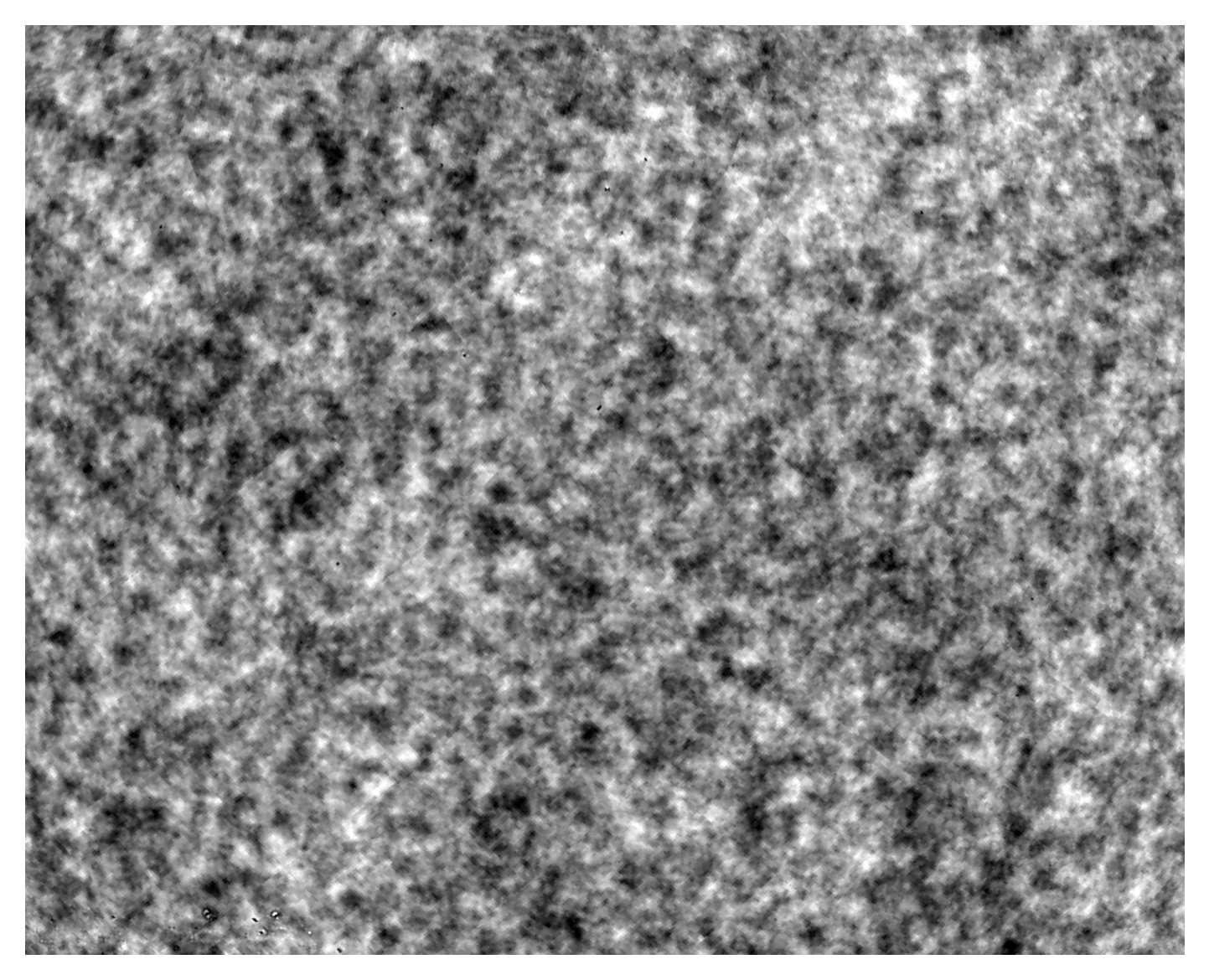}
    \includegraphics[width=0.45\linewidth]{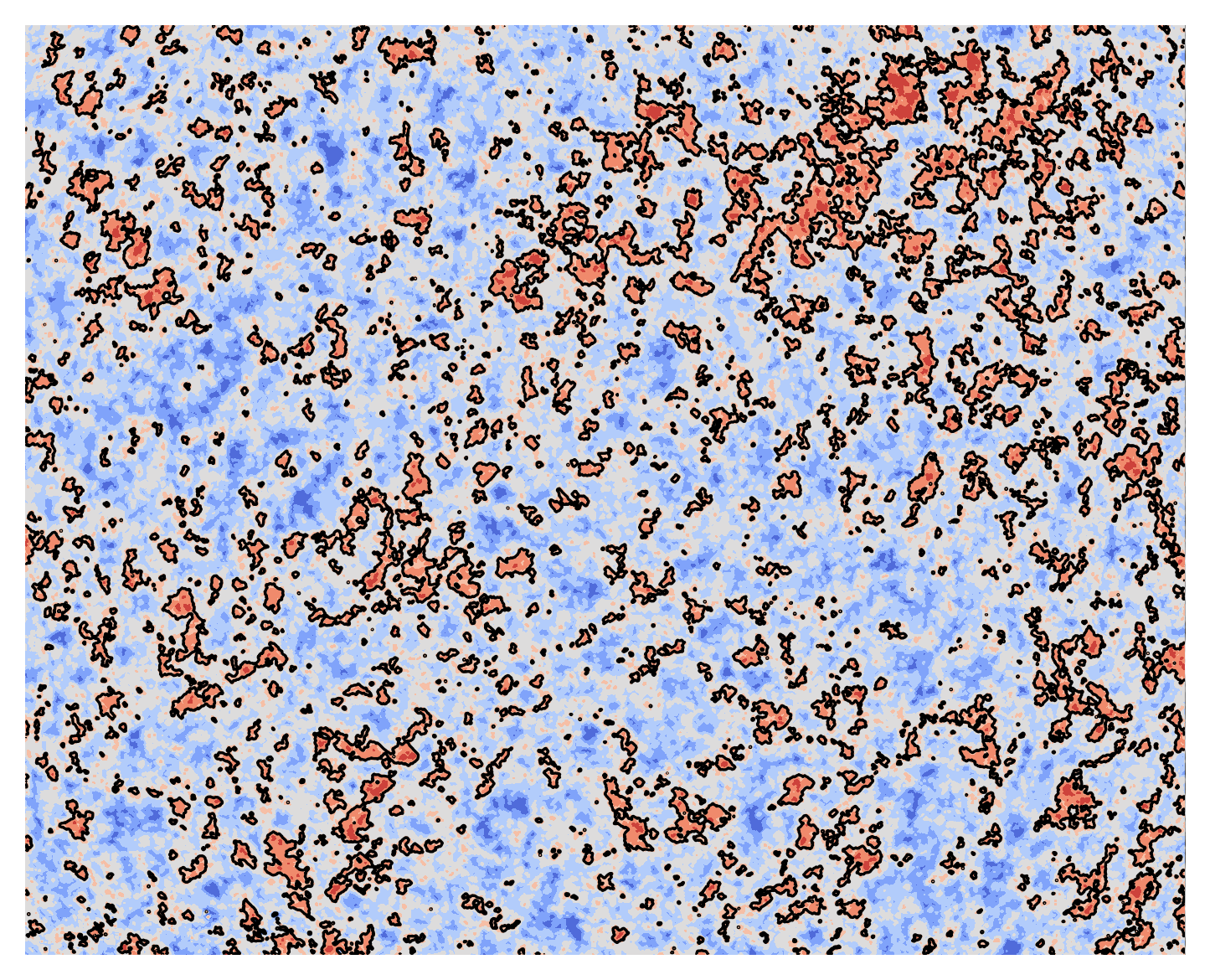}
    \caption{\underline{Left:} $1,550\times 1,932$ image extracted from the Planck data release 3 CMB-only temperature produced by Commander \citep{ESA:CMBMaps:2018}. %OM_CMB_IQU-commander_2048_R3.00_full
    \underline{Right:} Estimated level set $(X(t)=175)$, where the level $u=175$ has been chosen only for display results.
    %\underline{Right:} $10\times 10$ partition used in the $\chi^2(2)$-Contour method.
    \underline{Result:} We performed the $\chi^2(2)$-Contour method on a $10\times 10$ partition. 
    The null hypothesis (the field is quasi-isotropic) has been rejected with an observed significance of $6,57\times 10^{-7}$ and the anisotropy angle has been estimated to $1.74$ radians.
    }
    \label{fig:planck_experiment}
\end{figure}

\subsection{Discussion}

Our numerical study highlights several key distinctions between the contour-based approach and the Lipschitz-Killing curvature (LKC) method.

\paragraph{Estimation performance and the anisotropy angle.}
A unique advantage of the Contour method is its ability to estimate the direction of anisotropy $\theta_0$ via the argument of the vector $(\mathcal{C}, \mathcal{S})$. As shown in Figure \ref{fig:Cabana_MB}, this estimation is robust and precise. Regarding the anisotropy  parameter $\kappa$, the Contour estimator~$\hat{\kappa}_{\mathrm{C}}$ generally outperforms the LKC estimator $\hat{\kappa}_{\mathrm{LKC}}$ in terms of variance, particularly near the mean level ($u \approx \mu$). The LKC method tends to get more precise at high excursion levels ($|u-\mu| \gg \sigma$), where the asymptotic approximations of the Euler characteristic become exact. However, at moderate levels, the LKC method suffers from numerical instability if the Euler characteristic is computed by counting connected components and holes (topological approach). We found that implementing LKC via the integration of the Gaussian curvature density (geometric approach, Eq. \eqref{e:hatec}) significantly improves stability, though it remains less powerful than the Contour method in our simulations.

\paragraph{Independence and combination.}
The estimators $\hat{\kappa}_C$ and $\hat{\kappa}_{LKC}$ are derived from fundamentally different geometric summaries—normals for the former and curvature for the latter. Empirically, we observed that their errors are weakly correlated. While this suggests that combining them (e.g., via the linear combination proposed in Appendix \ref{s:combi}) could theoretically reduce variance, we found that the optimal weights are highly sensitive to the model and the observation level. Consequently, a simple combination is difficult to calibrate in practice, and we recommend using the Contour estimator as the primary tool due to its robustness.

\paragraph{Calibration of the variance estimator.}
The model-agnostic $\chi^2(2)$-Contour test relies on the spatial variance estimator $\hat{V}^2$, which depends on the number of partition cells $N^2$. The choice of $N$ involves a bias-variance trade-off: a large $N$ increases the number of samples for variance estimation but reduces the size of each cell, making the local contour integral approximations less accurate. In our experiments, we found that $N=10$ (100 cells) was appropriate for $u=0$ and $u=2$, while $N=25$ performed better for $u=1$. In practice, we advise practitioners to test stability across a range of $N$. In cases of doubt, choosing a smaller $N$ (fewer, larger blocks) tends to yield a more conservative test, preserving the Type~I error rate.

\paragraph{Comparison to  full information.}
The incomplete observation framework  causes almost no loss   of information with respect to  full observation. 
\paragraph{Conclusion.}
Overall, the Contour method proves to be a  flexible and powerful tool. It is computationally efficient, provides the anisotropy angle, and yields a test that matches or exceeds the power of model-based alternatives without requiring knowledge of the underlying covariance structure. % ** il faut mettre un blabla sur full observation **

\newpage

\bibliographystyle{plainnat}
\bibliography{2024_anisotropie_bibliography}

\newpage

\appendix    
\section {Appendix} \label{s:appendix}

\subsection{Affine processes} \label{s:affine}
Originally, \citet{cabana1987affine} and \citet{wschebor} used  non-Gaussian models.
The null corresponded to a strictly isotropic random field and the alternative hypothesis corresponded to an {\it affine process} as defined below.

\begin{definition}
    The random field $\{ X(t), t\in \cT \subset \R ^2 \to \R\}$ is an affine process if there exist a stationary isotropic random field $\{ Y(t), t\in \cT \to \R\}$ and a $2\times 2$ symmetric positive definite matrix $A$ such that 
\[
X(t) = Y(A t)\,.
\]
\end{definition}

Considering  non-Gaussian  models   implies   the use of complicated mixing assumptions as  $\eta$ dependence. Moreover, the computation of moments through  Kac-Rice formula  demands heavy conditions, see for example, \cite{berzin}.
Explicit examples of random fields satisfying these conditions are lacking.  For  these reasons  
we prefer to use a Gaussian model which is coherent with \citep{berzin} \citep {estrade} and \citep{bierme}. 

       \subsection {Bulinskaya lemmas} \label{s:buli}

\begin{lemma} \label{l:buli1} \cite[Proposition 1.1]{armentano2025general}
Let $X(\cdot)$ be a real valued random field  with $\mathcal C ^1$-paths  defined on an open subset $\cT$ of $\R^d$, and let $u\in \R$. Assume that the
 density $p_{X(t)}$  of $X(t)$  satisfies
 \[
 \int_\cT p_{X(t)} dt < C \mbox{ for all $v$ in some neighbourhood of~$u$ }
\] 
\noindent
Then
\[
 \mathcal H_{d-1}(\{t\in \cT : X(t) = u, X'((t)) = 0\}) = 0, \mbox{ a.s.}\,.
\]   
\end{lemma}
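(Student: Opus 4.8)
The plan is to prove the equivalent statement $\E\big[\cH_{d-1}(B_u)\big]=0$, where $B_u:=\{t\in\cT:X(t)=u,\ X'(t)=0\}$; since a non-negative random variable with zero mean vanishes a.s., this yields the lemma. As $\cT$ is open, it is a countable union of compact cubes contained in $\cT$, and a countable union of $\cH_{d-1}$-null sets is $\cH_{d-1}$-null, so it is enough to fix one such cube $Q$, of side $a$, and show $\E\big[\cH_{d-1}(B_u\cap Q)\big]=0$; note the hypothesis still gives $\int_Q p_{X(t)}(v)\,dt<C$ for $v$ near $u$. For $\delta=a/n$ with $n\to\infty$, tile $Q$ by $n^d$ axis-parallel subcubes $q_1,\dots,q_{n^d}$ of side $\delta$, and call $q_i$ \emph{bad} if $q_i\cap B_u\neq\emptyset$. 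The bad cubes cover $B_u\cap Q$ by sets of diameter $\delta\sqrt d$, so the Hausdorff pre-measure satisfies $\cH_{d-1}^{\delta\sqrt d}(B_u\cap Q)\le(\delta\sqrt d)^{d-1}\,\#\{i:q_i\ \text{bad}\}$.

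The heart of the argument is that a bad cube confines $X$ to a slab around $u$ whose width is of order $\delta$ times the oscillation of $X'$ — small precisely because $X'$ vanishes somewhere in the cube. Fix $m>0$ and let $\omega_{X'}(\cdot)$ be the (a.s. finite) modulus of continuity of $X'$ on $Q$. On the event $\{\omega_{X'}(\delta\sqrt d)\le m\}$, if $q_i$ is bad with witness $t_0\in q_i$ (so $X(t_0)=u$, $X'(t_0)=0$), then since the segment $[t_0,t]$ stays in the convex set $q_i$ we get $\sup_{q_i}\|X'\|\le\|X'(t_0)\|+\omega_{X'}(\delta\sqrt d)\le m$, hence $|X(t)-u|=|X(t)-X(t_0)|\le\|t-t_0\|\sup_{q_i}\|X'\|\le m\,\delta\sqrt d$ for all $t\in q_i$ by the mean value inequality. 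Integrating over $q_i$ the indicator of this slab gives $\delta^{d}\,\UN_{\{q_i\ \text{bad}\}}\UN_{\{\omega_{X'}(\delta\sqrt d)\le m\}}\le\int_{q_i}\UN_{\{|X(t)-u|\le m\delta\sqrt d\}}\,dt$; taking expectations, summing over $i$, and using Tonelli together with the density bound (valid once $\delta$ is small enough that $[u-m\delta\sqrt d,\,u+m\delta\sqrt d]$ lies in the admissible neighbourhood of $u$) yields
\[
\delta^{d}\sum_{i=1}^{n^d}\P\big(q_i\ \text{bad},\ \omega_{X'}(\delta\sqrt d)\le m\big)\ \le\ \int_Q\!\int_{u-m\delta\sqrt d}^{\,u+m\delta\sqrt d}\!p_{X(t)}(v)\,dv\,dt\ \le\ 2m\delta\sqrt d\,C .
\]
Multiplying by $(\delta\sqrt d)^{d-1}/\delta^{d}$ and using the pre-measure bound above gives a quantity \emph{uniform in $\delta$}: $\E\big[\cH_{d-1}^{\delta\sqrt d}(B_u\cap Q)\,\UN_{\{\omega_{X'}(\delta\sqrt d)\le m\}}\big]\le 2(\sqrt d)^{d}C\,m$.

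To finish, take the two limits in the correct order. For fixed $m$, as $\delta\downarrow0$ one has $\cH_{d-1}^{\delta\sqrt d}(B_u\cap Q)\uparrow\cH_{d-1}(B_u\cap Q)$, while $\{\omega_{X'}(\delta\sqrt d)\le m\}$ increases to an event of full probability since $X'$ is a.s. uniformly continuous on the compact set $Q$; monotone convergence then gives $\E\big[\cH_{d-1}(B_u\cap Q)\big]\le 2(\sqrt d)^{d}C\,m$ for every $m>0$, and letting $m\downarrow0$ forces $\E\big[\cH_{d-1}(B_u\cap Q)\big]=0$. The main obstacle — the one genuinely non-routine point — is the slab estimate: one must exploit that $X'$ \emph{vanishes} inside $q_i$, not merely that $\|X'\|$ is bounded on $Q$, so that the slab width carries the extra factor $\omega_{X'}(\delta\sqrt d)\to0$; with only the crude bound $\|X'\|\le\sup_Q\|X'\|$ the expected weighted count of bad cubes is $O(\delta^{1-d})$ and the estimate diverges for $d\ge2$. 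The order of limits is equally essential ($\delta\to0$ first, so that for a \emph{fixed} threshold $m$ the oscillation-control event becomes almost sure; only afterwards $m\to0$). As a sanity check and a shorter but less elementary route, the coarea formula applied to the $\mathcal C^1$ function $X$ on $Q$ with weight $\UN_{\{X'=0\}}$ gives $\int_{\R}\cH_{d-1}(B_v\cap Q)\,dv=\int_Q\UN_{\{X'(t)=0\}}\|X'(t)\|\,dt=0$, where $B_v$ is defined as $B_u$ with level $v$; hence the conclusion holds for Lebesgue-a.e. level $v$, and the density hypothesis is exactly what is needed to upgrade ``a.e. $v$'' to the prescribed level $u$ — the covering argument above being a quantitative form of that upgrade.
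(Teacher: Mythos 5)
Your argument is correct. Note that the paper itself does not prove this lemma: it is imported verbatim as Proposition~1.1 of the cited reference (Armentano et al.), so there is no in-paper proof to compare against. What you have written is the classical Bulinskaya covering scheme, which is essentially the argument behind the cited result: reduce to a compact cube, tile at scale $\delta$, observe that a cube meeting $\{X=u,\,X'=0\}$ is trapped in a slab of width $O\big(\delta\,\omega_{X'}(\delta\sqrt d)\big)$ because the gradient \emph{vanishes} at the witness point, convert the expected number of bad cubes into an integral of the density over a shrinking interval around $u$, and conclude via the Hausdorff pre-measure and a double limit ($\delta\downarrow 0$ first at fixed oscillation threshold $m$, then $m\downarrow 0$). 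You correctly identify the two places where the proof could silently fail — using only a crude gradient bound (which loses the factor $m$ and gives a divergent $O(\delta^{1-d})$ count for $d\ge 2$), and inverting the order of the limits — and your coarea-formula remark accurately explains the role of the density hypothesis as the upgrade from almost every level to the prescribed level $u$. The only point you gloss over is the measurability of $\cH_{d-1}(B_u\cap Q)$ and of the events $\{q_i \text{ bad}\}$; the latter is handled by noting that badness of $q_i$ is equivalent to $\min_{\bar q_i}\big(|X-u|+\|X'\|\big)=0$, a measurable functional of the $\mathcal C^1$ path, and the pre-measure bounds you actually integrate are built from these measurable counts, so this is a routine polish rather than a gap.
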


\medskip

\begin{lemma} \label{l:buli2} \cite[Proposition 6.12]{marioluigi}
   Let $X(\cdot)$ be a real valued random field with $\mathcal C ^2$-paths defined on a  open subset $\cT$ of $\R^d$, and let $u \in \mathds R$. Assume that $(X(t),X'(t))$ admits a  joint  density  which is uniformly bounded for $t\in \cT$. Then
   \[
   \{t\in \cT : X(t) = u, X'((t)) = 0\} =\emptyset \mbox{ a.s.}
   \]
   %there exists no point such that $X(t) =u$ and $X'(t)=0$.
\end{lemma}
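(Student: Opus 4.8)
The plan is to show that, almost surely, the $\mathcal C^1$ map $\Psi\colon t\mapsto\bigl(X(t),X'(t)\bigr)\in\R^{d+1}$ never takes the single value $(u,0)$. The heuristic is that $\Psi$ sends a $d$-dimensional parameter set into a space of dimension $d+1$, so attaining a prescribed point is a codimension-one event that the uniform density bound renders negligible. First I would localise: fix a compact cube $Q\subset\cT$ and an integer $K$, and work on the event $\{\sup_Q\|X''\|\le K\}$. Since $X\in\mathcal C^2$, the quantity $\sup_Q\|X''\|$ is a.s.\ finite, so the claim on $Q$ will follow by letting $K\to\infty$, and the claim on all of $\cT$ by a countable union over cubes exhausting $\cT$.

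The core is the standard covering estimate. Partition $Q$ into $N^d$ subcubes $Q_i$ of side $\delta\asymp N^{-1}$ with centres $t_i$. If some $s\in Q_i$ satisfies $X(s)=u$ and $X'(s)=0$, then a Taylor expansion about $s$, with the remainder controlled by $K$ on $\{\sup_{Q_i}\|X''\|\le K\}$, yields
\[
|X(t_i)-u|\le C\,K\,\delta^{2}\qquad\text{and}\qquad\|X'(t_i)\|\le C\,K\,\delta
\]
for an absolute constant $C$. Hence the event ``$\Psi$ attains $(u,0)$ somewhere in $Q_i$'' is contained in $\{(X(t_i),X'(t_i))\in R_i\}$, where $R_i\subset\R^{d+1}$ is a box of Lebesgue measure of order $K^{d+1}\delta^{d+2}$. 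By the hypothesis that the joint density of $(X(t_i),X'(t_i))$ is bounded by some $B$ independent of $t_i$, this event has probability at most of order $B\,K^{d+1}\delta^{d+2}$. Summing over the $N^d$ subcubes and using $\delta\asymp N^{-1}$ gives a total bound of order $B\,K^{d+1}N^{d}\delta^{d+2}=O\!\bigl(B\,K^{d+1}N^{-2}\bigr)$, which tends to $0$ as $N\to\infty$. Thus, on $\{\sup_Q\|X''\|\le K\}$, the set $\{t\in Q:X(t)=u,\ X'(t)=0\}$ is a.s.\ empty; letting $K\to\infty$ and taking the union over the cover of $\cT$ finishes the proof.

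I expect the only genuine subtlety to be the randomness of the Taylor remainder: one must truncate to $\{\sup\|X''\|\le K\}$ \emph{before} invoking the density bound, so that the target box $R_i$ is deterministic and the probability estimate is legitimate; the $\mathcal C^2$ hypothesis is precisely what makes this truncation harmless in the limit $K\to\infty$. A less quantitative alternative is to note that a $\mathcal C^1$ image of a $d$-dimensional set has zero Lebesgue measure in $\R^{d+1}$, hence misses Lebesgue-almost-every point; but upgrading ``almost every point'' to ``the point $(u,0)$'' still requires the density bound and reduces to essentially the same computation, so the covering argument above is the most direct route.
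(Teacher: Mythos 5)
Your argument is correct and is precisely the classical Bulinskaya covering argument (Taylor expansion on a fine partition, a deterministic target box of measure $O(K^{d+1}\delta^{d+2})$ after truncating on $\{\sup\|X''\|\le K\}$, then the uniform density bound and the sum $N^d\cdot N^{-(d+2)}\to 0$); the paper itself gives no proof and simply cites Proposition 6.12 of the reference, whose proof is essentially the one you wrote. Your remark about truncating before invoking the density bound correctly identifies the one genuine subtlety, so nothing is missing.
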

      %\subsection{Comparison of the approaches}
\subsection {Some calculations} \label{s:palm:calcul}  
To avoid ambiguities we recall the definition of the complete elliptic integral of the first, second and third  kind
\begin{align*}
    K(\kappa) & := \int_0^{\pi/2} 
\big(1- \kappa^2 \sin^2(\theta) \big)^{-1/2}\mathrm d\theta; \\
E(\kappa) &:= \int_0^{\pi/2} \big(1- \kappa^2 \sin^2(\theta) \big) ^{1/2}\mathrm d\theta;  \\
\Pi(n,k) &:= \int_0^{\pi/2} \frac{1}{(1-n\sin^2 \theta) \sqrt{1- \kappa^2 \sin^2(\theta)}} \d \theta.
\end{align*}
 Define 

 \begin{equation} \label{e:c:c}
             \mathds C(\kappa)
        := \frac
                {\int_{-\pi}^\pi \cos^2\!\theta\big(1-\kappa^2\cos^2\!\theta\big)^{-\frac{3}{2}}\mathrm d\theta}
                {\int_{-\pi}^\pi \big(1-\kappa^2\cos^2\!\theta\big)^{-\frac{3}{2}}\mathrm d\theta}.\\
                \end{equation} 
   Computations show that  the numerator is equal to 
   \begin{equation} \label{e:chatgpt}
   \frac4\kappa  \frac{\partial K(\kappa)}{\partial \kappa}   . 
    \end{equation}   
     Using a formula of \citep{whittaker2020course} 
      $$
\frac{\partial K(\kappa)}{\partial \kappa}
=  \frac{E(\kappa) }{\kappa (1-\kappa^2)} - \frac{K(\kappa)}{\kappa} .
      $$
      As for the denominator, by definition it is equal to $ \Pi(\kappa^2,\kappa)$.
Eventually   
 \[
 \mathds C(\kappa) =  \frac{ 4
  E(\kappa) -K(\kappa)(1-\kappa^2)
 }{
  \kappa^2 \Pi(\kappa^2,\kappa)
}.
 \]
 On the other hand 
  \[
   \frac{\E(\cC) }{\E(|\lev|)} = \EPalm( \cos(2\Theta) )
   = \frac{ 
   \int_{-\pi}^\pi \cos(2(\theta+\theta_0)) \big(1-\kappa^2\cos^2\!\theta\big)^{-\frac{3}{2}}\mathrm d\theta}
   {\int_{-\pi}^\pi \big(1-\kappa^2\cos^2\!\theta\big)^{-\frac{3}{2}}\mathrm d\theta}.
  \]
 Using $\cos(2(\theta +\theta_0)) = \cos(2\theta_0) (2\cos^2(\theta)-1) - \sin(2\theta_0) \sin(2\theta) $ and remarking that the term in $\sin(2\theta)$  give a null contribution, we get 
 \[\EPalm \cos(2\Theta)  = \cos(2\theta_0) (2  \mathds C(\kappa) -1) \]
Using $\sin(2(\theta +\theta_0)) =  \cos(2\theta_0) \sin(2\theta) +  \sin(2\theta_0)   (2\cos^2(\theta)-1) $, we obtain in the same fashion:
 \[\EPalm \sin(2\Theta)  = \sin(2\theta_0) (2  \mathds C(\kappa) -1) \]
 As a consequence the function $g(\cdot)$  defined by \eqref{e:cabana}
 takes the value 
 \[
               g(\kappa) = (2  \mathds C(\kappa) -1) = 
                \frac{ 8
  E(\kappa) -2K(\kappa)(1-\kappa^2)- \kappa^2 \Pi(\kappa^2,\kappa)
 }{
  \kappa^2 \Pi(\kappa^2,\kappa)
}
\]        
Using an identification method   $\theta_0$ and $\kappa$ are estimated by \eqref{e:theta:hat}-\eqref{e:k:hat}.

It remains to give the proof of Lemma \ref{l:g} i.e. that $g(\cdot) $ is invertible.

Set
$C_i:= \kappa_i^{-3/2}$, $ i=1,2$  and, using homogeneity of the problem,  suppose that $C_1 = 1+a, C_2 = 1-a$. We have 
\[
 \mathds C(\kappa)
    = \frac
                {\int_{-\pi}^\pi \cos^2\!\theta\big(1+ a \cos(2\theta)\big)^{-\frac{3}{2}}\mathrm d\theta}
                {\int_{-\pi}^\pi \big(1+ a \cos(2\theta)\big)^{-\frac{3}{2}}\mathrm d\theta}\\
    =   \frac
                {\int_{-\pi}^\pi  \frac{1 + \cos(\phi)}{2} \big(1+ a \cos(\phi)\big)^{-\frac{3}{2}}\mathrm d\phi}
                {\int_{-\pi}^\pi \big(1+ a \cos(\phi)\big)^{-\frac{3}{2}}\mathrm d\phi} =: \frac{U}{V},   
\]
setting $\phi= 2\theta$.
By differentiation with respect to  $a$, the numerator of the derivative is given by 
$
U'V-UV'$ 
With 
\begin{align*}
U' &=  -3/2   \int_{-\pi}^\pi  \frac{1 + \cos(\phi)}{2}    \cos(\phi) \big(1+ a \cos(\phi)\big)^{-\frac{5}{2}}\mathrm d\phi\\
V' &=  -3/2   \int_{-\pi}^\pi     \cos(\phi') \big(1+ a \cos(\phi')\big)^{-\frac{5}{2}}\mathrm d\phi'
\end{align*}
So that 
\begin{align*}
   &U'V-UV' =
   \\
   &-\frac32
\Big( \int_{-\pi}^\pi \int_{-\pi}^\pi  (1+a\cos\phi)  (1+ a\cos\phi')  \big( \cos^2\phi -\cos\phi \cos\phi'   \big)  \big(1+ a \cos\phi\big)^{-\frac{5}{2}}\big(1+ a \cos\phi'\big)^{-\frac{5}{2}}\mathrm d\phi\mathrm d\phi' \Big) 
\end{align*}
which is non-positive by the Cauchy-Schwarz inequality.  

\section{Contour is consistent}

\subsection{Covariances give the anisotropy parameters and directions}
Based on the derivations of Theorem~\ref{thm:Palm_normal}, we know that the covariance of the normalized gradient shares the same structural information as the covariance of the full gradient. The fundamental reason there is no loss of information regarding anisotropy is that \textbf{both matrices share the exact same eigenvectors}, and their eigenvalues are linked by a strictly invertible mapping. Recall $\Lambda$ denote the covariance of the full gradient and let $\Sigma_{\PnPalm}$ denote the covariance of the normalized gradient (under the normalized Palm distribution). One can prove (see Proof of Theorem~\ref{thm:Palm_normal}):

\begin{subequations}
\begin{itemize}
    \item \textbf{Full Gradient:} 
    \begin{equation}
        \Lambda = {\bf P}^\top \text{Diag}(\vv\kappa^2) {\bf P}\propto \int_{\mathbb{S}^{d-1}} \zeta \zeta^\top \left(\zeta^\top \Lambda^{-1} \zeta\right)^{-\frac{d+2}{2}} d\eta(\zeta)
    \end{equation}
    \item \textbf{Normalized Gradient along a level set:} 
    \begin{equation}
        \Sigma_{\PnPalm} = {\bf P}^\top \text{Diag}(\mathcal{Z}(\vec{\kappa})) {\bf P}\propto \int_{\mathbb{S}^{d-1}} \zeta \zeta^\top \left(\zeta^\top \Lambda^{-1} \zeta\right)^{-\frac{d+1}{2}} d\eta(\zeta)
    \end{equation}
\end{itemize}
\end{subequations}

Because the rotation matrix ${\bf P}$ (which defines the directions of anisotropy) is identical in both equations, observing the normalized gradient provides the exact same directional information as the full gradient. The eigenvalues $\mathcal{Z}(\vec{\kappa})$ differ from $\vec{\kappa}^2$, but as proven in Theorem~\ref{thm:inverse_palm}, the mapping from~$\mathcal{Z}$ back to~$\vec{\kappa}$ is unique and invertible via a convex optimization problem. Comparing the two final integrals reveals why the normalized gradient retains the anisotropy information:

\begin{itemize}
    \item \textbf{Similarity:} The only difference between the full gradient covariance and the normalized gradient covariance is the power of the scaling term inside the integral:
    \begin{itemize}
        \item Full Gradient weight: $\left(\zeta^\top \Lambda^{-1} \zeta\right)^{-\frac{d+2}{2}}$
        \item Normalized Gradient weight: $\left(\zeta^\top \Lambda^{-1} \zeta\right)^{-\frac{d+1}{2}}$
    \end{itemize}
    The term $\zeta^\top \Lambda^{-1} \zeta$ contains the ‘‘shape'' of the anisotropy (the ellipse defined by $\Lambda$). Both integrals are heavily weighted by this shape factor.
    
    \item \textbf{Identical Principal Directions:} Because the weighting functions $(\zeta^\top \Lambda^{-1} \zeta)^{-p}$ are symmetric with respect to the axes of the ellipsoid defined by $\Lambda$, the resulting integral matrices \textit{must} be diagonalized by the same eigenbasis ${\bf P}$.
    
    \item \textbf{Recoverability:} While the eigenvalues $\mathcal{Z}(\vec{\kappa})$ are numerically different from $\vec{\kappa}$ due to the different exponent in the integral, we prove that the map $\vec{\kappa} \to \mathcal{Z}(\vec{\kappa})$ is a diffeomorphism (Theorem~\ref{thm:inverse_palm}). There is a unique correspondence between the ‘‘shape'' observed in the normalized gradient covariance and the ‘‘shape'' of the full gradient covariance.
\end{itemize}

\subsection{Definition of the Sample Statistic}
We identify that the normalized Palm distribution $\PnPalm$ is defined via expectations, which are unknown in practice. To construct a valid statistic, we must remove the expectation and consider the \textbf{sample version} computed from the single observed realization of the level set $\mathcal{L}(\mathcal{T})$.

We define the sample statistic $\widehat{\Sigma}_{{\rm nPalm}}$ as the empirical covariance of the normalized gradients (normal vectors) along the observed level set:
\begin{equation}
\notag
    \widehat{\Sigma}_{{\rm nPalm}} := \frac{1}{|\mathcal{L}(\mathcal{T})|} \int_{\mathcal{L}(\mathcal{T})} N(t) N(t)^\top d\mathcal{H}^{d-1}(t)
\end{equation}
where $N(t) = \frac{X'(t)}{||X'(t)||}$ is the unit normal vector at position $t$ on the level set. This quantity is fully observable from the data (the image).

\subsection{Proof of consistency}
The proof of consistency follows mutatis mutandis the $2$ dimensional case in \citep[Th. 4.1]{berzin}. We sketch it here. 
Under assumption \eqref{eq:assumption} and $r(t)\to0$ as $\|t\|\to\infty$, \citep[Th.6.5.4]{adler} implies that the stationary Gaussian field $X(\cdot)$ is ergodic. By \citep[Th.6.5.2]{adler} this property is inherited by the field $\xi(t,s):=\cC_{ij}([t,t+1]\times[s,s+1])$ which integrated on $\cT_n$ yields $\cC_{ij}(\cT_n)$ up to negligible terms, see \citep[L.4.2]{berzin}. The ergodic Theorem \citep[Th. 6.5.1]{adler} implies that the spatial average (the sample statistic $\widehat{\Sigma}_{{\rm nPalm}}$) converges to the ensemble average (the normalized Palm expectation) as the observation window grows ($\mathcal{T}_n \nearrow \mathbb{R}^d$ as $n\to\infty$):
\begin{equation}
\notag
    \widehat{\Sigma}_{{\rm nPalm}} \xrightarrow[n \to \infty]{a.s.} \mathbb{E}_{{\rm nPalm}}[N(t)N(t)^\top] =: \Sigma_{{\PnPalm}}(\vec{\kappa})
\end{equation}
This establishes that the sample statistic is a consistent estimator of the theoretical moment matrix~$\Sigma_{\PnPalm}$.

We know that the mapping from the model parameters to the expected statistics is a \textbf{diffeomorphism} (a smooth, invertible bijection). 
Let $\Psi$ be the function mapping anisotropy parameters to the Palm covariance:
\begin{subequations}
\begin{equation}
    \Psi: \vec{\kappa} \mapsto \Sigma_{\PnPalm}(\vec{\kappa})
\end{equation}
Theorem~\ref{thm:inverse_palm} proves that $\Psi$ is strictly invertible. Therefore, the inverse function $\Psi^{-1}$ exists and is unique:
\begin{equation}
    \vec{\kappa} = \Psi^{-1}(\Sigma_{\PnPalm}) = \lim_n \Psi^{-1}(\widehat{\Sigma}_{{\rm nPalm}})
\end{equation}
by the continuous mapping theorem. Hence, the empirical covariance $\Psi^{-1}(\widehat{\Sigma}_{{\rm nPalm}})$ is a consistent statistic for the anisotropy $\vec{\kappa}$.
\end{subequations}

\section{Combining contour and LKC}  \label{s:combi}
The aim of  this subsection is to propose a synthesis between the estimator of \citet{cabana1987affine} and \citet{bierme}. 
 
If we except the Wschebor method, which is a variant,  the Cabaña  method is the only one  that gives  an estimator of $\theta_0$. So we focus on the  measure of anisotropy and in particular the estimation of the parameter $\kappa$.
From  \eqref{e:hatp} and \eqref{e:hatec} we have  that 
   \begin{align*}
   \frac{\hat {GC}} {(\widehat P)^2}   &\simeq R(\kappa),
    \\
   \cF
   &\simeq g(\kappa),
  \end{align*}
where $\cF$ is defined in \eqref{e:cf} and $\simeq$ means approximative equality due to the statistical variability.
This can be combined  by least square  regression, minimizing in $\kappa$ 
\[
 \alpha_1\Big| \frac{ \hat {GC}} {\widehat P^2}  - R(\kappa)\Big|^2 + \alpha_2
\Big|\cF-g(\kappa)\Big|^2, \quad 0<\alpha_1, \alpha_2 <1, \ \alpha_1 + \alpha_2  =1.
\]
The obtained estimator will be denoted $\widehat\kappa_c$
The determination  of optimal weights  $ \alpha_1, \alpha_2$ is an open problem and            
the  proposed solution is likely suboptimal. Of course   there are  many other ways of combining information. But we stress that, anyway,  it is better than using only a part of the information.

\section{A note on the Almond curve}
\label{sec:almond}
The paper \citet{bierme} introduces a nuisance-parameter free representation (the ``almond'' curve)
\[
    \mathcal A_{\kappa} := \Big\{ (x,y)\in(0,1] : y^2 + \tfrac{64}{\pi^3} R(\kappa)^2 x^2 \log x = 0 \Big\},
\]
with coordinates $(\tilde x, \tilde y)$ in our notation
\begin{align*}
    	\tilde x &:= \exp\big(-(\mu-u)^2/(2\sigma^2)\big), \\
    	\tilde y &:= R(\kappa) \frac{\pi^2}{\sqrt{2\pi}} \frac{u-\mu}{\sigma} \exp\big(-(\mu-u)^2/(2\sigma^2)\big).
\end{align*}
While graphically appealing, estimation of $\kappa$ still relies on inverting $R$ from $(\widehat{\mathrm{GC}}, \widehat P)$; plotting $(\tilde x, \tilde y)$ does not by itself yield a consistent estimator nor a direct test, but can help to visualize departure from isotropy: the almond curves $\mathcal{A}_\kappa$ are inside the curve $\mathcal{A}_0$ ($\kappa=0$ and $R(0)=4/\pi^2$, effectively isotropic), see~\citep{bierme} for further details.

\newpage

\section{Table of notation} \label{sec:table_of_notation}
\begin{table}[h!]
\centering
\caption{Main notation used in the paper.}
\label{tab:notations}
\begin{tabular}{ll}
\hline
\textbf{Symbol} & \textbf{Meaning} \\
\hline
$\R,\N,\Z,\C$ & Real, natural, integer, complex numbers. \\
$d,\ \I_d$ & Space dimension; identity matrix of size $d$. \\
$\cT,\ \cT_n=(-n,n)^d$ & Observation window; growing domains. \\
$|\cT|$ & Lebesgue measure (volume) of $\cT$. \\
$\cH^k$ & Hausdorff measure of dimension $k$. \\
$X(\cdot)$ & Stationary Gaussian random field on $\cT$. \\
$Y(t)=f(X(t))$ & Transformed Gaussian field ($f$ monotone $\mathcal C^2$). \\
$\mu,\ \sigma^2$ & Mean and variance of $X(0)$. \\
$u$ & Threshold (unknown level). \\
$(u,\mu,\sigma)$ & Nuisance parameters. \\
$\cE(\cT)=\{t\in\cT:X(t)>u\}$ & Excursion set (binary observation). \\
$\lev(\cT)=\{t\in\cT:X(t)=u\}$ & Level set; $|\lev(\cT)|=\cH^{d-1}(\lev(\cT))$. \\
$X'(t)$ & Gradient of $X$ at $t$. \\
$X''(t)$ & Hessian of $X$ at $t$. \\
$\Lambda=\Var(X'(0))$ & Gradient covariance matrix. \\
${\bm P},\ {\bm D}=\diag(\kappa_1^2,\ldots,\kappa_d^2)$ & Eigenbasis and eigenvalues of $\Lambda$. \\
$\k=(\kappa_1,\ldots,\kappa_d)$ & Model eigenvalues ($\kappa_1\ge\cdots\ge\kappa_d>0$). \\
$\Delta_+=\{\kappa\in\R_+^d:\sum_i\kappa_i^2=1\}$ & Normalized eigenvalue simplex. \\
%$r=\kappa_{\max}/\kappa_{\min}$ & Anisotropy ratio (spread). \\
$d=2:\ {\bm P}_{\theta_0}$ & Rotation matrix by angle $\theta_0$. \\
$d=2:\ \kappa=\sqrt{1-\kappa_2^2/\kappa_1^2}$ & 2D anisotropy parameter in $[0,1)$. \\
$\Theta(t)$ & Gradient angle at $t$ (2D). \\
$N(t)=X'(t)/\|X'(t)\|$ & Normalized gradient (unit normal on $\mathds S^{d-1}$). \\
%$\hat\cC,\hat\cS$ & Normalized contour cosine/sine (empirical). \\
$\cC(\cT),\ \cS(\cT)$ & Caba\~na contour integrals of $\cos(2\Theta)$, $\sin(2\Theta)$. \\
$\cC_{ij}(\cT)=\int_{\lev(\cT)} N_iN_j\,\d\cH^{d-1}$ & Higher-dim. contour covariance entries. \\
$C_j(\cT)$ & $j$-th Lipschitz–Killing curvature of $\lev(\cT)$. \\
$g(\kappa)$ & Elliptic link: $\E(\cC)/\E(|\lev|)=\cos(2\theta_0)g(\kappa)$. \\
%$C_\kappa, C_{\Lambda}$ & Normalizing constants in Palm densities. \\
%$C_{\kappa_1,\kappa_2}$ & Constant in 2D Palm gradient density. \\
%$\mathrm{GC}(\cE(\cT))$ & Gaussian curvature (2D; Euler characteristic proxy). \\
$R(\kappa)$ & LKC-based ratio $=\sqrt{1-\kappa^2}/E(\kappa)^2$. \\
$\widehat\kappa_{\rm C}$ & Contour estimator via $g^{-1}$. \\
%$\widehat\theta_0=\tfrac12\operatorname{atan2}(\cS,\cC)$ & Anisotropy angle estimator (2D). \\
$\widehat\kappa_{\rm LKC}$ & LKC estimator via $R^{-1}$ (Eq.~\eqref{e:kappa:LKC}). \\
$\widehat\kappa_{\rm grad}$ & Oracle estimator from gradient covariance. \\
$\PPalm,\EPalm$ & Palm law / expectation of $X'(t)$ on $\lev$. \\
$\PnPalm,\EnPalm$ & Palm law / expectation of $N(t)$ (normalized gradient). \\
$\eta$ & Uniform probability measure on $\mathds S^{d-1}$. \\
%$p_{\mathrm{Palm}}$ & Palm density of $X'(t)$ (Lebesgue). \\
$f_\Theta$ & Palm density of angle $\Theta$. \\
$\mathcal Z(\kappa)$ & Palm normalized-gradient eigenvalues vector. \\
$\mathcal Z_\ell(\kappa)$ & $\ell$-th component of $\mathcal Z(\kappa)$. \\
%$\Pi(\kappa)$ & Scaled inverse eigenvalue map $\pi=\Pi(\kappa)$. \\
%$\pi_i$ & Components of $\pi$; used to recover $\kappa$ (Eq.~\eqref{eq:inversion_formula_pi}). \\
%$\Xi$ & Convex function in inversion program (Legendre dual setup). \\
%$\bar\pi$ & Unique solution of convex program (equals $\Pi(\kappa)$). \\
$Q_{n,N}$ & Chi-square test statistic (Eq.~\eqref{e:qn}). \\
$\hat V^2$ & Block variance estimator (Eq.~\eqref{e:V}). \\
$\alpha=F_{\chi^2(2)}(Q_{n,N})$ & P-value (model-agnostic isotropy test). \\
$I_n$ & Scaled contour integral (Eq.~\eqref{eq:non_asymptotic_Palm}). \\
$J_n$ & Centered, rescaled version for CLT. \\
$K(\kappa),E(\kappa),\Pi(n,\kappa)$ & Complete elliptic integrals (App.~\ref{s:palm:calcul}). \\
$\phi,\Phi$ & Standard Gaussian pdf and cdf. \\
%$\UN$ & Indicator function. \\
%$\Var,\Cov$ & Variance and covariance operators. \\
%$v,V^2$ & Asymptotic variance (generic / specific). \\
%$\hat w= -\Phi^{-1}(\cH^2(\cE(\cT))/|\cT|)$ & Estimator of $(u-\mu)/\sigma$. \\
%$\widehat P$ & Estimator of perimeter factor (Eq.~\eqref{e:hatp}). \\
%$\widehat{\rm GC}$ & Estimator of $\kappa_1\kappa_2/\sigma^2$ (Eq.~\eqref{e:hatec}). \\
%$\widehat\kappa_{\mathrm{LKC}}$ & LKC inversion estimator . \\
%$\cF=\sqrt{\cC^2+\cS^2}/|\lev|$ & Magnitude used in $\widehat\kappa_{\rm C}$. \\
%$\Omega_{a,b}=(a,b)^d$ & Box constraint for gradient descent inversion. \\
%$Q=\beta/\alpha$ & Condition number in convergence rate. \\
$\chi^2(2)$ & Chi-square law with 2 degrees of freedom. \\
\hline
\end{tabular}
\end{table}

\end{document}